\crefname{section}{section}{sections}
\providecommand{\customgenericname}{}
\newcommand{\newcustomtheorem}[2]{%
	\newenvironment{#1}[1]
	{%
		\renewcommand\customgenericname{#2}%
		\renewcommand\theinnercustomgeneric{##1}%
		\innercustomgeneric
	}
	{\endinnercustomgeneric}
}
\newtheorem{lemma}{Lemma}
\newtheorem{claim}{Claim}
\newtheorem*{claim*}{Claim}
\newtheorem{theorem}{Theorem}
\newtheorem{corollary}{Corollary}
\newtheorem{proposition}{Proposition}
\newtheorem{definition}{Definition}
\theoremstyle{definition}
\newtheorem*{example*}{Example}
\newtheorem{mainexample}{Example}
\newcommand{\examplepart}[1]{\noindent\textbf{Example~\ref{pd}#1.} }
\newcommand{\supp}{\textnormal{supp }}
\newcommand{\eps}{\varepsilon}
\newcommand{\rmd}{\mathrm{d}}
\newcommand{\bbE}{\mathbb{E}}
\begin{document}
	
	\begin{titlepage}
		\title{{\bf \Large{Coarse Information Design}}\thanks{
				We are especially grateful to Gregorio Curello for his continuous support and insightful discussions. We also thank Sarah Auster, Yunus Aybas, Dirk Bergemann, Florian Brandl, Francesc Dilm\'{e} , Michael Greinecker, Li Hao, \'{A}ngel Hernando-Veciana, Ilwoo Hwang, Mathijs Janssen, Andreas Kleiner, Jin Li, Benny Moldovanu, Paula Onuchic, Ali Shourideh, Aloysius Siow, Peter S{\o}rensen, Junze Sun, Dezs\H{o} Szalay, Lina Uhe, Kun Zhang, Chen Zhao, and participants at Bonn, CUHK, NTU, NUS, UC3M, Vienna, Washington, the 2023 Stony Brook International Conference on Game Theory, EWMES 2023, the 2024 Conference on Mechanism and Institution Design and VIEE for valuable suggestions and comments on this topic. Lyu acknowledges support from the Deutsche Forschungsgemeinschaft (DFG, German Research Foundation) under Germany's Excellence Strategy EXC 2126/1-390838866. Zhang acknowledges support from the Deutsche Forschungsgemeinschaft (DFG, German Research Foundation) under Germany's Excellence Strategy EXC 2047/1-390685813.	
			}
		}
		\vspace{2cm}
		\author{
			\begin{minipage}{0.3\textwidth}\centering  
				Qianjun Lyu
				\\ \centering  \small \it University of Bonn
			\end{minipage}  
			\begin{minipage}{0.3\textwidth}\centering 
				Wing Suen
				\\ \centering  \small \it University of Hong Kong
			\end{minipage}  
			\begin{minipage}{0.3\textwidth}\centering 
				Yimeng Zhang
				\\ \centering  \small \it University of	Bonn
			\end{minipage}  
		}
		
		\date{\vspace{0.5cm} \today \\
			\vspace{0.8cm}}
	\maketitle
	\thispagestyle{empty}
	\vspace{-0.8cm}
	
	\begin{quote}

		\noindent \textit{Abstract:}
		We study an information design problem with continuous state and discrete signal space. Under convex and S-shaped value functions, the optimal information structure is interval-partitional and exhibits a dual expectations property: each induced signal is the conditional mean (taken under the prior density) of each interval; and each interval cutoff is the barycenter (taken under the value function curvature) of the interval formed by neighboring signals.  This property enables an examination into which part of the state space is more finely partitioned. The analysis can be extended to general value functions and	adapted to study coarse mechanism design.
		
		\noindent 
		
		\vspace{0.5cm}
		
		\noindent \textit{Keywords}: 
		dual expectations, scrutiny, S-shaped value function, coarse nonlinear pricing
		\vspace{0.5cm}
		
		\noindent \textit{JEL Classification}: D81, D82, D83 
		
	\end{quote}

\end{titlepage}

\newpage 

\section{Introduction} 
\label{section:Introduction}

Coarse information is ubiquitous. Teachers evaluate students by letter grades or on a pass/fail basis; online platforms often convey product reviews through coarse rating systems; and regulatory policies disclose efficiency and safety level of different products via discrete categories. 
These coarse information mechanisms can emerge due to limited cognitive and memory capacities, imperfect communication channels, technological constraints on the measurement instruments, or simply for the sake of convenience.  
In pain medicine, the intensity of pain naturally falls into a continuum, but it would be almost impossible for patients to fully describe their subjective feeling or convert it into a real number on a continuous scale.  Instead, doctors rely on instruments such as the Numerical Rating Scale, from 0 to 10, based on patients' self-reported pain level to reach a diagnosis.  In an organizational context, individuals at different levels of the hierarchy may not have the time or the expertise to digest complicated information from each other if the communication involves exhaustive details.  In practice, they typically adopt language protocols that are less precise but more comprehensible to facilitate decision-making.  
Furthermore, because different individuals in the organization are making different decisions, the optimal way to coarsen information is tailored to different needs. An executive summary of a research report written for the head of engineering, for example, should be quite different from a summary of the same report written for the CEO.
In this paper, we take the coarse nature of information structure as given and study how to design information optimally subject to such a constraint.

Formally, we study a class of information design problems where the uncertain state $\theta$ is continuously distributed on $[0,1]$ and is payoff relevant only through its expectation.
The (information) designer---or the sender---commits to a Blackwell experiment $(\pi,\Sigma)$, where $\Sigma$ is the signal space and $\pi(\sigma|\theta)$ is the probability of obtaining signal realization $\sigma \in \Sigma$ conditional on state $\theta$. The receiver updates her belief based on the realized signal and then chooses an action optimally. 
Our model allows sender and receiver to the be same person. In that case, the relevant value function is naturally convex, and an unconstrained information design problem would be trivial as the optimal information structure is fully informative.
The constraint we introduce in this paper is that $\Sigma$ is restricted to be a finite set with a cardinality of at most $N$.  Given that the state is continuous and the signals are finite, 
a fully informative experiment is infeasible.  
In other words, a coarse experiment necessarily involves pooling across different states. 
How to allocate the limited ``signal resources'' then becomes a relevant economic question---which
parts of the state space should receive \emph{closer scrutiny} in the experiment relative to other parts? 

In this introduction, unless otherwise specified, 
we illustrate our findings under scenarios in which the sender and receiver's preferences are sufficiently aligned such that more information is strictly beneficial (i.e., the sender's value function $u$ as a function of the induced posterior mean is convex). The insights extend to a broader class of value functions.

Consider an \emph{interval-partitional} information structure, where an experiment divides the state space into a finite number of subintervals and reveals which subinterval the state belongs to.
We say that a subinterval of the state space receives ``closer scrutiny'' than others if the width of this subinterval is smaller than that of others. In Figure \ref{AvsB}, we represent an experiment by the set of cutoffs that defines the partition. Under experiment $A$, the state space in the neighborhood of 0.7 receives the closest scrutiny.  In contrast, experiment $B$ explores the state space near 0.3 with the
closest scrutiny.  
We refer to the subinterval that receives the closest scrutiny as the ``center of scrutiny.''
In both experiments, the interval becomes progressively wider farther away from the center of scrutiny.

\bigskip
\begin{figure}[h]
	\centering
	\begin{subfigure}[b]{0.49\textwidth}
		\centering
		\includegraphics[width=\textwidth]{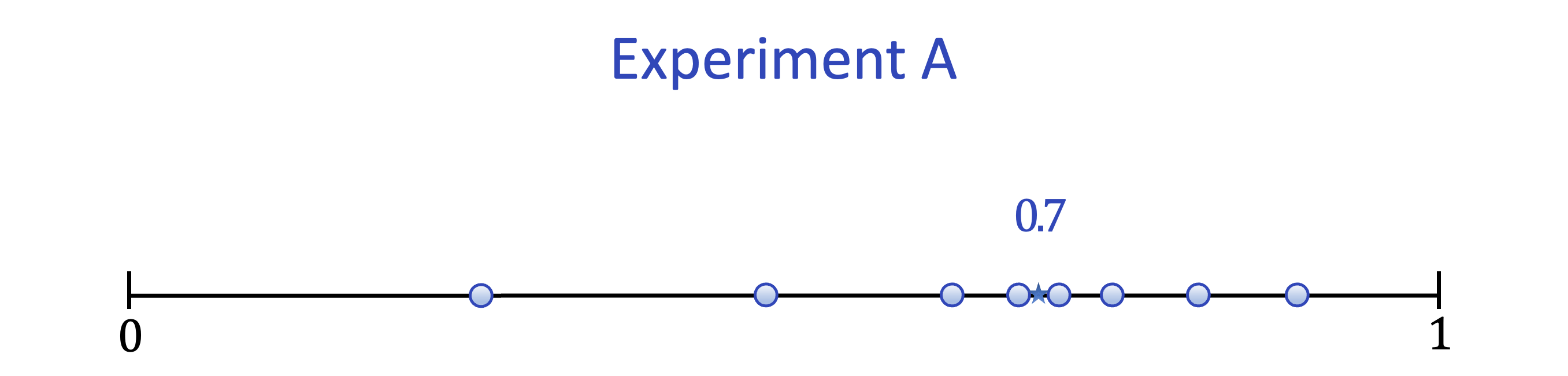}
		\label{}
	\end{subfigure}
	\hfil
	\begin{subfigure}[b]{0.49\textwidth}
		\centering
		\includegraphics[width=\textwidth]{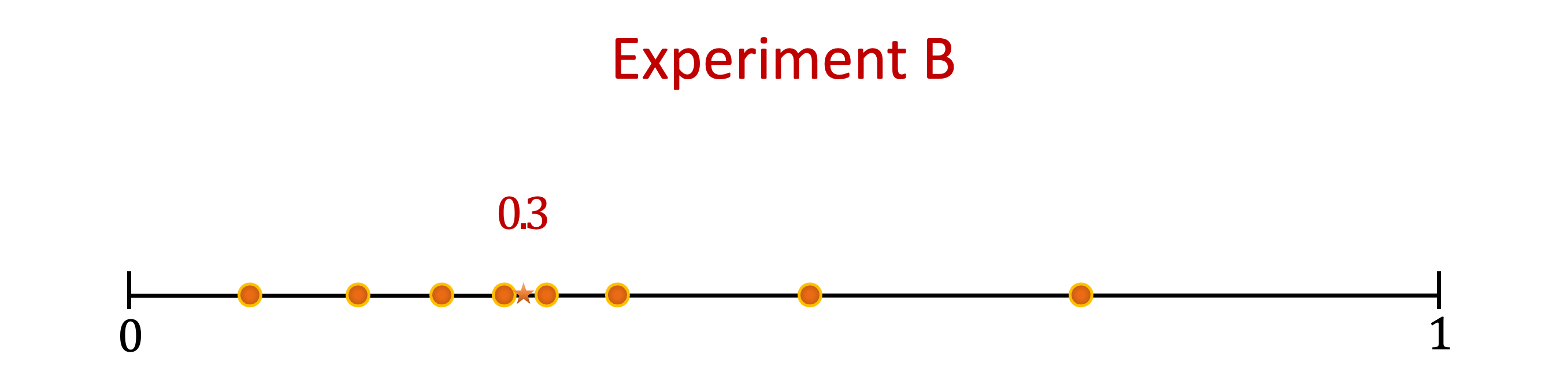}
		\label{}
	\end{subfigure}
	\caption{\small  Experiment $A$ gives close scrutiny to the state space near 0.7; experiment $B$ gives close scrutiny to the state space near 0.3.}
	\label{AvsB}
\end{figure}

Naturally, the designer should give closer scrutiny to states that are more likely to happen (according to the prior distribution), because having the receiver make informed decisions under those states carries a greater weight in his ex ante expected utility. This insight is developed  in other applications such as quantization theory \citep{GrayNeuhoff1988} or preference evolution \citep{Robson2001,Netzer2009}.
On the other hand, the designer should also give closer scrutiny to those states where local information is more valuable (i.e., where the value function has greater curvature, measured by the second derivative $u''$ of the value function). 
The latter concern has received less attention in the existing literature, let alone the connection between these two forces.

Our first main result characterizes the designer's optimal interval-partitional information structure and establishes a novel dual relation between the prior density $f$ and the curvature of the value function $u''$. 
For each interval-partitional experiment, represented by the cutoffs
$\{s_k\}_{k=0}^{N}$, the signal $x_k$ induced within the interval $(s_{k-1},s_k)$ equals the expected value of this interval, computed using the conditional density of the prior distribution, 
$f(\cdot)/(F(s_k)-F(s_{k-1}))$.  
Under the \textit{optimal} 
interval partition, each cutoff point $s_k$ equals the ``expected value'' of the interval formed by two adjacent signals  $(x_k , x_{k+1})$,  computed using the conditional 
``density,'' 
$u''(\cdot)/\left(u'\left(x_{k+1}\right)-u'\left(x_k\right)\right)$.  
We call such a property of the optimal experiment \textit{dual expectations}.

The dual expectations property
enables us to characterize the optimal experiment as the fixed point of a monotone equation system.
The second key finding highlights an additional feature of the optimal interval partition: In a \textit{logconcave} environment---where both the prior density  $f$ and the curvature $u''$ are 
logconcave---the widths of the subintervals are \emph{single-dipped}. Specifically, there is a center of scrutiny---a subinterval that receives the closest scrutiny.  As we move away from this center, the level of scrutiny diminishes for states located farther away (as illustrated in  Figure \ref{AvsB}). 
This
pattern of information structure is commonly observed in various mandatory energy-efficiency rating standards, such as Energy Performance Certificates for residential buildings in Austria, France, Germany, and Italy.\footnote{
See \url{https://www.stanz-landeck.gv.at/system/web/getDocument.ashx?ncd=1&ltc=1&fileid=1294684} (page 5), \url{https://www.ecologie.gouv.fr/sites/default/files/documents/Guide_pour_les_diagnostiqueurs_DPE.pdf} (page 29), \url{https://www.gesetze-im-internet.de/geg/anlage_10.html}, and \url{https://www.mimit.gov.it/images/stories/normativa/DM_Linee_guida_APE_allegato1.pdf} (Table 2).}

We leverage the dual expectations property to facilitate comparative statics analysis. When either the prior density $f$ or the curvature function $u''$ adopts a likelihood-ratio increase, 
all the interval cutoff points 
$\{s_k\}_{k=1}^{N-1}$ and the posterior means 
$\{x_k\}_{k=1}^{N}$ will shift to the right.  
Furthermore, if either $f$ or $u''$ becomes less variable according to the \emph{uniform conditional variability order} \citep{Whitt1985}, then both the interval cutoff points and the posterior means become more compressed, in the sense that there exists an $n^*$ such that $s_k$ shift to the right for all $k < n^*$ and $s_k$ shifts to the left for all $k> n^*$. 

With convex value function, our framework has a natural application to mechanism design where the agent's payoff is linear in 
her private information and the contract is constrained to be finite. In Section \ref{nonlinearpricing}, we illustrate how to transform a standard nonlinear pricing problem with a finite menu into a coarse information design problem. Most of our results are transferable to study the design and properties of the optimal finite menu. 

In Sections \ref{section:s-shaped} and \ref{generalsection}, we extend our analysis to S-shaped value functions and more general value functions to address scenarios where information could potentially hurt the sender
(i.e., $u''<0$ for some regions of the state space). With S-shaped value functions, the dual expectations characterization, the single-dipped property, as well as the comparative statics results, remain
true. In particular, the dual expectations property is modified with respect to a signed measure---the optimal cutoff point $s_k$ equals the \emph{barycenter} of the interval formed by two adjacent signals $(x_k,x_{k+1})$ under the signed measure $u'$. 
With more general value functions, we restrict our attention to the optimal interval-partitional information structure, and 
show that the dual expectations property remains valid, incorporating the potential usage of a bi-tangency condition when the barycenter is not well defined. 

We provide further discussions of our model in Section \ref{Discussion}.  Section \ref{section:piece-wiseaffine} discusses the scenario where there are only finite actions and the number of messages is smaller than the number of actions. 
Section \ref{section:Beyond} uses numerical examples beyond logconcave environments to demonstrate the robustness of our conclusions regarding the center of scrutiny. 
In Section \ref{section:Cheaptalk} we establish connections between our model and cheap talk. 

\textbf{Related literature.} 
Our paper contributes to the literature on information design with a continuous state space. We adopt the approach introduced by \cite{GentzkowKamenica2016b}, which represents an information structure by the integral of its cumulative distribution function.  \cite{Kolotilin2018} and \cite{DworczakMartini2019} use alternative approaches to studying this problem. \cite{KleinerMoldovanuStrack2021} and \cite{ArieliEtAl2023} show that the optimal unconstrained information structure exhibits a
bi-pooling property. 
\cite{CurelloSinander2023} study the comparative statics of 
mean-measurable persuasion problems and identify the conditions under which a sender with ``more convex'' value function will design a more informative signal structure. 
The bulk of the Bayesian persuasion literature starting from \cite{KamenicaGentzkow2011} and \cite{RayoSegal2010} is concerned about the strategic use of pooling: how the sender designs an experiment to ``concavify'' his value function by strategically pooling information to influence the receiver's action.  When the signal space is constrained to be finite, pooling becomes necessary even when the value function is convex.  Our paper examines how to effectively pool information in such environments. 

The constraint imposed by coarse information on the operation of markets is discussed by \cite{Wilson1989}, \cite{McAfee2002}, \cite{HoppeMoldovanuOzdenoren2010} and \cite{Kos2012}.  \cite{Dow1991} considers a sequential search problem for a decision maker whose memory is represented by a partition of the set of possible past prices. \cite{HarbaughRasmusen2018} and \cite{OstrovskySchwarz2010} analyze coarse grading from an information design perspective,	
while \cite{CremerGaricanoPrat2007} study coarse language code under a discrete environment where the loss function only depends on the number of states that are pooled together in one message.

Recently, the literature on information design has begun to explore the limitation of communication and information channels.
\cite{OnuchicRay2022} and \cite{Mensch2021} study information design subject to monotone information structures. 
\cite{AybasTurkel2021} examine an information design problem under a bounded signal space. They characterize the highest achievable sender payoff and analyze how it changes with the cardinality of the signal space. Their paper explores finite state and finite action space, and allows the sender's payoff to depend arbitrarily on beliefs. In contrast, we focus specifically on mean-measurable persuasion problems and provide a general characterization for the optimal information structure.

Our paper is related to \cite{HopenhaynSaeedi2024}, 
which studies the optimal coarse ratings system that partitions a continuum of sellers with different qualities into a finite number of quality groups in a competitive environment.  They characterize the social planner's optimal rating scheme and 
compare the rating schemes under different shapes of supply functions.
\cite{Tian2022} uses ``cell functions'' (a mapping from  subintervals to payoffs) as primitives to study the optimal interval partition problem. 
He focuses on submodular cell functions, and shows that the optimal interval cutoffs shift up when the prior distribution shifts up. 
We use interim value functions as primitives and characterize the optimal information structure for general value functions. 
Moreover, the dual relation between value function curvature and prior density allows us to develop comparative statics with respect to changes
in payoff functions.

The coarse information design problem addressed in this paper resembles quantization in information theory \citep{GrayNeuhoff1988, MeaseNair2006}. These two classes of problems are not nested. The main difference is that quantization algorithms typically maximize the expected ``similarity'' within a group based on a distance metric, e.g., minimizing the expected distance between the realized values and the centroid within a cluster, which in general, cannot be transformed into a well-defined interim value function with a domain of posterior means. One exception is when the loss function is quadratic, in which case the quantization problem becomes equivalent to our problem with uniform value function curvature.

\section{Model}
\label{section:Model}

Consider an information design model where the payoff-relevant state  $\theta$ is drawn from a prior distribution $F$ on $[0,1]$. 
We assume that $F$ admits a continuous density function $f$ with full support. A designer commits to an information structure: a signal space $\Sigma$ and a mapping $\pi: [0,1] \to \Delta (\Sigma)$ from the state to a distribution over signals, which induces a random posterior. We focus on models where the designer's interim payoff depends on the induced posterior belief only through the posterior mean,\footnote{
	As \cite{DworczakMartini2019} point out, the posterior-mean dependent setup is satisfied when the receiver's optimal action only depends on the expected state and the sender's preference over action is linear on the state. Specifically, the receiver's utility function can be expressed as $y_1(a)+y_2(a) \theta+y_3(\theta)$ and the designer's utility function can be represented by $z_1(a)+z_2(a) \theta$, where $a$ is an element from a compact set $A$ and $y_1,y_2,y_3,z_1,z_2$ are upper semi-continuous functions. \cite{Ivanov2021} observes that if the designer's utility function includes a third term dependent solely on the state $z_3(\theta)$, the optimal experiment remains unchanged.
}   
and denote the interim value function by $u: [0,1] \to \mathbb{R}$.
Throughout the paper, we maintain the assumption that $u$ is twice continuously differentiable and 
$u''$ is not equal to zero almost everywhere.
In Section \ref{section:piece-wiseaffine}, we consider value functions that are piecewise affine, a situation that often arises when the underlying action space is discrete. 
Since the only relevant information is the posterior mean, it is without loss of generality to assume that the realized signal is the posterior mean itself.  From now on, we will primarily work with the induced distribution of posterior means, denoted by $G$. 

We follow \cite{GentzkowKamenica2016b} by representing an information structure as the integral of the induced distribution function of posterior means. Let $F_0$ be the degenerate distribution that puts probability mass one on the prior mean of $F$. For any distribution $G$ on $[0,1]$, define $I_G(x):=\int_0^x G(t)\,\rmd t$ and call it the \emph{integral distribution} of $G$.
By \citeauthor{Strassen1965}'s (\citeyear{Strassen1965}) theorem, $G$ can be induced by some signal structure if and only if $I_{F_0}\le I_G\le I_F$; namely, $G$ must be a mean-preserving contraction of $F$.

The main ingredient of our model is the introduction of a capacity constraint on the signal space $\Sigma$. In particular, we require $\Sigma$ to contain no more than $N$ elements. Consequently, the distribution $G$ of posterior means can only have a finite support.
We can then write the designer's optimization program as:
\begin{align}
	\label{primalprogram}
	&\max_{G \in \triangle([0,1])} \, \int_{0}^1 u(x) \, \rmd G(x)\\
	&\quad\text{s.t.} \qquad I_{F_0}\le I_G\le I_F,  \tag{MPC} \label{MPCconstraint}\\
	&\quad\phantom{\text{s.t.}} \qquad \  |\supp (G)|\le N. \tag{D} \label{Dconstraint}
\end{align} 

A distribution with finite support has an integral distribution which is increasing, convex and piecewise-affine, with kinks at every element in the support of the distribution. Let 
$\text{ICPL}$ denote the set of such integral distribution functions defined on $[0,1]$ that satisfy the mean-preserving contraction constraint (\ref{MPCconstraint}). 
For every $I \in \text{ICPL}$, define the set of kink points of $I$ by $\mathcal{K}_I := \{ x\in (0,1) : I'(x^-) \neq I'(x^+) \}$.

We now transform the objective function (\ref{primalprogram}) 
into an explicit integral of an ICPL function:
\begin{align*}
	\int_0^1 u(x)\,\rmd G(x) 
	&= u(1)-u'(1)I_F(1) + \int_0^1 u''(x)I_G(x) \, \rmd x
\end{align*} 
The first two terms are constants. Hence, the original problem can be rewritten as:
\begin{align}
	&\max_{I_G \in \text{ICPL}}\, \int_{0}^1 u''(x)I_G(x) \, \rmd x \label{auxiliaryprogram}\\
	&\quad\text{s.t.} \qquad  \left|\mathcal{K}_{I_G}\right|\le N. \tag{D} 
\end{align} 
The objective function (\ref{auxiliaryprogram}) represents the ``signed weighted'' area under the integral distribution function, where the signed weights are given by the \emph{curvature} of the value function, $u''$.  Intuitively, the level of an integral distribution $I_G$ measures how informative the corresponding information structure is locally. A higher $I_G$ is closer to the full-information integral distribution
$I_F$. 
It yields a higher payoff when providing local information is very rewarding---i.e., when associated with higher positive $u''$.  Similarly, a lower $I_G$ is closer to the 
null-information integral distribution
$I_{F_0}$, and is more preferable to the designer if $u''$ is negative.
Obviously, if $u$ is convex within some region, a locally upward shift of $I_G$ increases the designer's expected payoff, provided that such a shift does not exceed $I_F$. Moreover, a clockwise rotation of a certain segment of $I_G$ at the point when $u$ switches from convex to concave is beneficial, provided that $I_G$ remains in the ICPL class.

Throughout the paper, we repeatedly visit the following class of applications to illustrate the results. 

\begin{mainexample}[Purchase Decision]\label{pd}
	The state $\theta \in [0,1]$ is the value of a good, and the price of the good is $p$.  The decision maker chooses to buy the good ($a=1$) or not ($a=0$).  She first observes a signal from some experiment to learn about $\theta$. After the signal is
observed but before the purchase decision is made, a cost shock $\eta$ is realized, where $\eta$ is distributed with density $h$ on $[\underline{\eta},\overline\eta]$ with $\underline{\eta}\le -p$, which can be interpreted as either the actual cost of consuming the good or the opportunity cost of outside option. When her posterior expectation of the state is $x$, she buys the good if and only if $x-p-\eta \ge 0$.  Her interim utility function is her expected consumer surplus, $u(x)=\int^{x-p}_{\underline\eta} (x-p-\eta)h(\eta)\,\rmd \eta$, which is a convex function with curvature $u''(x)=h(x-p)$. On the other hand, the seller wants to maximize the probability of sale. Hence, his interim payoff is the expected revenue $\hat u(x)=p H(x-p)$ with curvature $\hat u''(x)=p h'(x-p)$. 	The example encapsulates two sets of models: when the \textit{buyer} designs the information structure, the framework corresponds to the model where a single agent designs a coarse experiment to make a better-informed decision in the future, pioneered by \cite{Dow1991}; when the \textit{seller} designs the information structure, it corresponds to 
a problem of persuading a privately informed receiver \citep{KolotilinEtAl2017}. 	The former case features convex value functions and is discussed in \Cref{section:Convex}, while the latter case features S-shaped value functions when the cost shock follows a 
unimodal density and is discussed in \Cref{section:s-shaped}.
\end{mainexample}

A natural class of coarse information structures is
the set of interval-partitional information structures. That is, we partition the state space $[0,1]$ into $N$ subintervals, each of which sends a distinct signal. Interval partitions are widely observed in contexts such as certification or grading schemes, where signals naturally correspond to intervals over an underlying quality or performance scale.  They are
easier to implement than general signal structures that may require randomization or involve pooling states from disparate parts of the state space. Motivated by these considerations, in the main text of this paper, we restrict our attention 
to this particular class of information structures. In most of this paper, the solution to this problem also solves the more general problem without such a restriction.
We will be clear when this issue arises. 

For an interval-partitional information structure $ G$ defined by cutoff points, $ 0 = s_0 < s_1 < \ldots < s_N = 1 $, its integral representation is illustrated in Figure~\ref{interval}. The tangency points between $ I_F $ and $ I_G $ correspond to these cutoff points because $G(s_k)=I'_G (s_{k})=I'_F(s_{k})=F(s_k)$ for all $k=0,\ldots,N$. Moreover
each kink point $x_k$ in $ I_G $ between two adjacent tangency points $s_{k-1}$ and $s_k$
represents the posterior mean of the state over the corresponding subinterval $(s_{k-1}, s_k)$.

\begin{lemma}
	\label{lemma:conditional_expectation}
	For any $I_G \in \mathrm{ICPL}$, if $x_k$ is the only kink point between two adjacent tangency points $s_{k-1}$ and $s_{k}$, then ${x_k}=  \mathbb{E}_F\left[ \theta\, \middle \vert\, \theta \in (s_{k-1},s_k)\right]$.
\end{lemma}

\begin{proof}
	The kink point $x_k$ satisfies the equation,
	\[
	I_F(s_{k-1}) + F(s_{k-1})(x_k- s_{k-1})= I_F(s_k)+F(s_k)(x_k - s_k).
	\]
	Solving this equation gives:
	\[
	x_k = \frac{\int_{s_{k-1}}^{s_k} \theta \, \rmd F(\theta)}{\int_{s_{k-1}}^{s_k} \rmd F(\theta)}.\qedhere
	\]
\end{proof}

\begin{figure}[t]
	\centering
	\includegraphics[width=7cm]{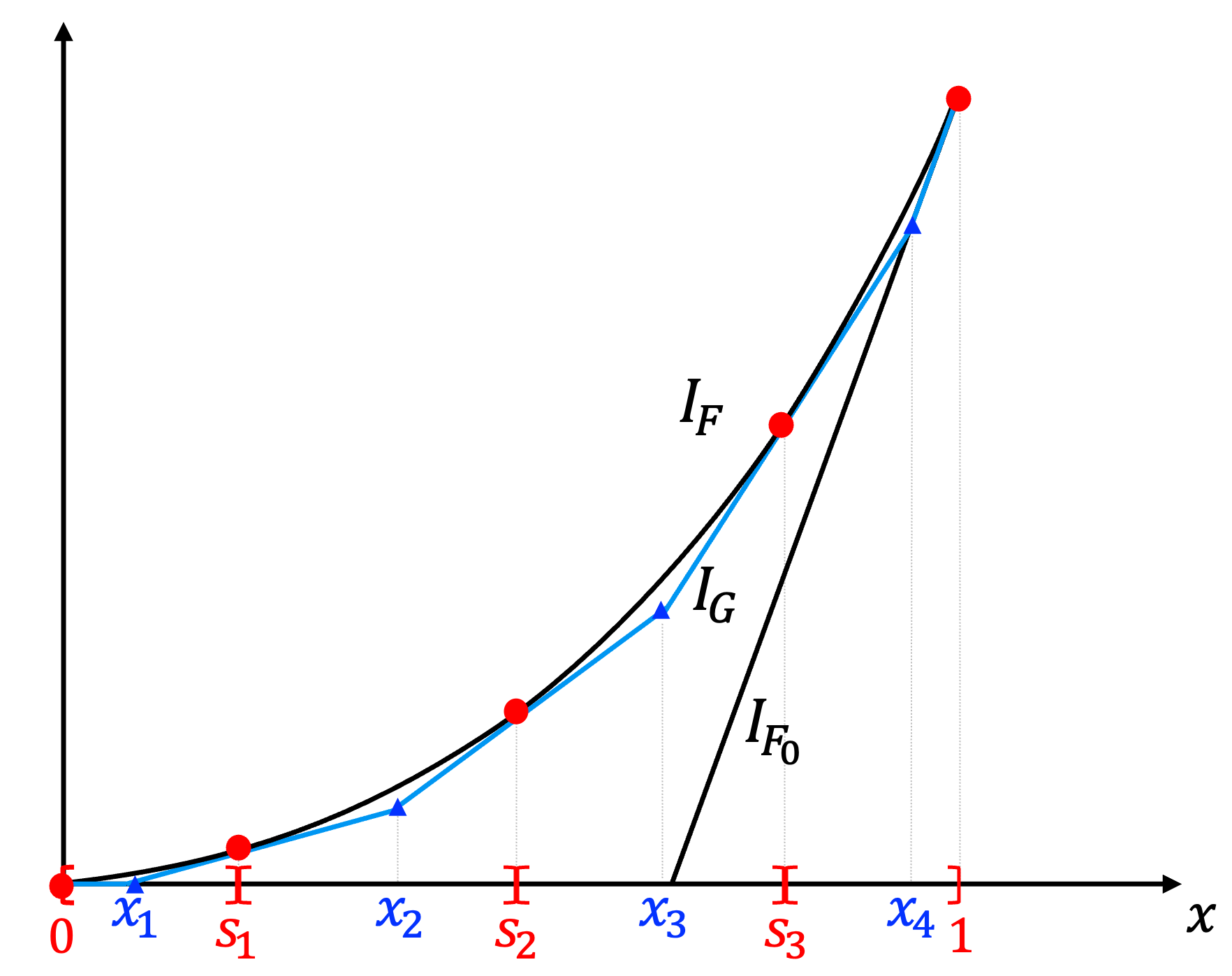}
	\caption{\small The integral distribution function of an interval partition with 4 signals.	}
	\label{interval}
\end{figure}

In fact, for any $N$-interval partition $G$, it is \textit{essentially} the only information structure to induce the corresponding distribution of posterior means represented by the kink points of $I_G$. Formally, if we consider all the signals that induce the same posterior mean an 
equivalence class, then the $N$-interval partition $G$ is the unique information structure that implements $I_G$.

\section{Convex Value Functions}
\label{section:Convex}

In this section, we discuss convex value functions, where sender and receiver's preferences are sufficiently aligned such that more information is strictly beneficial. The finiteness constraint raises an important question regarding how to pool information efficiently when the number of disposable ``signal resources'' is budgeted. Moreover, this class of value functions incorporates many natural applications, such as a single-agent decision problem or a sender-receiver game with quadratic utilities. 

Upon normalization, it is without loss of generality in this case to treat $u''$ as a density function that captures the relative value of information provision. Our main characterization of the optimal interval partition features a dual relation between the prior density $f$ and the curvature density $u''$. 
For any $0 \le a < b \le 1$, let
\begin{align*}
	\phi(a,b) & := \bbE_{F}\left[ t\, \middle \vert\, t \in \left( a,b \right)\right],\\
	\mu(a,b) & := \bbE_{u'}\left[ t\, \middle \vert \, t \in \left(a,b \right)\right] . 
\end{align*}
Here, $\bbE_{u'}$ is the conditional expectation operator using $u''(\cdot)/(u'(b)-u'(a))$ as the conditional density function.  When $u$ is strictly convex, this is a valid density of full support, and the ``conditional expectation'' function $\mu(\cdot)$ is well defined. 

\begin{theorem}
	\label{dualexpectation}
	Suppose $u$ is strictly convex. The optimal   $N$-interval partition $G$, characterized by $\{ s_k \}_{k=0}^{N}$ and $\{ x_k \}_{k=1}^{N}$, must satisfy:
	\begin{align}
		x_k &= \phi(s_{k-1},s_{k}) \qquad  \text{for } k=1,\ldots,N; \tag{CE-$F$} \label{phi}\\
		s_k &= \mu(x_k,x_{k+1}) \qquad \text{for } k=1,\ldots,N-1.
		\tag{CE-$u'$}\label{mu}
	\end{align}
\end{theorem}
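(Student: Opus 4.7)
By Lemma \ref{convexinterval}, I can restrict attention to interval-partitional structures with exactly $N$ signals, so the optimization reduces to choosing the $N-1$ interior cutoffs $s_1,\ldots,s_{N-1}$, with $x_k$ determined by the interval $(s_{k-1},s_k)$. Corollary \ref{corollary:conditional_expectation} then gives (CE-$F$) for free: since each subinterval is assigned a single signal, Bayes' rule forces $x_k=\phi(s_{k-1},s_k)$. The remaining task is to derive (CE-$u'$) as the interior first-order condition for the cutoffs.

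Substituting $x_k=\phi(s_{k-1},s_k)$, the objective becomes
$$V(s_1,\ldots,s_{N-1})=\sum_{k=1}^N u\bigl(x_k\bigr)\,\bigl[F(s_k)-F(s_{k-1})\bigr].$$
The cutoff $s_k$ enters only the $k$-th and $(k{+}1)$-th summands. Differentiating the identity $x_k[F(s_k)-F(s_{k-1})]=\int_{s_{k-1}}^{s_k}\theta f(\theta)\,\mathrm d\theta$ yields
$$\partial_{s_k}x_k=\frac{(s_k-x_k)f(s_k)}{F(s_k)-F(s_{k-1})},\qquad \partial_{s_k}x_{k+1}=-\frac{(s_k-x_{k+1})f(s_k)}{F(s_{k+1})-F(s_k)}.$$
Plugging these into $\partial_{s_k}V=0$, the weights $F(s_k)-F(s_{k-1})$ cancel out, and the common factor $f(s_k)>0$ can be divided out, leaving the clean first-order condition
$$u(x_k)+u'(x_k)(s_k-x_k)=u(x_{k+1})+u'(x_{k+1})(s_k-x_{k+1}),$$
i.e.\ the tangent lines to $u$ at $x_k$ and $x_{k+1}$ meet at the horizontal coordinate $s_k$. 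This tangent-intersection condition is standard in related partition problems (e.g.\ Crawford--Sobel).

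To convert this tangent-intersection identity into the stated form (CE-$u'$), I would apply integration by parts to the numerator of $\mu(x_k,x_{k+1})$:
$$\int_{x_k}^{x_{k+1}} t\,u''(t)\,\mathrm d t= x_{k+1}u'(x_{k+1})-x_k u'(x_k)-\bigl[u(x_{k+1})-u(x_k)\bigr].$$
Dividing by $u'(x_{k+1})-u'(x_k)>0$ (strict convexity) and rearranging shows that $s_k=\mu(x_k,x_{k+1})$ is algebraically the same equation as the tangent-intersection condition above.

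The main delicate point is justifying that the optimum lies in the interior of the cutoff simplex, so that the first-order condition is truly valid at every interior $s_k$. Strict convexity of $u$ makes pooling strictly costly, so an optimal structure genuinely uses $N$ distinct signals with $0<s_1<\cdots<s_{N-1}<1$; this rules out boundary configurations and makes the FOC binding at each interior cutoff. Beyond this, the argument is a direct envelope-type calculation plus an integration by parts, so I expect the routine calculations to go through smoothly once the interior structure is pinned down.
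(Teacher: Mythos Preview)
Your argument is correct: the first-order condition with respect to each interior cutoff $s_k$ indeed collapses to the tangent-intersection equation, and your integration-by-parts step cleanly rewrites it as $s_k=\mu(x_k,x_{k+1})$. The interiority issue is handled by Lemma~\ref{convexinterval}, which you invoke appropriately, since $|\mathcal{K}_{I_G}|=N$ forces $0<s_1<\cdots<s_{N-1}<1$.

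The paper reaches the same tangent-intersection condition by a different route. Rather than differentiating, it introduces the \emph{minorant function} $\underline{u}_G$---the piecewise-affine function whose $k$-th piece is the tangent to $u$ at $x_k$---and shows by a direct improvement argument that $\underline{u}_{G^*}$ must be continuous at each cutoff: if two adjacent tangent pieces did not meet at $s_k$, shifting the cutoff to their actual intersection point would raise the minorant pointwise and hence raise the objective. Continuity of $\underline{u}_{G^*}$ is exactly your tangent-intersection condition. The paper then observes that $\underline{u}_{G^*}$ bears the same geometric relation to $u$ as $I_G$ does to $I_F$, and invokes Corollary~\ref{corollary:conditional_expectation} (with $u'$ playing the role of $F$) to read off $s_k=\mu(x_k,x_{k+1})$. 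Your calculus derivation is more direct and self-contained; the paper's minorant construction is more geometric and, importantly, is reused later to handle the coarse mechanism design problem (Section~\ref{nonlinearpricing}) and general value functions (Section~\ref{generalsection}), where the envelope-type FOC is less transparent. As a minor aside, the Crawford--Sobel analogy is a bit misleading: there the cutoff condition comes from the sender's indifference, not from optimality, and the paper explicitly contrasts the two in Section~\ref{section:Cheaptalk}.
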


\begin{proof}
	First, (\ref{phi}) must hold by definition of an interval-partitional information structure. To prove (\ref{mu}), for any interval-partitional structure $G$, we define the \textit{minorant function} $\underline{u}_G$ on each (open) segment as follows:
	\begin{equation}
		\underline{u}_G(x) := \sum_{k=1}^N \left[ u(x_k)+u'(x_k)(x-x_k) \right] \mathbb{I}_{\left(s_{k-1},s_k  \right)} (x).
		\label{minorantu}
	\end{equation}
On each subinterval $(s_{k-1},s_k)$, $\underline{u}_G(\cdot)$ is the subgradient of $u(\cdot)$ at the posterior mean of that subinterval. By construction, a minorant function is piecewise affine with increasing slopes. 
	
	Under an optimal interval partition $G^*$, the minorant function must be \emph{continuous} at the interval cutoff points.
Suppose to the contrary that $\underline{u}_{G^\ast} (s_k^-) < \underline{u}_{G^\ast} (s_k^+)$	(see left panel of Figure \ref{figure:proofdualexp}).	Then we can construct another function,
	\begin{equation*}
		\hat{\underline u}(x)= \begin{cases}
			\max\left\{u(x_k)+u'(x_k) (x-x_k),\, u(x_{k+1})+u'(x_{k+1}) (x-x_{k+1})\right\} &\text{if } x\in (s_{k-1},s_{k+1}),\\
			\underline{u}_{G^\ast}(x) &\text{otherwise}.
		\end{cases}
	\end{equation*}
	By construction $\hat{\underline u}(x)$ is continuous at $x=\hat s_k$, where $\hat{s}_k$ satisfies
	$u(x_k)+u'(x_k) (\hat s_k-x_k)=u(x_{k+1})+u'(x_{k+1}) (\hat s_k-x_{k+1})$. It is immediate that $\hat{s}_k \in (s_{k-1},s_k)$
(see right panel of Figure \ref{figure:proofdualexp}). Moreover, $\hat{\underline u}(x) \ge \underline u_{G^*}(x)$ for all $x$ in each open segment, and the inequality holds strictly for $x \in (\hat s_k, s_k)$.

	\begin{figure}[t]
		\centering
		\begin{subfigure}[b]{0.4\textwidth}
			\centering
			\includegraphics[width=\textwidth]{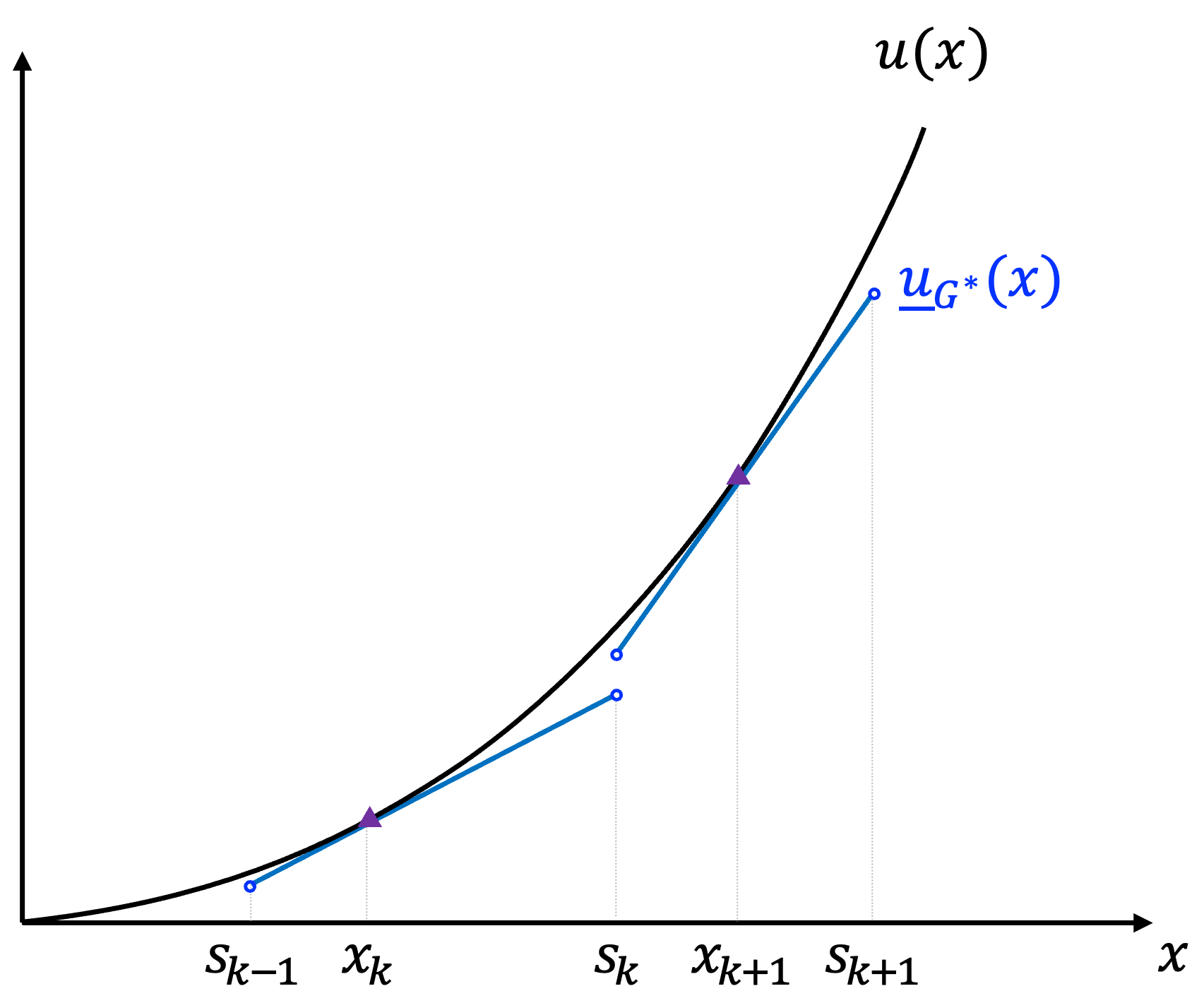}
		\end{subfigure}
		\hfil
		\begin{subfigure}[b]{0.4\textwidth}
			\centering
			\includegraphics[width=\textwidth]{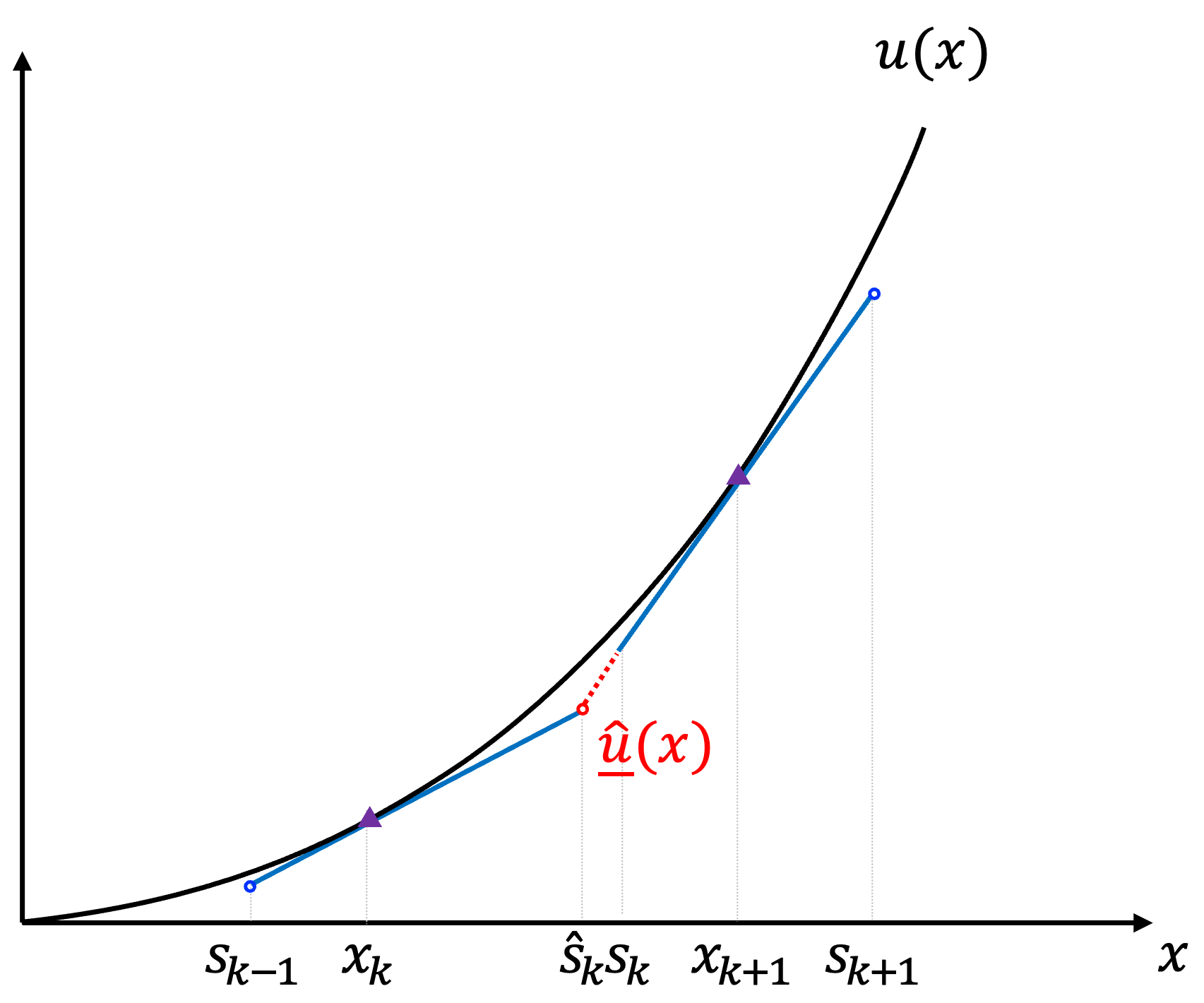}
		\end{subfigure}
		\caption{\small Suboptimality of discontinuous minorant function}
		\label{figure:proofdualexp}
	\end{figure}

	Now consider another interval structure $\hat{G}$, characterized by the partitional cutoffs $\{s_0,\dots,s_{k-1}, \hat{s}_k, s_{k+1},\dots,s_{N} \}$. The following inequalities show that this interval structure $\hat{G}$ generates higher expected payoff than $G^*$. 
	\begin{align*}
		&\int_0^1 u(x) \,\rmd G^\ast(x) =\int_0^1 \underline{u}_{G^\ast}(x) \,\rmd G^\ast(x)
		= \int_0^1 \underline{u}_{G^\ast}(x) \,\rmd F(x) \\
		& 
		\quad < \, \int_0^1 \hat{\underline u}(x) \,\rmd F(x)
		= \int_0^1 
		\hat{\underline u}(x)\,\rmd \hat{G}(x)
		\le \int_0^1 u(x) \, \rmd  \hat{G}(x). 
	\end{align*}
	The first equality follows from the fact that the information structure $G^\ast$ puts mass points of each subinterval $(s_{k-1},s_k)$ on the corresponding posterior mean $x_k$. The second equality comes from the linearity of $\underline{u}_{G^\ast}$ in each subinterval. Though the values of $\underline{u}_{G}(x)$ at the interval cutoffs are not assigned at this stage, this equality remains true because $F$ puts zero measure on this countable set.	The inequality on the second line follows because $\hat{\underline u}$ is everywhere above $\underline{u}_{G^\ast}$. The subsequent equality again comes from the piecewise linearity of  $\hat{\underline u}$. 	Finally the last inequality is due to the fact that $u$ is everywhere above $\hat{\underline u}$. This string of inequalities demonstrates that if $\underline{u}_{G^\ast}$ is not continuous at any cutoff point $s_k$, then there exists another information structure $\hat{G}$ that outperforms $G^\ast$---a contradiction. 
	
	We can therefore complete the minorant function at each $s_k$ by letting $\underline{u}_{G^\ast} (s_k)=\underline{u}_{G^\ast} (s_k^-)=\underline{u}_{G^\ast} (s_k^+) $. 
	The continuity of $\underline{u}_{G^\ast}$ implies that the completed minorant function under optimal information structure must be a convex, piecewise affine function.   Finally, if we regard $u$ as the ``integral distribution'' of $u'$, then $u$ bears a similar relationship to $\underline{u}_{G^*}$ as $I_F$ bears to $I_G$.   Consequently, Lemma \ref{lemma:conditional_expectation} 
implies that $s_k=\mathbb{E}_{u'}[\theta \,|\, \theta \in (x_{k},x_{k+1})]$ for $k=1,\ldots,N-1$. 
\end{proof}

Under an interval-partitional information structure, each signal $x_k$ is the posterior expectation of the state conditional on the interval $(s_{k-1},s_{k})$.
Theorem \ref{dualexpectation} states that under the \textit{optimal} interval-partitional information structure, each interval cutoff $s_k$ must be the expectation of  
the random variable that follows distribution $u'$, conditional on the interval $(x_{k},x_{k+1})$.\footnote{ 
We use $u'$ here as a shorthand to stand for the distribution function $(u'(\cdot)-u'(0))/(u'(1)-u'(0))$. }

To better understand the result, first note that due to the linear structure of payoff functions, every affine segment of the minorant function $\underline{u}_{G}$ on the subinterval $(s_{k-1},s_k)$ represents the designer's payoff when an action is taken upon the realization of the signal $x_k$.\footnote{
In the case of single-agent information acquisition problems, the piece of subgradient is exactly the designer's payoff at each state by the envelope theorem \citep{MilgromSegal2002}. When the designer and the receiver do not share perfectly aligned interests, due to the linear payoff structure, the subgradient generates the same expected payoff on each partitioned subinterval as the designer's actual payoff function.} 
Therefore, every two adjacent subgradients must intersect at the cutoff point $s_k$, for otherwise one can always construct a new information structure that produces a set of uniformly higher pieces of subgradients, yielding a higher payoff to the designer.
To put it differently, the designer should be indifferent between categorizing the cutoff state $s_k$ into either the upper subinterval or the lower subinterval under the optimal information structure. Figure \ref{figure:affineminorant} puts together the minorant function and the integral distribution of an interval partition. The minorant function $\underline u$ exhibits a geometric relationship with $u$ akin to the relationship between $I_G$ and $I_F$, which leads to the property of dual expectations. 

\begin{figure}[t]
	\centering
	\includegraphics[width=15cm]{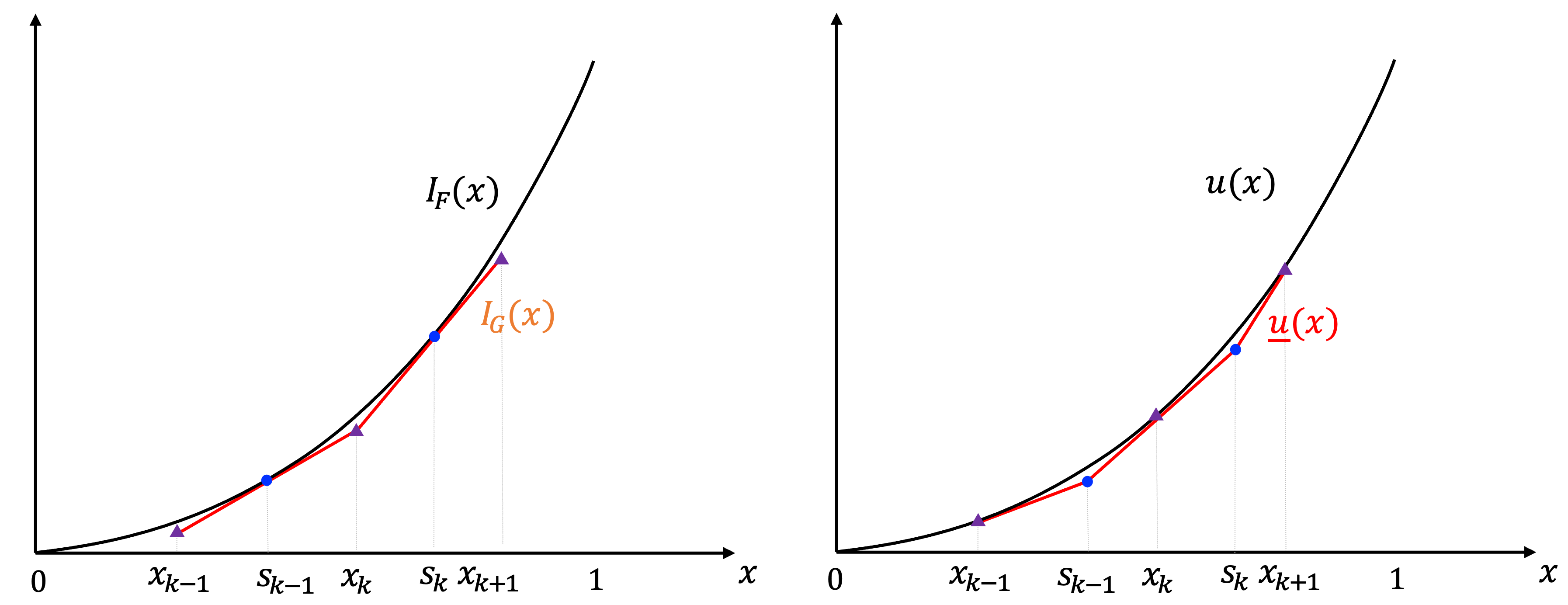}
	\caption{\small Dual expectations.}
	\label{figure:affineminorant}
\end{figure}

The characterization of (\ref{phi}) and (\ref{mu})  can extend to general value functions where $u'$ is a general signed measure with a well-behaved Radon-Nikodym derivative (with respect to Lebesgue measure) and the conditional ``expectations'' are calculated accordingly. We will provide more details in Sections \ref{section:s-shaped} and  \ref{generalsection}.

Theorem \ref{dualexpectation} gives the necessary condition for the optimal $N$-interval partition. The next result identifies environments under which the equation system has a unique solution, and therefore the necessary condition is also sufficient.
Logconcave priors are commonly used in information economics and are satisfied in many classes of probability distributions. In contrast, logconcavity of $u''$  concerns an endogenous object---the curvature of the value function.  In binary decision problems where the receiver has private information, the logconcavity of 
$u''$ is equivalent to the prior distribution of the receiver's private signal being logconcave, as illustrated in the purchase decision example (Example \ref{pd}). Later, we discuss a principal-agent problem (Example \ref{ex:pa}) and a coarse mechanism design problem (Example \ref{ex:cmd}), where the curvature distribution can also be logconcave.

\begin{proposition}\label{unique}
	If both $f$ and $u''$ are logconcave, the solution to the equation system (\ref{phi}) and (\ref{mu}) is unique and characterizes the optimal $N$-interval partition.
\end{proposition}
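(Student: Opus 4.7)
The plan is to (i) obtain existence of a solution from earlier results and then (ii) prove uniqueness via a monotone shooting argument powered by logconcavity.

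For existence, note that logconcavity of $u''$ on $[0,1]$ forces $u''>0$ everywhere, so $u$ is strictly convex. Lemma \ref{convexinterval} then tells us any optimum is an interval-partitional structure with exactly $N$ cutoffs, and Theorem \ref{dualexpectation} shows such an optimum must satisfy both (\ref{phi}) and (\ref{mu}). Thus the system has at least one solution, namely the optimum; the proposition will follow once we show the system has at most one solution.

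For uniqueness I would set up a forward shooting. Given $s_0=0$ and any $s_1\in(0,1)$, the recursion is well-posed: (\ref{phi}) gives $x_1=\phi(0,s_1)$; from $(x_k,s_k)$, equation (\ref{mu}) determines $x_{k+1}$ uniquely because $\mu(x_k,\cdot)$ is strictly increasing (as $u''>0$); then (\ref{phi}) determines $s_{k+1}$ uniquely because $\phi(s_k,\cdot)$ is strictly increasing (as $f>0$). This defines a shooting map $s_1\mapsto s_N(s_1)$, and solutions of the system are exactly roots of $s_N(s_1)=1$. It therefore suffices to show the shooting map is strictly increasing. I would prove this by induction on $k$: if $\tilde s_1>s_1$, then $\tilde x_k>x_k$ and $\tilde s_k>s_k$ for every $k\ge 1$. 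The essential input is the contraction-type identity for conditional means of a logconcave density $g$:
\begin{equation*}
\partial_a \mathbb{E}_g[t\mid t\in(a,b)] + \partial_b \mathbb{E}_g[t\mid t\in(a,b)] \;\le\; 1,
\end{equation*}
with strict inequality in the interior when $g$ is strictly logconcave (it is verified via $\partial_a\mathbb{E}=g(a)(\mathbb{E}-a)/G$, $\partial_b\mathbb{E}=g(b)(b-\mathbb{E})/G$ and a standard rearrangement using the monotone likelihood ratio consequence of $\log g$ being concave). Applying this separately to $\phi$ (via logconcavity of $f$) and to $\mu$ (via logconcavity of $u''$) yields $\mu_a\phi_b+\mu_b\phi_a<1$ uniformly in the relevant arguments. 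Combined with the linearized recursion $\eta_k=\phi_a\delta_{k-1}+\phi_b\delta_k$ and $\delta_k=\mu_a\eta_k+\mu_b\eta_{k+1}$ for the perturbations $\delta_k=\tilde s_k-s_k$, $\eta_k=\tilde x_k-x_k$, this inequality forces $\delta_2>0$ whenever $\delta_1>0$ and provides the mechanism by which strict positivity propagates.

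The main obstacle will be the inductive propagation beyond the base step, because solving the linearized recursion gives
\begin{equation*}
\delta_{k+1} \;=\; \frac{(1-\mu_a\phi_b-\mu_b\phi_a)\,\delta_k \;-\; \mu_a\phi_a\,\delta_{k-1}}{\mu_b\phi_b},
\end{equation*}
so preserving $\delta_{k+1}>0$ requires controlling the ratio $\delta_k/\delta_{k-1}$ throughout the chain, not just one-step positivity. A clean way forward is to show inductively that $\delta_k/\delta_{k-1}$ stays bounded below by a quantity built from $\phi_a,\phi_b,\mu_a,\mu_b$ that makes the numerator above positive, exploiting the strict inequality $\mu_a\phi_b+\mu_b\phi_a<1$ at each step. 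An alternative route, which I would keep in reserve, is to bypass shooting altogether and prove that the reduced objective $V(s_1,\dots,s_{N-1})=\sum_{k=1}^N[F(s_k)-F(s_{k-1})]\,u(\phi(s_{k-1},s_k))$ is strictly concave on the open simplex of admissible cutoff vectors by showing its tridiagonal Hessian is negative definite at any critical point; logconcavity of $f$ and $u''$ enters there in verifying the diagonal dominance/sign pattern. Either route reduces to the same underlying logconcavity machinery and yields uniqueness, which together with existence completes the proposition.
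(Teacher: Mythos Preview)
Your setup is sound and your existence argument is fine, but the proposal leaves the central step genuinely open: you yourself flag propagation of positivity in the shooting recursion as the ``main obstacle'' and then offer two routes (a ratio bound on $\delta_k/\delta_{k-1}$, or negative definiteness of the tridiagonal Hessian) without executing either. Neither is trivial, and the linearized recursion you write down makes the task look harder than it is.

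The paper's proof avoids this difficulty entirely, and the missing idea is simple. Rather than differentiating, use the \emph{integrated} non-expansion property (Lemma~\ref{keylemma}): $\phi(a+\eps,b+\eps)\le \phi(a,b)+\eps$, and likewise for $\mu$. With this in hand, compare any two solutions $z'\ge z''$ (the paper takes the largest and smallest fixed points of the monotone map $\Gamma$, via Tarski). One shows the interleaved differences are \emph{strictly increasing}: from $x_1=\phi(0,s_1)$ one gets $x'_1-x''_1<s'_1-s''_1$; then from $s_1=\mu(x_1,x_2)$, supposing $x'_2-x''_2\le s'_1-s''_1$ and using $x'_1-x''_1<s'_1-s''_1$ together with Lemma~\ref{keylemma} yields $s'_1-s''_1<s'_1-s''_1$, a contradiction; hence $x'_2-x''_2>s'_1-s''_1$. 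Iterating gives $x'_N-x''_N>s'_{N-1}-s''_{N-1}$, but the terminal equation $x_N=\phi(s_{N-1},1)$ forces $x'_N-x''_N<s'_{N-1}-s''_{N-1}$, a contradiction. So $z'=z''$.

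This same ``increasing differences'' argument plugs directly into your shooting framework and replaces the ratio control you were reaching for: from $\delta_0=0<\delta_1$ and $\eta_1<\delta_1$ one obtains $\eta_2>\delta_1$, then $\delta_2>\eta_2$, and so on, giving $\delta_N>0$ immediately---no linearization, no Hessian needed. The gap in your proposal is precisely that you did not extract this monotone-difference consequence from the logconcavity inequality you already stated.
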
 

The proof relies on the following property of logconcave densities. Because we will further exploit this property for subsequent results, we state it here for easier reference.  Obviously, a similar property as stated in Lemma \ref{keylemma} holds for the $\mu(\cdot,\cdot)$ function if $u''$ is logconcave.  
When both $f$ and $u''$ are logconcave, we refer to that as a \emph{logconcave environment}.

\begin{lemma}[\citet{MeaseNair2006,Szalay2012}]
	\label{keylemma}
	If $f$ is logconcave  and has full support, then for any $0 \leq a<b \leq 1$ and any $\eps_1, \eps_2 \ge 0$ such that $[a+\eps_1,b+\eps_2] \subseteq [0,1]$,
	\begin{equation*}
	\phi(a+\eps_1,b+\eps_2) \le \phi(a,b) + \max\{\eps_1,\eps_2\} ,
	\end{equation*}
with strict inequality if $\eps_1\ne \eps_2$. 
\end{lemma}
\begin{proof}[Proof of Proposition \ref{unique}.]
	The equation system (\ref{phi}) and (\ref{mu}) is a monotone mapping.  By Tarski's theorem, there exists a largest and a smallest fixed point (denoted $(x'_1,s'_1,\ldots,x'_N)$ and $(x''_1,s''_1,\ldots,x''_N)$, respectively).  If these two fixed points do not coincide, let $x'_j > x''_j$ where $j$ is the smallest index $j$ such that the element of the largest fixed point is strictly larger than the corresponding element of the smallest fixed point.  Because $s'_{j-1}=s''_{j-1}$, Lemma \ref{keylemma} implies that $s'_j - s''_j > x'_j-x''_j$.  Iterating this argument forward leads to $x'_N-x''_N > s'_{N-1}-s''_{N-1}$.  But $x'_N-x''_N=\phi(s'_{N-1},1)-\phi(s''_{N-1},1) < s'_{N-1}-s''_{N-1}$ by Lemma \ref{keylemma}, a contradiction.  A parallel argument applies if $s'_k > s''_k$, where $k$ is the smallest index such that the element in the largest fixed point is strictly larger.  Thus we have shown that the solution to the equation system is unique in a logconcave environment.  Because an optimal solution exists and there is only one candidate solution that satisfies the necessary condition for optimality, this candidate solution must be the optimal solution.
\end{proof}

When the value function is convex, the optimal $N$-interval partition characterized above also solves the general information design problem (\ref{auxiliaryprogram}) without restricting
to interval-partitional structures. 
The designer can gain a higher payoff by raising the informativeness of the experiment (i.e., by raising $I_G$) whenever $u''$ is positive.
Since it is always feasible to raise $I_G$ if $G$ is not interval-partitional, it is without loss of generality to focus on interval-partitional structures.
Moreover, the optimal experiment should always fully utilize the ``signal resources'' by setting $|\mathcal{K}_{I_{G}}| =N$. In other words, the ``information-maximizing'' designer always exhausts the signal budget.

\begin{proposition}
	\label{convexinterval}
	Suppose $u$ is convex. The optimal information structure $G$ is an interval-partitional structure with $|\mathcal{K}_{I_{G}}| =N$.  
\end{proposition}

The optimality of interval partition is also shown in \cite{Ivanov2021}. For completeness, we provide a simple proof in the Appendix based on the representation of integral distributions. Specifically, for any integral distribution $I_G$ that is not piecewise tangent to $I_F$ or that has less than $N$ pieces, we construct an alternative $I_{G^{\prime}}$ with $N$ pieces, 
induced by an interval-partitional information structure, such that $I_{G^{\prime}}$ lies everywhere weakly above $I_G$. This implies that the experiment $G$ is a mean-preserving contraction of $G^{\prime}$. The argument implies that the optimality of interval partitions holds for any convex value function $u$, even if $u$ is not strictly convex or not differentiable.

\subsection{Allocating scarce signal resources}
\label{allocation}

If a region of the state space is finely partitioned into more subintervals, then the experiment reveals finer details about that region of the state space as the receiver is more likely to distinguish across signal realizations. Given an interval partition, we say a subinterval receives \emph{closer scrutiny} than another if the width of this subinterval is smaller than that of another. Fixing the number of signals, if some region of the state space receives closer scrutiny, then some other region of the state space will receive less scrutiny.
A natural question is: Which part of the state space should receive closer scrutiny in the optimal experiment?

In the simplest case, both $f$ and $u''$ are uniform.  Then $\phi(a,b)=\mu(a,b)=(a+b)/2$.  Equations (\ref{phi}) and (\ref{mu}) imply that the state space $[0,1]$ is divided into $N$ equal-sized intervals with $s_k=k/N$ for $k=0,\ldots,N$; and the induced posterior means are evenly spaced, with $x_k=(2k-1)/(2N)$ for $k=1,\ldots,N$.  The optimal experiment gives every part of the state space equal scrutiny.

To develop more general lessons, denote the width of the $k$-th interval $(s_{k-1},s_k)$ by $w_k=s_k-s_{k-1}$ for $k=1,\ldots,N$, and the distance between two adjacent posterior means by $d_k=x_{k+1}-x_k$ for $k=1,\ldots,N-1$. Let 
$\{\Delta_i\}_{i=1}^{2N-1}$ be the interleaved sequence $\{w_1,d_1,w_2,\ldots,d_{N-1} ,w_N \}$,
where $\Delta_{2k-1}=w_k$ and $\Delta_{2k}=d_k$.
See Figure \ref{figure:interleavedwidth}.  We say that the sequence  $\{\Delta_i\}_{i=1}^{2N-1}$ is \emph{single-dipped} if it is decreasing then increasing or if it is monotone.

\begin{figure}[h]
	\centering
	\includegraphics[width=12cm]{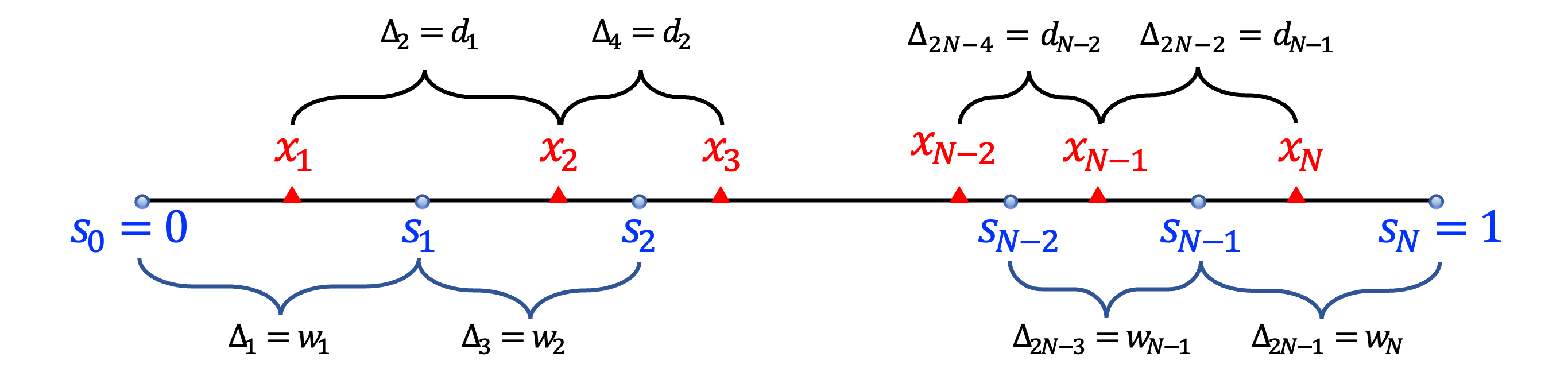}
	\caption{\small The interleaved sequence $\Delta_i$.}
	\label{figure:interleavedwidth}
\end{figure}

\begin{theorem}\label{intensityorder}
	In a logconcave environment, $\{\Delta_i\}_{i=1}^{2N-1}$ is single-dipped.
\end{theorem}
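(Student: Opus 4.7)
My plan is to apply Lemma \ref{keylemma} both to $f$ (through $\phi$) and to $u''$ (through $\mu$) to obtain four one-sided comparisons among consecutive widths and gaps, and then combine them into a propagation rule of the form ``once $\{\Delta_i\}$ stops decreasing, it never decreases again,'' which is exactly single-dippedness.

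\textbf{Step 1: Four comparison inequalities.} I would first exploit $x_k = \phi(s_{k-1},s_k)$ and $x_{k+1} = \phi(s_k,s_{k+1})$ as follows. If $w_{k+1} \le w_k$, then shifting the interval $(s_{k-1},s_k)$ rightward by $w_k$ gives, via Lemma \ref{keylemma}, $\phi(s_k, 2s_k - s_{k-1}) \le x_k + w_k$; the inclusion $(s_k,s_{k+1}) \subseteq (s_k, 2s_k - s_{k-1})$ together with the strict monotonicity of $\phi$ in its right endpoint (from full support of $f$) yields $d_k \le w_k$, strictly when $w_{k+1} < w_k$. Rearranging Lemma \ref{keylemma} as a leftward shift of $(s_k,s_{k+1})$ by $w_{k+1}$ produces the mirror statement: $w_k \le w_{k+1}$ implies $d_k \le w_{k+1}$, strictly when $w_k < w_{k+1}$. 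Replacing $f$ by $u''$ and $\phi$ by $\mu$ and using $s_k = \mu(x_k,x_{k+1})$, the identical argument applied to consecutive gaps gives $d_k \le d_{k-1} \Rightarrow w_k \le d_{k-1}$ and $d_{k-1} \le d_k \Rightarrow w_k \le d_k$, again with strict versions when the hypothesis is strict.

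\textbf{Step 2: Propagation.} From Step 1 I would extract two rules. (P1) $w_k \le d_k$ implies $d_k \le w_{k+1}$: if instead $w_k > w_{k+1}$, the first inequality of Step 1 forces the strict $d_k < w_k$, contradicting $w_k \le d_k$; hence $w_k \le w_{k+1}$, and the second inequality yields $d_k \le w_{k+1}$. (P2) By the symmetric argument using $\mu$, $d_{k-1} \le w_k$ implies $w_k \le d_k$. Rules (P1) and (P2) together say that for every admissible index $i$ of $\{\Delta_i\}_{i=1}^{2N-1}$, $\Delta_i \le \Delta_{i+1}$ implies $\Delta_{i+1} \le \Delta_{i+2}$. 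Finally, I would let $m^\ast$ be the smallest index at which $\Delta_{m^\ast} \le \Delta_{m^\ast+1}$ (set $m^\ast = 2N-1$ if no such index exists). By definition $\Delta_1 > \Delta_2 > \cdots > \Delta_{m^\ast}$, and iterating the propagation rule yields $\Delta_{m^\ast} \le \Delta_{m^\ast+1} \le \cdots \le \Delta_{2N-1}$, which is precisely single-dippedness.

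\textbf{Main obstacle.} The delicate point is making the comparisons in Step 1 \emph{strict} when the hypothesis is strict, even though logconcavity is assumed only weakly. This sharpness is supplied not by strict logconcavity but by the strict monotonicity of $\phi$ and $\mu$ under strict interval inclusion, which follows from full support and continuity of $f$ and $u''$. Without it, the plateau pattern $w_k = d_k = w_{k+1}$ followed by a drop could produce a weak interior ``hump'' that is consistent with Lemma \ref{keylemma} but violates single-dippedness; the strict endpoint monotonicity is exactly what closes this loophole and makes the contradiction in (P1) and (P2) go through.
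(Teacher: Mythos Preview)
Your proposal is correct and follows essentially the same approach as the paper. The paper proves directly that $\Delta_i \ge \Delta_{i-1}$ implies $\Delta_{i+1} \ge \Delta_i$ by a single contradiction chain in each parity case (using Lemma~\ref{keylemma} for $\phi$ when $i$ is even and for $\mu$ when $i$ is odd, together with strict monotonicity of the conditional mean in its endpoints), which is exactly your rules (P1) and (P2); your Step~1 simply unpacks that chain into four preliminary one-sided inequalities before recombining them.
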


\begin{proof}
	It suffices to show that $\Delta_i \geq \Delta_{i-1}$ implies $\Delta_{i+1} \geq \Delta_i$ for any $i=2,\ldots,2N-2$. There are two cases to consider.

	Suppose $i$ is even, and $d_k \ge w_k$ but $w_{k+1} < d_k$ for some $k$.  Then
	\begin{align*}
		x_{k+1} =  \phi(s_{k},s_{k+1})
		= \phi (s_{k-1} + w_{k}, s_{k} + w_{k+1})
		< \phi(s_{k-1}+  d_{k},s_k+d_k )
		\leq x_{k}+d_k = x_{k+1}. 
	\end{align*}
The strict inequality obtains because $\phi$ is strictly monotone in each argument.  The second inequality comes from Lemma \ref{keylemma}.  This is a contradiction. Suppose	$i$ is odd, and $w_k \ge d_{k-1}$ but $d_k < w_k$.  Applying a similar argument above, we obtain 
	\begin{align*}
		s_k = \mu(x_{k},x_{k+1}) = \mu(x_{k-1} + d_{k-1}, x_{k}+d_k) < \mu(x_{k-1} + w_k, x_{k}+w_k) 	\le s_{k-1}+w_k =s_k, 
	\end{align*}
	which is also a contradiction.
\end{proof}

An immediate consequence of Theorem \ref{intensityorder} is that both the sequence of widths of subintervals, $\{w_k\}_{k=1}^N$, and  the sequence of distances between adjacent posterior means, $\{d_k\}_{k=1}^{N-1}$, are single-dipped in a logconcave environment. The part of the state space where $w_i$ attains the minimum is the region that receives the closest scrutiny in the optimal experiment. The result that the widths and distances are single-dipped implies there is a center of scrutiny---the subinterval with the minimum width. Moreover, the optimal experiment pays less and less scrutiny to states that are farther and farther away from this center.

Define the index $i^*$ such that $w_{i^*}=\min\{w_1,\ldots,w_N\}$.  The subinterval $(s_{i^*-1},s_{i^*})$ is the center of scrutiny of the optimal information structure. 
The location of this center generally reflects the designer's two concerns: (1) revealing finer information about the states that are more likely to occur (higher density $f$); and (2) focusing on states with higher local value of information (greater curvature $u''$).
Since the logconcavity of  $f$ and $u''$ implies that these functions are single-peaked \citep{DharmadhikariJoagdev1988}, the two ends of the state space have lower probability mass and smaller curvature. Intuitively, if we fix one of the concerns, such as assuming a uniform prior distribution, the scrutiny center is likely to be near the mode of the curvature.
The following result confirms this intuition and extends it to a more general environment where $f$ and $u''$ are single-peaked with possibly different modes, and they need not be logconcave.

\begin{proposition}
	\label{peak}
	\begin{enumerate}[(\roman*),nosep]
		
		\item If $u''$ is a constant and $f$ is single-peaked with mode $m_f$ in the $j$-th interval, then $i^* \in \{ j-1,j,j+1\}$.  
		
		\item If $f$ is a constant and $u''$ is single-peaked with mode $m_u$ in the $k$-th interval, then $i^* \in \{k-1,k,k+1\}$.
		
		\item If $f$ is single-peaked with mode $m_f$ in the $j$-th interval and $u''$ is single-peaked with mode $m_u$ in the $k$-th interval and $k \le j$, then $i^* \in \{ k-1,\ldots, j+1\}$.
	\end{enumerate}
\end{proposition}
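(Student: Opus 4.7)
The plan is to reduce all three parts to a single ``local monotonicity'' lemma that compares consecutive interval widths. Specifically, I claim: if $f$ is strictly increasing on $(s_{i-1}, s_{i+1})$ and $u''$ is (weakly) increasing on $(x_i, x_{i+1})$, then $w_{i+1} < w_i$; the symmetric statement holds for both decreasing. To prove the lemma, strict monotonicity of $f$ with (\ref{phi}) forces $x_i = \phi(s_{i-1}, s_i)$ and $x_{i+1} = \phi(s_i, s_{i+1})$ to lie strictly above the respective midpoints, which rearranges to $s_i - x_i < w_i/2$ and $x_{i+1} - s_i > w_{i+1}/2$. Monotonicity of $u''$ with (\ref{mu}) forces $s_i = \mu(x_i, x_{i+1})$ weakly above the midpoint of $(x_i, x_{i+1})$, giving $s_i - x_i \ge x_{i+1} - s_i$. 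Chaining: $w_{i+1}/2 < x_{i+1} - s_i \le s_i - x_i < w_i/2$.

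For part (i), $u''$ is constant, so the $\mu$-step is exactly the midpoint condition. For any $i \le j-2$, $s_{i+1} \le s_{j-1} < m_f$ makes $f$ strictly increasing on $(s_{i-1}, s_{i+1})$, and the lemma yields $w_i > w_{i+1}$; chaining gives $w_1 > \cdots > w_{j-1}$. The symmetric argument on the right side of $m_f$ yields $w_{j+1} < \cdots < w_N$, so $i^* \in \{j-1, j, j+1\}$. Part (ii) is symmetric with $f$ and $u''$ swapping roles: $f$ constant makes the $\phi$-step exact, and for $i \le k-2$ we have $x_{i+1} < s_{i+1} \le s_{k-1} < m_u$, so $u''$ is strictly increasing on $(x_i, x_{i+1})$; the lemma then yields $i^* \in \{k-1, k, k+1\}$.

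For part (iii), both $f$ and $u''$ are (strictly) single-peaked. For $i \le k-2$, the hypothesis $k \le j$ gives $i \le j-2$ as well, so both strict-increasing conditions of the full lemma hold and we obtain $w_i > w_{i+1}$, chaining to $w_1 > \cdots > w_{k-1}$. Symmetrically, for $i \ge j+1 \ge k+1$, both strict-decreasing conditions hold, chaining to $w_{j+1} < \cdots < w_N$. Thus $i^* \in \{k-1, \ldots, j+1\}$.

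The main obstacle I anticipate is the clean bookkeeping inside the key lemma: the two monotonicity inputs (for $f$ through $\phi$ and for $u''$ through $\mu$) have to combine through the shared intermediate quantities $s_i - x_i$ and $x_{i+1} - s_i$, and I must make the correct strict-versus-weak monotonicity choice in each part to ensure the final comparison $w_{i+1} < w_i$ remains strict. Once the lemma is set up correctly, each of the three cases reduces to simple index arithmetic based on the positions of $m_f$ and $m_u$ within their respective intervals.
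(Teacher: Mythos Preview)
Your proposal is correct and is essentially the paper's own argument, spelled out in more detail: both proofs rest on the midpoint comparison ($\phi(a,b)\ge (a+b)/2$ and $\mu(a,b)\ge (a+b)/2$ when $f$ and $u''$ are increasing on the relevant range), chained through the dual expectations system to obtain $w_i\ge w_{i+1}$ to the left of the modes and the reverse to the right. The only cosmetic difference is that the paper proves (iii) first and reads off (i) and (ii) as special cases (a constant function is single-peaked with any mode), whereas you prove (i), (ii) directly; note also that weak monotonicity of $f$ and $u''$ suffices, so your insistence on strict inequalities is unnecessary for locating $i^*$.
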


\begin{mainexample}[Principal-agent model]\label{ex:pa}
	A risk-neutral agent with unknown productivity $\theta \in[0,1]$ chooses effort level $a \geq 0$ to maximize the expected wage net of effort cost $C(a)$. Output is given by $a \theta$, and the wage scheme is linear in output with a fixed piece rate $\beta \in(0,1)$. Thus, the agent's utility is $\beta a \theta-C(a)$.  
The prior distribution of $\theta$ is single-peaked at $m_f=0.5$. 
Let $a^*(x)$ be the agent's chosen level of effort when the posterior mean of $\theta$ is $x$.  The agent's interim utility is 
$u(x)=\beta a^*(x)x - C(a^*(x))$. Suppose the cost function takes the iso-elastic form, $C(a)=a^\gamma/\gamma$, with $\gamma > 1$.  Then, the agent's effort supply also takes the iso-elastic form, $a^*(x)=(\beta x)^{1/(\gamma-1)}$.  By the envelope theorem, $u'(x)=\beta a^*(x)$, and therefore
	\[
	u''(x)=  \frac{\beta^2}{\gamma-1} (\beta x)^{\frac{2-\gamma}{\gamma-1}}.
	\]
Now consider two cases. When $\gamma>2$, $u^{\prime \prime}(x)$ is strictly decreasing in $x$, so it is single-peaked with a mode at $m_u=0$. Intuitively, when the marginal cost of effort rises sharply, the agent's optimal effort becomes relatively unresponsive to expected productivity at higher values of $x$. As a result, the agent's payoff is not very sensitive to $x$ in the upper part of the productivity distribution, making it less valuable to finely differentiate high-productivity states. In this case, the optimal information structure for the agent will resemble Experiment B in Figure \ref{AvsB}, with a center of scrutiny near the lower part of the productivity distribution.  Note that in this case, $u''(x)$ is not logconcave, but the location of scrutiny center is still described by
Proposition \ref{peak}. On the other hand, 	
when $\gamma \in(1,2)$, $u''(x)$ is strictly increasing in $x$ and is also logconcave. In this case, the mode of curvature shifts to $m_u=1$. Here, the marginal cost of effort is less steep at the top, so effort is more sensitive to the change in productivity at the upper end. Consequently, the agent's utility is more sensitive to posterior means in the upper range, and the optimal information structure will resemble Experiment A in Figure \ref{AvsB},  with a center of scrutiny near the upper part of the productivity distribution.  
\end{mainexample}

\subsection{Comparative statics}
\label{section:compstat}
In this section, we present some comparative statics results as we vary the prior distribution $f$ and the curvature of value function $u''$.

\begin{definition}[Generalized Likelihood Ratio Order]\label{glr}
	\textnormal{A function $\hat f$ \textit{dominates} $f$ in likelihood ratio order (written as $\hat f \succeq_{lr} f$) if $\hat f(a) f(b) \le \hat f(b) f(a)$ for any $0\le a \le b \le 1$.} 
\end{definition}

If both $f$ and $\hat f$ are density functions with full support, then the above definition coincides with the standard likelihood ratio order, i.e., $\hat f(\cdot)/f(\cdot)$ is increasing. We adopt this multiplicative form because it can adapt 
to single-crossing functions, which is useful later on when we discuss comparative statics for S-shaped value functions (where the curvature function $u''$ is single-crossing).

\begin{proposition}
	\label{csforcutoffslr}
	In a logconcave environment, if either (a) the prior density changes from $f$ to $\hat f$, with $\hat f \succeq_{lr} f$; or (b) the value function changes from $u$ to $\hat{u}$, with $\hat u'' \succeq_{lr} u''$, then all the interval cutoff points $\{s_k\}_{k=1}^{N-1}$ and all the induced posterior means $\{x_k\}_{k=1}^N$ will increase.
\end{proposition}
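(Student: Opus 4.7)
My plan is to argue by contradiction, leveraging the uniqueness of the optimal interval partition established in Proposition~\ref{unique} together with the dual expectations system (\ref{phi})--(\ref{mu}). The first preparatory observation is that a likelihood-ratio increase translates directly into a pointwise upward shift of the relevant conditional expectation operator: in case (a), a standard FOSD-from-MLRP calculation on the conditional density $\hat f(\cdot)/(\hat F(b)-\hat F(a))$ yields $\phi_{\hat f}(a,b) \geq \phi_f(a,b)$ for every $0 \leq a < b \leq 1$; in case (b), the analogous computation on $\hat u''(\cdot)/(\hat u'(b)-\hat u'(a))$ gives $\mu_{\hat u}(a,b) \geq \mu_u(a,b)$. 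The operator that is not touched by the change in primitives coincides with its baseline counterpart.

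Fix case (a); case (b) is symmetric with the roles of $\phi$ and $\mu$ exchanged. Denote the old and new optima by $(\{s_k\},\{x_k\})$ and $(\{\hat s_k\},\{\hat x_k\})$, and set
\[
\delta^{*} := \min\Big\{\min_{0\le k\le N}(\hat s_k - s_k),\ \min_{1\le k\le N}(\hat x_k - x_k)\Big\}.
\]
Because $\hat s_0 = s_0 = 0$ and $\hat s_N = s_N = 1$, if $\delta^{*} < 0$ the minimum must be attained at an interior index. The heart of the argument is a propagation step: if $\hat s_k - s_k = \delta^{*}$ for some $1 \le k \le N-1$, then
\[
\hat s_k = \mu(\hat x_k,\hat x_{k+1}) \geq \mu(x_k + \delta^{*},\, x_{k+1} + \delta^{*}) \geq \mu(x_k,x_{k+1}) + \delta^{*} = s_k + \delta^{*},
\]
where the first inequality uses monotonicity of $\mu$ in each argument together with $\hat x_k \ge x_k + \delta^{*}$ and $\hat x_{k+1} \ge x_{k+1} + \delta^{*}$, and the second is the $\mu$-analogue of Lemma~\ref{keylemma} (available because $u''$ is logconcave) applied with $\varepsilon = -\delta^{*}$. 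All inequalities must be equalities, and strict monotonicity of $\mu$ then forces $\hat x_k = x_k + \delta^{*}$ and $\hat x_{k+1} = x_{k+1} + \delta^{*}$. A parallel chain
\[
\hat x_m = \phi_{\hat f}(\hat s_{m-1},\hat s_m) \geq \phi_f(\hat s_{m-1},\hat s_m) \geq \phi_f(s_{m-1}+\delta^{*},s_m+\delta^{*}) \geq x_m + \delta^{*}
\]
shows that $\hat x_m - x_m = \delta^{*}$ implies $\hat s_{m-1} - s_{m-1} = \hat s_m - s_m = \delta^{*}$. Iterating these two propagation moves drives $\delta^{*}$ to one of the fixed boundary indices $\hat s_0 - s_0 = 0$ or $\hat s_N - s_N = 0$, contradicting $\delta^{*} < 0$.

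The step I expect to be most delicate is ensuring that the equality case of the two chains does not leak extra slack: what is needed is that tightness forces \emph{both} endpoints to shift by exactly $\delta^{*}$, not merely one. This is where logconcavity does double duty---it supplies the bound $\phi_f(s_{m-1}+\delta^{*},s_m+\delta^{*}) \geq x_m + \delta^{*}$ (and its $\mu$-analogue) that allows the chain to close, and, combined with the strict positivity of $f$ and $u''$ on $[0,1]$, it delivers the strict monotonicity of $\phi_f$ and $\mu$ in each argument that is needed to collapse equality into the endpoint identities. Absent logconcavity, the Lemma~\ref{keylemma} inequality can reverse and the propagation breaks, which is consistent with the statement being confined to the logconcave environment.
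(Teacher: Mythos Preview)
Your argument is correct and takes a genuinely different route from the paper's. The paper dispatches the result in three lines: the likelihood-ratio shift raises the monotone map $\Gamma$ defined by the right-hand sides of (\ref{phi})--(\ref{mu}) pointwise, and since the fixed point of $\Gamma$ is unique in a logconcave environment (Proposition~\ref{unique}), standard lattice-theoretic monotone comparative statics \citep{Topkis1998} immediately delivers that every new fixed point dominates the old one. Your propagation argument is more hands-on and closer in spirit to the proofs of Proposition~\ref{unique} and Theorem~\ref{intensityorder}: you exploit Lemma~\ref{keylemma} directly to chase the minimum gap $\delta^{*}$ to a boundary index, where it is pinned at zero. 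The payoff of your route is self-containment---no appeal to an external fixed-point theorem---and it makes explicit that only the \emph{unchanged} primitive needs logconcavity for the propagation step, with the MLRP shift supplying the first inequality on the changed side. The paper's route, by contrast, is much shorter and isolates the logical structure cleanly: the result is a pure monotone-map comparison, with logconcavity entering only through uniqueness. One small technical point you should tighten: when the propagation reaches the extreme index $m=1$ (respectively $m=N$), the intermediate term $\phi_f(s_{0}+\delta^{*},s_{1}+\delta^{*})$ has a negative first argument and is undefined; the fix is to keep that endpoint at $0$ (respectively $1$) and use that $\phi_f(0,\cdot)$ and $\phi_f(\cdot,1)$ are $1$-Lipschitz, which follows from the elementary bound $0<\partial_b\phi_f(a,b)<1$ and does not require Lemma~\ref{keylemma}.
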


\begin{proof}
	We prove the case where $f$ changes to $\hat{f}$. The proof for the change of value function is analogous. 
	A likelihood ratio increase in $f$ raises the conditional mean $\phi(a,b)$ for any given $a$ and $b$.  In a logconcave environment the optimal $(x_1,\ldots,x_N,s_1,\ldots,s_{N-1})$ is the unique fixed point of $\Gamma$, the mapping corresponding to the right-hand-side of the equation system (\ref{phi}) and (\ref{mu}).  Since a likelihood ratio increase from $f$ to $\hat f$ raises $\Gamma$, and $\Gamma$ is monotone, a standard result in monotone comparative statics \citep[Corollary 2.5.2]{Topkis1998}
establishes that any fixed point under $\hat f$ is larger than the unique (and hence smallest) fixed point under $f$. 
\end{proof}

Proposition \ref{csforcutoffslr} holds as long as the solution to the system of equations (\ref{phi}) and (\ref{mu}) is unique either in the original environment or in the new environment. The comparative statics result with regard to the change of prior distribution is intuitive.  A higher distribution (in the likelihood ratio order) means that the state is more likely to fall in the upper part of the state space. Naturally, the optimal experiment will induce more signals that reflect higher states.\footnote{
\cite{Tian2022} shows this part of the result in a more general environment where uniqueness is not required. \cite{Szalay2012}, \cite{ChenGordon2015}, \cite{DeimenSzalay2023} and \cite{SmithSorensenTian2021} make similar observations in the context of cheap talk models and social learning models.
}

Part (b) of Proposition \ref{csforcutoffslr} is more novel to the literature and deserves some additional discussion. For two value functions $u$ and $\hat u$, an increasing ratio of their corresponding curvatures $\hat u''/u''$ is equivalent to the condition that the marginal value function $\hat u'$ is \emph{more convex} than $u'$, i.e., there exists a strictly increasing and convex function $\psi$ such that $\hat u'(x)= \psi(u'(x))$.
This condition implies that the designer's value function under $u$ on average changes faster in the upper region. Therefore, the designer has incentives to learn higher states more precisely, as information is more valuable in higher states.

\examplepart{a (continued)}
We now revisit the purchase decision example through the lens of Proposition \ref{csforcutoffslr}. The buyer and seller differ in the relative value of information across different regions of the state space. 
The buyer's value function curvature is $h(x-p)$, 
which represents the ex ante probability that she is indifferent between making the purchase or not when the posterior mean is $x$. She values information most around the region where she is most likely to be indifferent (when consumer surplus from trade $x-p$ is near the mode of the cost shock). 
The seller, however, has a value function curvature $h'(x-p)$. He only cares about the probability of trade; hence information is most valuable to him when the probability that the buyer is on the margin between buying and not buying changes fastest.
When $h$ is an increasing function, the seller's value function is convex. Moreover, if $h$ is logconcave, then $
\hat u''/u''=p h'(x-p)/h(x-p)$ is decreasing in $x$. 
That is, the buyer's value of information increases faster 
with $x$ than does the seller's. 
Our result shows that the partitional cutoffs $\{\hat s_k\}_{k=1}^{N-1}$ and induced posterior means $\{\hat x_k\}_{k=1}^N$ under the buyer-optimal information structure are uniformly higher than the those under the seller-optimal information structure.

Next, we investigate how the optimal information structure changes when either $f$ or $u''$ becomes less variable. 

\begin{definition}[Generalized Uniform Variability Order] \label{guv}
	\textnormal{A function $\hat f$ is \textit{uniformly less variable} than $f$ (written as $f \succeq_{uv} \hat f$) if there exists some $p\in (0,1)$ such that $\hat f(a) f(b) \le \hat f(b) f(a)$ for all $0\le a \le b\le p$ and $\hat f(a) f(b) \ge \hat f(b) f(a)$ for all $p \le a \le b \le 1$. }   
\end{definition}

Again, we adopt this formulation to facilitate analysis in later sections where single-crossing functions are involved. 
When $\hat{f}$ and $f$ are density functions, the above definition reduces to $\hat f(\cdot)/f(\cdot)$ being unimodal.
If these two distributions are not (first-order) stochastically ordered, then $\hat f$ is said to be \emph{uniformly less variable than}  $f$ \citep{Whitt1985,ShakedShanthikumar2007}.\footnote{Uniform variability order does not require $f$ and $\hat{f}$ to have the same mean. For example, a normal distribution with a smaller variance is uniformly less variable than another normal distribution with a larger variance, regardless of the values of the respective means. 
}  
Intuitively, when either the density or the curvature becomes less variable, scrutiny tends to increase near the region where $f$ or $u''$ approaches the peak. The next result formalizes the reasoning.

\begin{proposition}
	\label{csforcutoffsuv}
	In a logconcave environment, if either (a) the prior density changes from $f$ to $\hat f$, with $f \succeq_{uv} \hat  f$; or (b) the value function changes from $u$ to $\hat u$, with $ u'' \succeq_{uv} \hat u''$,	then the new sequence of optimal signals and interval cutoffs, $\{\hat x_1,\hat s_1,\ldots, \hat s_{N-1},\hat x_N\}$, crosses the original sequence $\{x_1,s_1,\ldots,s_{N-1},x_N\}$ at most once and from above.
\end{proposition}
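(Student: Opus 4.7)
The plan is to prove case (a); case (b) follows by the symmetric roles of $f$ and $u''$ in (\ref{phi})--(\ref{mu}), with the homotopy performed on $u''$ instead and the roles of $x_k$ and $s_k$ interchanged. The overall idea is to interpolate smoothly between $f$ and $\hat f$, track how the unique fixed point $y(\lambda) = (x_1(\lambda), s_1(\lambda), \ldots, x_N(\lambda))$ evolves, and show that its derivative along the path is single-crossing from above---which then integrates to the desired conclusion.

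Set $f_\lambda \propto f \cdot (\hat f / f)^\lambda$ for $\lambda \in [0,1]$. Since $\log f_\lambda$ is a convex combination of the concave functions $\log f$ and $\log \hat f$, each $f_\lambda$ is logconcave; and since $f_\lambda / f = (\hat f / f)^\lambda$ is unimodal, each $f_\lambda$ is uniformly less variable than $f$. Proposition \ref{unique} then supplies a unique solution to the dual-expectations system at every $\lambda$, and the implicit function theorem makes $\lambda \mapsto y(\lambda)$ smooth with $y(0) = y$ and $y(1) = \hat y$. Differentiating (\ref{phi})--(\ref{mu}) in $\lambda$ yields a linear system for $\dot y(\lambda)$ whose inhomogeneous term appears only in the $\dot x_k$ components, namely
\[
\delta_k(\lambda) = \partial_\lambda \phi_\lambda(s_{k-1}, s_k) = \mathrm{Cov}_{F_\lambda\mid(s_{k-1}, s_k)}\bigl(X,\, \log(\hat f / f)(X)\bigr),
\]
while the coefficient matrix has strictly positive off-diagonals ($\phi_a, \phi_b, \mu_a, \mu_b > 0$) and a diagonal-dominance property inherited from the contraction underlying Proposition \ref{unique}.

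The argument then has two crucial steps. First is a covariance lemma: because $\log(\hat f / f)$ is unimodal and the intervals $(s_{k-1}(\lambda), s_k(\lambda))$ are ordered left-to-right in $k$, the sequence $\delta_k(\lambda)$ is single-crossing from $+$ to $-$ in $k$---positive for intervals lying to the left of the mode of $\hat f / f$ and negative for those to the right---which I expect to be provable by combining monotonicity of conditional means with Lemma \ref{keylemma} to rule out a return to positive once the mass has tipped to the decreasing side of $\log(\hat f/f)$. The second step, which I anticipate being the main obstacle, is to propagate this single-crossing through the linear system to $\dot y(\lambda)$: one must show that the inverse of the block-tridiagonal coefficient matrix maps sequences single-crossing from above to sequences single-crossing from above. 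I expect this sign-regularity of the inverse to follow from a total-positivity structure of the matrix, itself derivable from Lemma \ref{keylemma} and its $\mu$-analog. Once both steps are in place, integrating $\dot y(\lambda)$ over $\lambda \in [0,1]$ transfers the single-crossing property to $\hat y - y$, and case (b) follows by the symmetric argument applied to the $\mu$ side of the system.
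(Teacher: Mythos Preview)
Your homotopy approach is genuinely different from the paper's, but it has a real gap at the final step. You assert that ``integrating $\dot y(\lambda)$ over $\lambda \in [0,1]$ transfers the single-crossing property to $\hat y - y$.'' This is false in general: single-crossing (from above) is not closed under addition or integration. Even if each $\dot y(\lambda)$ is single-crossing in the index $j$, the location of the crossing can drift with $\lambda$---and nothing in your setup pins it down, since both the partition cutoffs $s_k(\lambda)$ and the Jacobian depend on $\lambda$. Summing two single-crossing vectors with crossings at different indices can easily produce two sign changes. So the whole machinery of the covariance lemma and the total-positivity of the inverse Jacobian, even if carried out, would not deliver the conclusion. (Your two ``crucial steps'' are also only conjectured; the covariance step in particular is delicate for intervals straddling the mode of $\hat f/f$.)

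The paper's proof avoids all of this by working directly with the differences $\delta_j := y_j - \hat y_j$ and splitting the interleaved sequence at the index $k^\ast$ where $s_{k^\ast} \le p < s_{k^\ast+1}$, with $p$ the mode of $\hat f/f$. On the left block $\hat f/f$ is increasing, so $\hat\phi \ge \phi$ there; combined with Lemma~\ref{keylemma} (and its $\mu$ analogue) one gets a one-line induction: once $\delta_j \ge 0$ it forces $\delta_{j+1} \ge 0$. A symmetric argument on the right block (run in reverse) gives the other half. No differentiation, no matrix inversion, no homotopy---just the same contraction inequality that underlies Proposition~\ref{unique}, applied twice. This is both shorter and immune to the integration problem above; I would recommend abandoning the differential route and adopting this discrete propagation argument.
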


\begin{proof}
	We prove the proposition for (a); the proof for (b) is similar. Suppose $\hat f/f$ reaches a peak at $p \in (0,1)$, and let $k^\ast$  be the integer such that $s_{k^\ast} \le p < s_{k^\ast+1}$.	Define the sequence,
	\begin{equation*}
		\{\delta_i\}_{i=1}^{2N-1} :=\{x_1-\hat x_1, s_1-\hat s_1, \ldots,  s_{N-1}-\hat s_{N-1}, x_N-\hat x_N \}.
	\end{equation*}
To prove this proposition, it suffices to show that the two subsequences, $\{ \delta_i\}_{i=1}^{2k^\ast +1}$ and $\{\delta_i\}_{i=2k^\ast +1}^{2N-1}$, are both single-crossing from below. 

	Let $j$ be the first nonnegative term in the subsequence $\{ \delta_i\}_{i=1}^{2k^\ast +1}$.  We want to show that $\delta_j \ge 0$ implies $\delta_{j+1} \ge 0$. Suppose $j$ is odd and $j>1$; that is, $\delta_j = x_k - \hat x_k \ge 0$ and $\delta_{j-1}=s_{k-1}-\hat s_{k-1} < 0$  for some $k \in \{2,\ldots, k^*\}$. 
Let $\hat \phi(\cdot)$ be the conditional mean function under $\hat f$, we have 
	\begin{align*}
		x_k-\hat x_k  = \phi(s_{k-1},s_{k}) -\hat\phi(\hat s_{k-1},\hat s_{k})  
		&\le \phi(s_{k-1},s_{k}) -\phi(\hat s_{k-1},\hat s_{k}) \\
		&< \phi(\hat s_{k-1},s_{k}) -\phi(\hat s_{k-1},\hat s_k)
		\leq \max \{ 0, s_{k}-\hat s_{k}\},
	\end{align*}
where the first inequality follows because $\hat f(\cdot)/f(\cdot)$ is increasing on $[0,p]$, and the second inequality follows because $\phi(\cdot,s_k)$ is strictly increasing. Since $\delta_j=x_k-\hat x_k$ is nonnegative, the above sequence of inequalities implies that 
$\delta_{j+1}=s_k-\hat s_k> 0$ and  $\delta_{j+1}=s_k-\hat s_k>x_k-\hat x_k =\delta_j$. Carrying this argument forward,
	\begin{equation*}
		s_k-\hat s_k  = \mu(x_{k},x_{k+1}) -\mu(\hat x_{k},\hat x_{k+1})  \leq \max \{ x_{k}-\hat x_{k}, \max\{x_{k+1}-\hat x_{k+1},0\}\}.
	\end{equation*}
Since $s_k-\hat s_k > x_k-\hat x_k$, the above inequality implies $\delta_{j+2}=x_{k+1}-\hat x_{k+1}\ge \delta_{j+1}> 0$. By induction,  $\{\delta_{j},\ldots,\delta_{2k^*+1}\}$ is positive and increasing. 
	
	Suppose $j=1$, then $x_1 \ge \hat x_1$ and $s_{0}=\hat s_{0}=0$.  If $x_1 > \hat x_1$,  we have
	\begin{align*}
		x_1-\hat x_1  = \phi(0,s_{1}) -\hat\phi(0,\hat s_{1})  
		&\le \phi(0,s_{1}) -\phi(0,\hat s_{1})<\max \{ 0, s_{1}-\hat s_{1}\}.
	\end{align*}
    
Repeating the same argument,  $\{\delta_{1},\ldots,\delta_{2k^*+1}\}$  is positive and increasing. If $x_1 = \hat x_1$, then $s_1 \ge \hat s_1$. If the inequality is strict, we can resort to the same argument as above. Otherwise, if all inequalities hold as equality while we proceed, the subsequence $\{ \delta_1, \dots,\delta_{2k^\ast+1} \}$ are all zero. 
	
	Similar reasoning applies to the case where $j$ is even. Thus, we establish that $\{ \delta_i\}_{i=1}^{2k^\ast +1}$ is single-crossing from below.
	
	For the subsequence $\{ \delta_i\}_{i=2k^\ast +1}^{2N-1}$, the single-crossing property is equivalent to the single-crossing property of  the following:
	\begin{equation*}
		\{\hat x_N-x_N, \hat s_{N-1}-s_{N-1}, \ldots,	\hat s_{k^\ast+1}-s_{k^\ast+1},\hat x_{k^\ast+1}-x_{k^\ast+1} \}.
	\end{equation*}
Note that $f(\cdot)/\hat f(\cdot)$ is increasing on $[p,1]$. We apply a symmetric argument to prove the sequence listed above is single-crossing from below.	Combining the single-crossing property of $\{ \delta_i\}_{i=1}^{2k^\ast +1}$ and $\{\delta_i\}_{i=2k^\ast +1}^{2N-1}$  establishes the desired result.
\end{proof}

\begin{figure}[t]
	\centering
	\includegraphics[width=11cm]{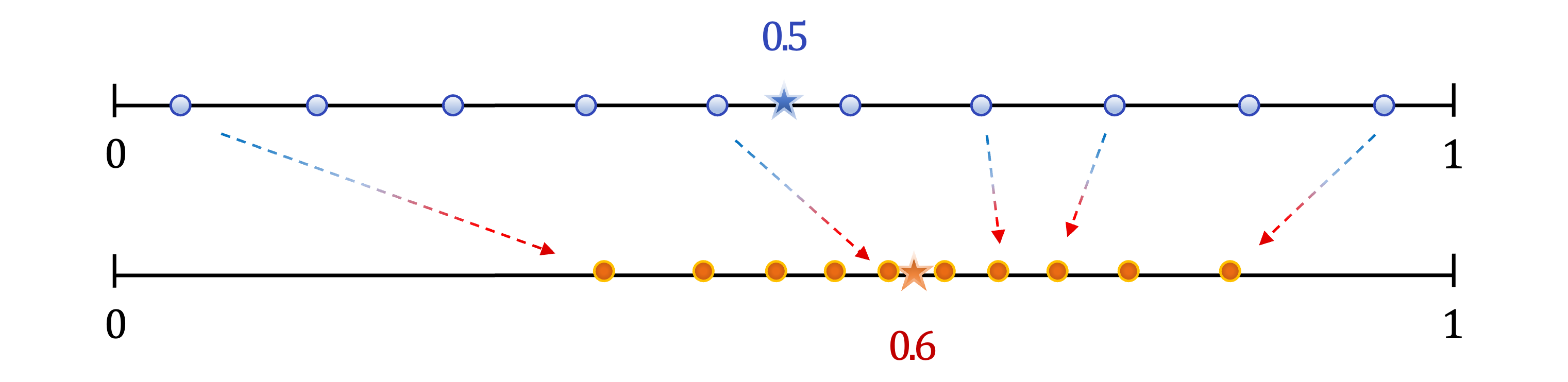}
	\caption{
		Comparative statics under uniform variability order.
	}
	\label{fig:dispersedinfostructure}
\end{figure}

An implication of Proposition \ref{csforcutoffsuv} is that $\{\hat s_k\}_{k=1}^{N-1}$ crosses $\{s_k\}_{k=1}^{N-1}$ at most once and from above. If they indeed cross, there exists a $n^*$ such that all cutoffs to the left of $s_{n^*}$ shift up and all cutoffs to the right of $s_{n^*}$ shift down.
See Figure \ref{fig:dispersedinfostructure} for an illustration. This figure shows that a uniformly less variable density will cause the sequence of optimal partitional cutoffs to be more compressed.
Moreover, because $s_1$ increases and $s_{N-1}$ decreases, the widths of the intervals other than the outermost ones must become narrower on average. In this sense, the designer is giving more scrutiny to the middle part of the state distribution when either $f$ or $u''$ becomes uniformly less variable. 
Intuitively, a less variable $f$ or $u''$ means either the likelihood or the relative information value is now more compressed toward the mode. Naturally the designer should respond by compressing the interval partition to be finer
around the mode as well.

\section{Coarse Mechanism Design} 
\label{nonlinearpricing}

Our characterization of dual expectations has implications beyond persuasion models. 
In particular, we can adapt the minorant function approach to study mechanism design problems where the agent's payoff is linear in 
her private type and where the contract space is constrained to be finite. We illustrate this with a simple model of nonlinear pricing with a finite menu.\footnote{
\cite{Wong2014} uses the first-order approach to derive results for this problem. He mainly focuses on the marginal benefit from increasing the size of the menu, and discusses the approximation property when $N$ grows large. \cite{BergemannYehZhang2021} study a linear-quadratic version of this model using the quantization approach. 
\cite{BergemannHeumannMorris2025} 
show the optimality of endogenously coarse menu when the seller can jointly design the information structure and selling mechanism.}   
Our main objective is to demonstrate the transferability of our analytical framework to an analogous coarse mechanism design problem.

Consider the classic nonlinear pricing model of \cite{MussaRosen1978}. A seller supplies products to a continuum of buyers with unit demand. Each buyer has private information about 
her preference type $\theta \in [0,1]$, distributed according to $F$ with a continuous density function $f$. The seller designs a menu that specifies a set of quality and price pairs $\{(q_i,p_i)\}_{i \in \Sigma}$. A buyer with type $\theta$ who chooses a specific pair $(q,p)$ from the menu receives the payoff $ \theta v(q) - p $. Seller's production cost of supplying quality $q$ is $c(q)$.
For simplicity, we assume $v$ and $c$ are twice differentiable with $v'>0$, $v''<0$, $c'>0$, $c''>0$, and $v(0)=c(0)=0$.  
Additionally, we impose the constraint that the seller can only offer menus with a finite number of options, i.e., 
$|\Sigma| \le N$.
Throughout our discussion, we focus on the profit-maximizing menu for the seller; the analysis can be readily adapted to study welfare maximization by a planner.

By the revelation principle, we can focus on direct mechanisms $\{q(\theta),p(\theta)\}$, where $q(\cdot)$ and $p(\cdot)$ are step functions because the menu is constrained to be finite.
In fact, we can further restrict our attention to (weakly) increasing allocation rule $q(\theta)$, as any such rule is implementable by adjusting the transfers $p(\theta)$.
For each incentive compatible allocation $q(\theta)$, the seller's profit takes the well-known formula,
\begin{equation}
	\label{mechanism}
	\int_0^1 \left[ v\left( q\left(\theta\right)\right) \left(\theta- \frac{1-F\left(\theta\right)}{f\left(\theta \right)}\right)  -c\left(q(\theta)\right)\right] \, \rmd F\left( \theta\right). 
\end{equation}
Define $\varphi:= \theta -(1-F(\theta))/f(\theta)$ to be the virtual valuation, and let $H$ be its associated distribution on $[0,1]$. Throughout the section, we assume that virtual valuation is strictly increasing in type, and $H$ admits 
a continuous density function $h$ with full support. 
Henceforth, we directly work with virtual valuation.

With slight abuse of notation, let $q(\varphi)$ represent the quality allocation to virtual type $\varphi$ in a finite menu, and define 
$\underline{\pi}(\varphi) := \varphi v(q(\varphi))-c(q(\varphi))$ correspondingly.
Formally, the coarse mechanism design problem is the following:
\begin{align}
	\label{CMD}\tag{CMD}
	&\max_{q(\cdot)} \quad  \int_0^1 \underline{\pi}(\varphi)\,\mathrm{d}H(\varphi)\\
	\qquad\qquad\qquad &\text{s.t.} \quad q(\cdot) 
	\text{ is nondecreasing},  \tag{Monotonicity} \label{Monotonicity}\\
	&\phantom{\text{s.t.}} \quad  q([0,1])  \text{ has at most $N+1$ elements}. \tag{Finite Menu} \label{finitemenu} 
\end{align} 

The program is essentially to pick a monotone, finite allocation rule $q(\varphi)$ to maximize the expected virtual surplus.
Exclusion of consumers is captured by $q(\varphi)=0$. Therefore 
an $N$-option menu is equivalent to an $N+1$-valued allocation rule.
Since the allocation rule has to be monotone, the optimal menu partitions the (virtual) type space into $N+1$ subintervals with a sequence of cutoff types, 
$0=s_0<s_1< \ldots < s_{N+1}=1$. 
Within each type segment 
$(s_{k-1},s_{k})$  where $k=2,\ldots,N+1$,
the allocation assigns $q_k$ uniformly.  Consumers with virtual types in $[0,s_1)$ are excluded and are assigned $q_1=0$.

To leverage the dual expectations property in this coarse mechanism design problem, let
\begin{equation}\label{md}
	\pi(\varphi) :=  \max_{q\ge 0 } \ \varphi v (q)-c(q)  
\end{equation}
be the pointwise maximal virtual surplus on $\varphi\in [0,1]$. By the smoothness of $v$ and $c$, $\pi$ is twice-continuously differentiable. 
Let $q^*(\varphi)$ be the unconstrained optimal allocation to (\ref{md}).  
Because the pointwise profit is linear in the consumer's virtual type, $\pi(\varphi)$ is increasing and convex on $[0,1]$. 
Moreover, 
for any $0 \le a < b \le 1$, 
we define\footnote{
	Again by a scaling argument and by the convexity of $\pi$, it is without loss of generality to normalize its derivative $\pi'$ to be a cumulative distribution function. }
\begin{align*}
	\phi^m(a,b) & := \bbE_{H}\left[ t\, \middle \vert\, t \in \left( a,b \right)\right],\\
	\mu^m(a,b) & := \bbE_{\pi'}\left[ t\, \middle \vert \, t \in \left(a,b \right)\right] . 
\end{align*}

\begin{proposition}
	The solution to program (\ref{CMD}), characterized by $\{ s_k\}_{k=0}^{N+1}$ and $\{q_k \}_{k=1}^{N+1}$, 
	must satisfy:
	\begin{align}
		x_k &= \phi^m(s_{k-1},s_{k}) \qquad \text{for } k=2,\ldots,N+1; \label{CMD-signal} \tag{CE-$H$}
		\\
		s_k &= \mu^m(x_k,x_{k+1}) \quad \ \ \, \text{for } k=1,\ldots,N;
		\label{CMD-cutoff} \tag{CE-$\pi'$}
		\\
		q_k &= q^\ast \left( x_k \right)  \qquad  \qquad   \text{for } k=2,\ldots,N+1. \label{CMD-allocation} \tag{Quality}
	\end{align}
	with $x_1=0$ and $q_1=0$. 
\end{proposition}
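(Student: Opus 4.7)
The plan is to reduce the coarse mechanism design problem to a coarse information design problem and then appeal to Theorem~\ref{dualexpectation}. The crucial observation is that the pointwise virtual surplus $\varphi v(q) - c(q)$ is affine in $\varphi$. Hence, for any partition $\{s_k\}_{k=0}^{N+1}$ and any constant allocation $q_k$ on each interval $(s_{k-1},s_k)$, the expected virtual surplus contribution from that interval equals
\begin{equation*}
    [H(s_k) - H(s_{k-1})] \bigl[ x_k v(q_k) - c(q_k)\bigr],
\end{equation*}
where $x_k := \phi^m(s_{k-1},s_k)$. This yields (\ref{CMD-signal}) immediately for $k \ge 2$. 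Optimizing $q_k$ pointwise now reduces to the defining maximization of $\pi$ in (\ref{md}), so $q_k = q^\ast(x_k)$, establishing (\ref{CMD-allocation}); the maximized contribution per unit mass is exactly $\pi(x_k)$.

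Substituting these optimal qualities back into (\ref{CMD}) reduces the seller's problem to choosing cutoffs alone, with total profit $\sum_{k=1}^{N+1} [H(s_k) - H(s_{k-1})] \pi(x_k)$. This is precisely the objective of a coarse information design problem as studied in Section~\ref{section:Convex}, with prior $H$ on $[0,1]$, convex value function $\pi$, and signal budget $N+1$. Applying Theorem~\ref{dualexpectation} to this reduced problem delivers $s_k = \mu^m(x_k, x_{k+1})$ for $k = 1, \ldots, N$, which is (\ref{CMD-cutoff}).

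The boundary normalization $x_1 = q_1 = 0$ captures the exclusion option. Letting $\varphi_0 := c'(0)/v'(0)$, we have $q^\ast(\varphi) = 0$ on $[0,\varphi_0]$, so $\pi$ is constant on that region. If the first interval lies within $[0,\varphi_0]$, its contribution to the objective vanishes regardless of the nominal signal assigned to it, so we may fix $x_1 = 0$ and $q_1 = q^\ast(0) = 0$ without loss.

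The principal obstacle is that $\pi$ is only weakly (not strictly) convex on $[0,\varphi_0]$, whereas Theorem~\ref{dualexpectation} is stated for strictly convex value functions. Its proof via the minorant-function construction only exploits convexity, however, and the ``density'' $\pi''(\cdot)/(\pi'(x_{k+1})-\pi'(x_k))$ remains a well-defined probability density on $(x_k,x_{k+1})$ whenever the interval intersects the strictly convex region $(\varphi_0,1]$---which holds for every $k \ge 2$ as long as the corresponding qualities are strictly positive. Hence the dual-expectations characterization extends cleanly to the present setting, and monotonicity of $q^\ast(\cdot)$ in $\varphi$ automatically yields (\ref{Monotonicity}) from the ordering of the $x_k$'s.
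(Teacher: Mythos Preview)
Your reduction—optimize each $q_k$ pointwise to get $q_k = q^\ast(x_k)$, then recognize the residual cutoff-choice problem as coarse information design with prior $H$ and convex value function $\pi$—is elegant and genuinely different from the paper's route. The paper instead works directly with the induced virtual-surplus function $\underline{\pi}(\varphi) = \varphi v(q(\varphi)) - c(q(\varphi))$, essentially replaying the minorant-function argument of Theorem~\ref{dualexpectation} in place: it shows by contradiction that the optimal $\underline{\pi}$ must be continuous at every cutoff (else one can shift the cutoff and raise $\underline{\pi}$ uniformly) and tangent to $\pi$ at the conditional mean of each non-excluded segment (else adjusting $q_k$ raises the segment's contribution), then invokes Corollary~\ref{corollary:conditional_expectation} to obtain (\ref{CMD-cutoff}). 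Your approach has the virtue of exhibiting CMD as a literal instance of CID so that results transfer wholesale; the paper's is more self-contained and treats the exclusion segment explicitly.

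That said, your handling of the $k=1$ boundary has a gap. Applying Theorem~\ref{dualexpectation} to an unconstrained $(N{+}1)$-signal problem yields $s_1 = \mu^m\bigl(\phi^m(0,s_1),\, x_2\bigr)$, not $s_1 = \mu^m(0, x_2)$ as the proposition requires. These coincide only if the free first signal $\phi^m(0,s_1)$ lands in the flat region $[0,\varphi_0]$—a hypothesis you state (``if the first interval lies within $[0,\varphi_0]$'') but do not verify, and which need not hold since the optimal $s_1$ typically exceeds $\varphi_0$ and its conditional mean under $H$ may as well. The root issue is that the menu carries at most $N$ \emph{positive} options, so $q_1=0$ is a constraint your reduction silently drops when it lets all $N{+}1$ qualities be chosen freely. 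The paper sidesteps this by fixing the first piece of $\underline{\pi}$ at zero; continuity at $s_1$ then reads $\pi(x_2) + \pi'(x_2)(s_1 - x_2) = 0$, which is exactly $s_1 = \mu^m(0, x_2)$ because $\pi(0)=\pi'(0)=0$. Your argument can be repaired along the same lines, or by casting the problem as $N$-signal CID on an endogenous domain $[s_1,1]$ and optimizing over $s_1$ separately.
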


\begin{proof}
	By definition, $\underline{\pi}(\varphi)$  is weakly below $\pi(\varphi)$ for any $\varphi \in [0,1]$.  Our goal is to show that the optimal $q(\varphi)$ induces a virtual surplus function $\underline{\pi}(\varphi)$  that shares the same structure as the minorant function $\underline{u}$ identified in the corresponding coarse information design problem, except for the first segment $(0,s_1)$ where there is no trade.	Any incentive-compatible $N+1$-quality allocation rule $q(\varphi)$ would induce some $\underline{\pi}$ which is piecewise affine with $N+1$ pieces, including a horizontal piece in the first segment. We show that, under the optimal $q(\varphi)$, the corresponding $\underline{\pi}$  satisfies the following two properties: (a) it is continuous at $s_k$ for all $k=1,\ldots,N$ and thereby is also convex; and (b) on each type segment $\left(s_{k-1},s_{k}\right)$ with $k=2,\ldots,N+1$, it is tangent to $\pi$ at the conditional expectation of the corresponding segment $x_k$. 
	
	For part (a), suppose the surplus function $\underline{\pi}$ induced by an allocation $q(\cdot)$ is discontinuous at some cutoff point $s_k$, as depicted in the left panel of Figure \ref{efficientmenu}. Then we can construct another incentive-compatible allocation $\hat q(\cdot)$, which supplies higher quality to consumers in $(\hat s_k, s_{k})$ while the allocations to other consumers are unchanged. The corresponding $\underline{\hat {\pi}}$  function (shown by the blue dotted line for $\varphi \in (\hat s_k,s_k)$ in Figure \ref{efficientmenu}) for this modified allocation is everywhere higher than $\underline{\pi}$ for the original allocation. This would yield a strictly higher profit, hence a contradiction.	The convexity of the optimal $\underline{\pi}$ then follows directly from continuity and from the fact that the slope of $\underline{\pi}$ is equal to $v(q(\varphi))$, where $q(\cdot)$ must be nondecreasing for incentive compatibility.
	
	\begin{figure}[t]
		\begin{minipage}{0.45\textwidth}
			\centering
			\includegraphics[width=\textwidth]{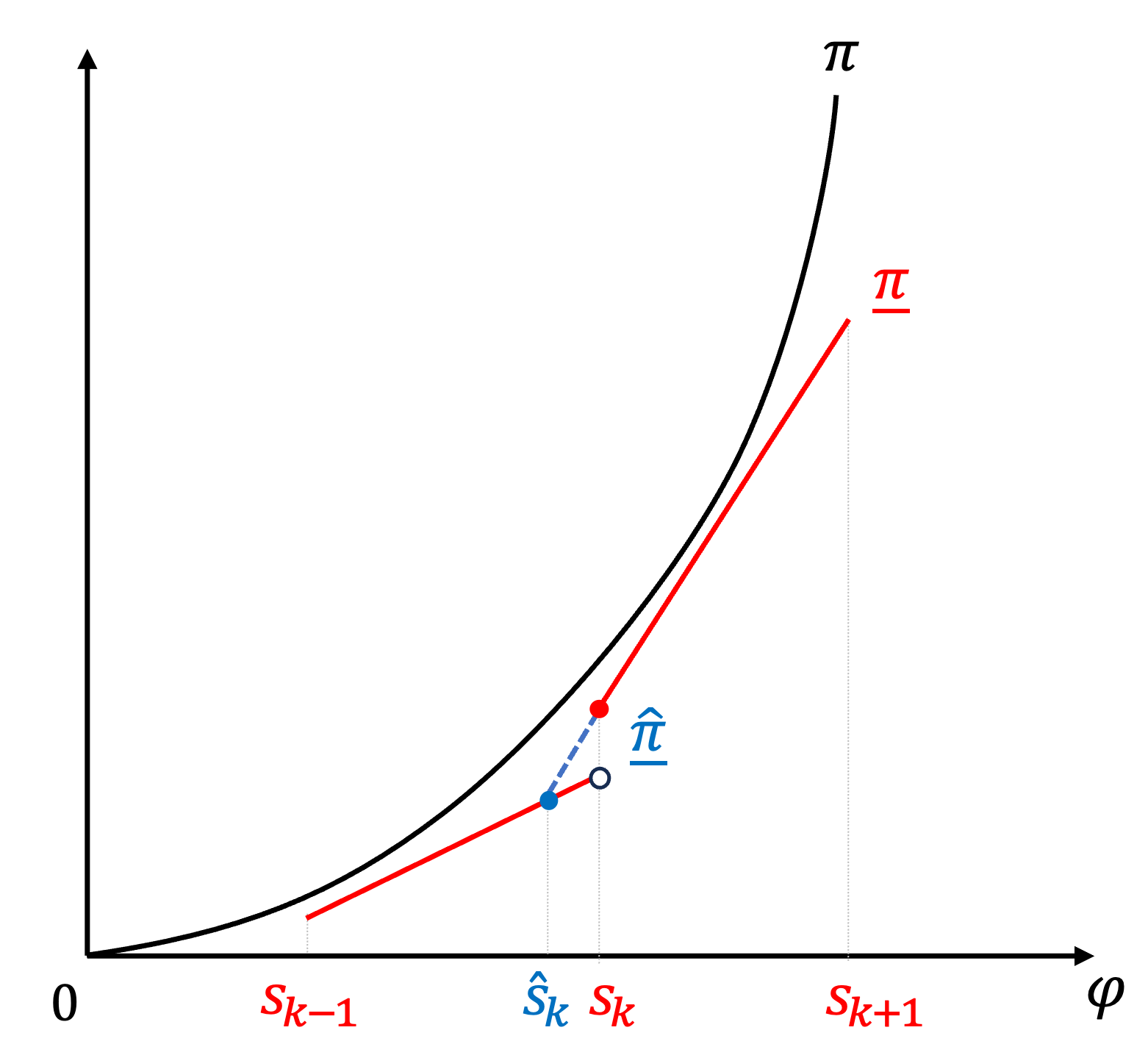} 
		\end{minipage}\hfill
		\begin{minipage}{0.45\textwidth}
			\centering
			\includegraphics[width=\textwidth]{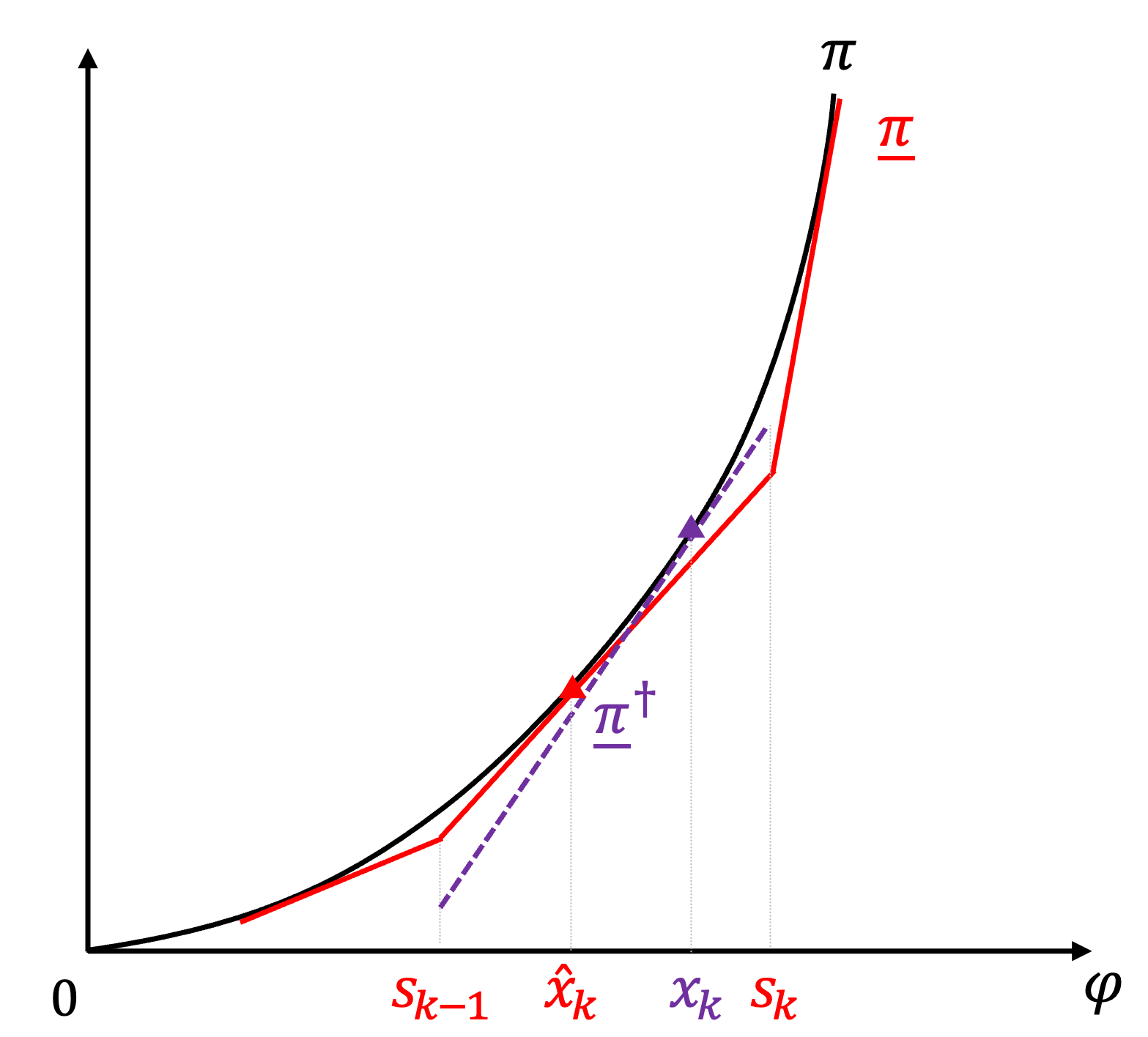} 
		\end{minipage}
		\caption{\small Properties of optimal $\underline{\pi}$:
it is continuous (left panel) and is tangent to $\pi$ at $x_k=\phi^m(s_{k-1},s_k)$ (right panel).
}
		\label{efficientmenu}
	\end{figure}
	
	For part (b), first observe that each piece of $\underline{\pi}$ must be tangent to $\pi$ at some interior point, for otherwise the seller can adjust the allocation to raise $\underline{\pi}$ and increase profit. It remains to be proven that the optimal allocation for the segment $(s_{k-1},s_k)$  must equal the optimal allocation for the average virtual type of that segment, i.e., $q_k=q^*(x_k)$.	Suppose not, and instead $\underline{\pi}$ is tangent to $\pi$ at some point $\hat{x}_k\ne x_k$. Then $\underline{\pi} (\varphi)= \pi(\hat x_k) +  \pi'(\hat x_k) (\varphi -\hat x_k )$ (shown as the red line in the right panel of Figure \ref{efficientmenu}). We can construct another allocation $q^\dagger(\cdot)$ such that $q_k^\dagger= q^\ast(x_k)$ with the corresponding surplus function $\underline{\pi}^\dagger$ tangent to $\pi$ at $x_k$ (shown as the purple dotted piece), and is otherwise equal to the original $q$.  Then,
	\begin{align*}
		\int_{s_{k-1}}^{s_k} \underline{\pi}(\varphi) \,\mathrm{d}H(\varphi) &= \left[H(s_k)-H(s_{k-1})\right]\left[\pi(\hat{x}_k)+ \pi'(\hat{x}_k)(x_k-\hat{x}_k)\right]\\
		&< \left[H(s_k)-H(s_{k-1})\right]\pi(x_k)=\int_{s_{k-1}}^{s_k} \underline{\pi}^\dagger \,\mathrm{d}H(\varphi),
	\end{align*}
where the inequality follows from the strict convexity of $\pi$.
This contradicts the optimality of the allocation corresponding to the original $\underline\pi$ function. Therefore $q_k = q^\ast \left( x_k \right) $ must hold.
	
	Consequently, the virtual surplus function $\underline{\pi}$ under the optimal quality allocation must be a convex and piecewise affine function. We can again use Lemma \ref{lemma:conditional_expectation} to conclude that (\ref{CMD-cutoff}) must hold.
\end{proof}

Our analysis implies that the optimal 
$\underline{\pi}$  exhibits the same property as the minorant function that we introduce before. 
For $k=2,\ldots,N+1$, the optimal menu allocates goods with quality $q_k=q^*(x_k)$ to consumers with virtual types belonging to the segment $(s_{k-1},s_k)$, where $x_k$ is the expected type of this segment.   Consumers with virtual types $\varphi \in (0,s_1)$, together with those with negative $\varphi$, are excluded and receive the outside option.  
Moreover, for $k=1,\ldots,N$, since the interval cutoff $s_k$ is a kink-point of $\underline{\pi}$, we have $s_k$ as the expected value of the interval $(x_k, x_{k+1})$ under the measure $\pi'$. Hence, the dual expectations characterization is still valid for the finite-menu nonlinear pricing problem, and the remaining results in Section \ref{section:Convex} continue to hold (where the prior distribution is interpreted as the distribution of virtual type). 

\begin{mainexample}[Finite menu and consumer exclusion]\label{ex:cmd}
	Suppose $v(q)= q^\beta $ and $c(q)=\gamma q$ for 
	$\beta\in (\frac{1}{2},1)$.  The unconstrained optimal allocation for the seller is $q^*(\varphi)=(\beta\varphi/\gamma)^{1/(1-\beta)}$, and the ``value function'' is
$\pi(\varphi)=(1-\beta)(\gamma/\beta)(\beta\varphi/\gamma)^{1/(1-\beta)}$.  Let the value of the good change to $\hat v(q)=q^{\hat\beta}$, with $\hat \beta > \beta$. Note that $\hat v(q) > v(q)$ for all $q > 1$.\footnote{ 
The seller's optimal menu will  offer $q_k > 1$ for all $k>1$ when the marginal cost $\gamma$ is sufficiently low. }  
The value function $\hat \pi (\varphi)$ corresponding to the higher valuation of quality satisfies the property that $\hat \pi''/\pi''$
is increasing.  If the density of virtual valuation is logconcave on $[0,1]$, this setting is a logconcave environment.  By the same logic leading to Proposition \ref{csforcutoffslr}, all the interval cutoff points will move to the right when the benefit increases from $v(q)$ to $\hat v(q)$.  In particular, we have $\hat s_1 > s_1$.  This means that the seller's optimal menu will exclude a larger set of consumers, despite the fact that the high-quality good has become more valuable to all consumers. Intuitively, an increase in $\beta$ not only implies higher valuation for quality, but also leads to greater importance to screen the high virtual value buyers. When the screening device is constrained, the optimal finite menu naturally involves more discrimination towards high-end customers, leading to the exclusion of more low-end customers. This is a novel feature for finite menu design, as the coverage of customers in an unconstrained problem only depends on the distribution of virtual value and will not change when $v$ changes.
\end{mainexample}

\section{S-shaped Value Functions}
\label{section:s-shaped}

The analysis so far is based on the premise that the value function $u$ is convex, under which the designer wishes to transmit as much information as possible. 
In this section, we extend our analysis to S-shaped value functions, where the designer faces the tension between providing information on the convex region and suppressing it on the concave region.

Formally, a function $u$ is \emph{S-shaped} if there exists some inflection point $t \in (0,1)$ such that $u$ is convex on $(0,t)$ and concave on $ (t,1)$. In other words, $u''$ is single-crossing from above.\footnote{ 
With obvious modification, the analysis in this section also extends to the case where $u''$ is single-crossing from below.}
S-shaped value functions have been extensively studied in persuasion literature because they capture a range of economic applications. The unconstrained optimal information structure $G^\ast$ features ``upper-censorship,'' namely the optimal experiment without the finiteness constraint will reveal full information on $[0,s^*)$  and coarsen information by pooling all states in $(s^*,1]$ for some $s^* < t$ \citep{KolotilinMylovanovZapechelnyuk2022}.
In this section, we rule out the uninteresting case where the unconstrained solution reveals no information.

Under S-shaped value functions, the dual expectation characterization, the single-dipped property, as well as the comparative statics results, remain true when the relevant conditions are properly modified. In particular, when the value function $u$ is not convex, $u''$ can take negative values and may not be a valid density.  
Nevertheless, it can be interpreted as the ``density'' corresponding to a \emph{signed measure}.  For any $0\le a<b\le 1$, define its \emph{barycenter} under signed measure $u'$ by 
\begin{equation}
	\mu(a,b) :=	\frac{\int_a^b  \theta u''(\theta) \, \rmd \theta}{\int_a^b  u''(\theta) \, \rmd \theta}.
	\label{barycenter}
\end{equation} 
whenever $\int_a^b  u''(\theta) \, \rmd \theta \neq 0 $. 
We can therefore interpret $\mu(a,b)$ in equation (\ref{mu}) as the barycenter of the set $(a,b)$ under signed measure $u'$.  

\begin{proposition}\label{dual-S}
	Suppose $u$ is S-shaped. The optimal information structure $G$ is an interval-partitional structure with $|\mathcal{K}_{I_{G}}| =N$.  It satisfies the equation system (\ref{phi}) and (\ref{mu}).  Moreover, the optimal information structure $G$ is less informative than the unconstrained one $G^*$; in particular,
    $s_{N-1}< s^\ast$. 
\end{proposition}

\begin{proof}[Sketch of proof.]
	The proof of the optimality of interval-partitional structure follows a similar logic as the proof of Proposition \ref{convexinterval}: raising $I_G$ increases the value of the objective function (\ref{auxiliaryprogram}) whenever $u'' > 0$.  Suppose $I_G$ corresponds to the optimal information structure
and there is a segment of $I_G$ that is not tangent to $I_F$ (so that $G$ is not interval-partitional). If $u''$ is not entirely positive on this segment (otherwise, the argument is similar),
then we consider an alternative information structure $I_{G'}$ which is a clock-wise rotation of $I_G$, with $I_{G'} > I_G$ to the left of the inflection point $t$ on the same segment and $I_{G'} < I_G$ to the right of the inflection point on the same segment.  This alternative information structure will increase informativeness locally in the convex region and reduce informativeness in the concave region and raise the value of (\ref{auxiliaryprogram}).
	
	The necessity of (\ref{phi}) and (\ref{mu}) follows from Propositions \ref{generaloptimal} and \ref{generalminorant}, which will be presented in Section \ref{generalsection} for general value functions.  The proof that    $s_{N-1} < s^*$ is provided in the online appendix.  
\end{proof}

In the information design problems without finiteness constraint, the optimal censorship threshold $s^*$ lies in the convex region, and it solves the designer's local trade-off between providing full information around the marginal state $s^\ast$ in the convex region and pooling it into the concave region. With the finiteness constraint, suppose the interval cutoff of the last interval is $s_{N-1}=s^*$, then the marginal gain from information provision around the same state diminishes since now $s^\ast$ cannot be perfectly revealed, but is instead pooled with the lower interval $(s_{N-2},s^\ast)$. As a result, to equalize the margin, the designer optimally pools more of the convex region with the concave region to diminish the marginal gains from censorship at the upper end. 
The optimal cutoff point under $G$ satisfies $s_{N-1}< s^\ast$.  
This implies that $G$ is less Blackwell-informative than $G^*$.   

The main challenge in generalizing the remaining results from the case of convex value functions arises because the key logconcavity property (Lemma \ref{keylemma}) no longer applies when the curvature $u''$ is
single-crossing. Nevertheless, 
a one-sided version of Lemma \ref{keylemma} still holds.

\begin{lemma} \label{keylemma-S}
	Suppose $u''$ is logconcave on $(0,t)$.  Then for any $0\le a< t<b \le 1$ and $\varepsilon>0$ such that $u'(b)-u'(a+\varepsilon)>0$, we have 
	\[
	\mu(a+\varepsilon, b) \leq \mu(a, b)+\varepsilon.
	\]   
\end{lemma}

This weaker condition, together with Lemma \ref{keylemma}, turns out to be sufficient for the remaining results because, under the optimal interval-partitional information structure, all interval cutoffs $\left\{s_k\right\}_{k=1}^{N-1}$ and 
the posterior means $\left\{x_k\right\}_{k=1}^{N-1}$ 
remain within the convex region of the value function. The only complication is 
that the last posterior mean $x_N$ may lie in 
either the convex region or in the concave region, and therefore requires a case-by-case verification. To maintain focus of the main text, we put the proof of the following proposition in an
online appendix.

\begin{proposition}\label{intensityorder-S}
	Suppose $u$ is S-shaped, $f$ is logconcave, and $u''$ is logconcave on $(0,t)$.  
	\begin{enumerate}
		\item[(i)] $\{\Delta_i\}_{i=1}^{2N-1}$ is single-dipped.
		\item[(ii)] The solution to the equation system (\ref{phi}) and (\ref{mu}) with $s_{N-1}<t$ is unique.
		\item[(iii)] If (a) the density changes from $f$ to $\hat f$,  with $\hat f \succeq_{lr} f$; or (b) the value function changes from $u$ to $\hat u$, with $\hat u'' \succeq_{lr} u''$  and $\hat u''$ is single-crossing at $\hat t$ and logconcave on $(0,\hat t)$, then all interval cutoffs and all induced posterior means will increase. 
		\item[(iv)] If (a) the density changes from $f$ to $\hat f$, with $ f \succeq_{uv} \hat f$; or (b) the value function changes from $u$ to $\hat u$, with $ u'' \succeq_{uv}\hat u''$
where
$\hat u''$ is single-crossing at $\hat t$ and logconcave on $(0,\hat t)$, then the new sequence of optimal posterior means and interval cutoffs, $\{\hat x_1,\hat s_1,\ldots,\hat s_{N-1},\hat x_N\}$, crosses the original sequence $\{x_1,s_1,\ldots,s_{N-1},x_{N}\}$ at most once and from above.
	\end{enumerate}
\end{proposition}

Part (i) of the proposition extends the single-dipped property to S-shaped value functions.  Intuitively, 
the single-dipped property under convex value functions means that the information designer puts less scrutiny on the two ends of the state space. When the value function is S-shaped, the last subinterval $(s_{N-1},1)$ covers the concave region and therefore should deserve even less scrutiny. Hence, the single-dipped property preserves.

Part (ii) of Proposition \ref{intensityorder-S} shows that the uniqueness result stated in Proposition \ref{unique} can be extended to S-shaped value functions as long as $u''$ is logconcave when positive and $f$ is logconcave given that $s_{N-1}<t$.  
Parts (iii) and (iv) mirror the comparative statics results in Propositions \ref{csforcutoffslr} and \ref{csforcutoffsuv}, respectively. Note that we allow $u''$ and $\hat{u}''$ to have different inflection points, $t$ and $\hat{t}$.  In Proposition \ref{intensityorder-S}(iii), $\hat u'' \succeq_{lr} u''$ implies $\hat t  \ge t$, meaning that $\hat u$ has a smaller concave region than $u$. In the extreme case, $\hat u$ can be convex (represented by some $\hat t>1$), and the statement remains valid. 
In Proposition \ref{intensityorder-S}(iv), $u'' \succeq_{uv} \hat u''$ implies either $t \le \hat t \le p$ or  $p \le \hat t \le t$. That is, if both inflection points are to the left of $p$, then $\hat u$ has a smaller concave region than $u$. Otherwise, if both inflection points are to the right of $p$, then $\hat u$ has a larger concave region than $u$.

\examplepart{b (continued)} 
In the purchase decision example, when the density $h$ of cost shock is single-peaked, the seller's value function $\hat u$ is S-shaped.
Proposition \ref{intensityorder-S}(iii) applies because $u'' \succeq_{lr} \hat u''$. Intuitively, a falling tail of the cost shock density $h$ implies that better information about higher states has diminishing returns, as the probability that the consumer will be on the margin between buying and not buying falls when mean quality is already high.  Compared to the previous case (Example \ref{pd}a), where $\hat u$ is convex and the density $h$ is rising, the seller has even less incentive than does the buyer to provide finer information on higher states.

\section{General Value Functions}
\label{generalsection}

The coarse information design problem with convex or S-shaped value functions is relatively tractable because interval-partitional information structures are optimal.  With general value functions,
it is possible that the optimal information structure entails information policies that send more than one signal within a partitional subinterval, and the analysis would become more involved.  In this section, we restrict our discussion to interval-partitional information structures, and provide a generalization of the dual expectations property established under convex and S-shaped value functions in previous sections.
Moreover, when the interval partition is indeed optimal among all possible information structures, we establish a general convexity property about the minorant function.

Consider interval-partitional information structures with $N$ intervals.  The minorant function associated with the optimal information structure within this set must be continuous. 
Yet it is not necessarily everywhere below the value function.

\begin{proposition} 
	\label{generaloptimal}
	For any general value function $u$, the optimal interval-partitional information structure with $N$ intervals, characterized by $\{ s_k \}_{k=0}^{N}$ and $\{ x_k \}_{k=1}^{N}$, must satisfy:
	\begin{itemize}
		\item[(i)]  $x_k = \phi(s_{k-1},s_{k})$, for $k=1,\ldots,N$; 
		\item[(ii)]  $u(x_{k})+u'(x_{k}) (s_k-x_{k})=u(x_{k+1})+u'(x_{k+1}) (s_k-x_{k+1})$, for $k=1,\ldots,N-1$.
	\end{itemize}
\end{proposition}

The second part of Proposition \ref{generaloptimal} establishes the continuity of the minorant function, which further generalizes the dual expectations property in the following sense.
When $u'(x_{k+1})\ne u'(x_{k})$, we can restate the continuity of minorant function 
as $s_k=\mu(x_{k},x_{k+1})$, where $\mu(\cdot,\cdot)$ is given by equation (\ref{barycenter}) that defines the the barycenter under the signed measure $u'$. 
Thus, the dual expectations property continues to hold as the barycenter is well defined.
On the other hand, if $u'(x_{k+1})=u'(x_{k})$,  then the continuity of the minorant function implies that  
\[
u'(x_{k})=\frac{u(x_{k+1})-u(x_{k})}{x_{k+1}-x_k}=u'(x_{k+1}).
\] 
In this case, the solution is characterized by a bi-tangent line, i.e., the minorant function is affine on $[s_{k-1},s_{k+1}]$.

For the next result, define for $x \in [x_1,x_N)$ the function,
\[
\overline{u}_G(x) :=
\sum_{k=1}^{N-1}\left[ u(x_k) + \frac{u(x_{k+1})-u(x_k)}{x_{k+1}-x_k} (x - x_k)\right] \mathbb{I}_{[x_k,x_{k+1})}(x),
\]
where $(x_1,\ldots,x_{N})$ are the induced posterior means under interval-partitional information structure $G$.  The graph of $\overline{u}_G$ is piecewise affine and connects the adjacent posterior means induced by $G$.  In the case that $u$ is convex, it is obvious that both $\overline{u}_G$ and $\underline{u}_G$ are convex and they satisfy:
\[
\overline{u}_G(x) \ge u(x) \ge \underline{u}_G(x).
\]
For general value functions $u$, the minorant function $\underline{u}_G$ need not be everywhere below $u$.  Nevertheless, the following properties hold.

\begin{proposition}\label{generalminorant}
	For any general value function $u$,  suppose an interval-partitional information structure $G$ with $N$ subintervals is optimal. Then 
	\begin{itemize}
		\item[(i)] $\overline{u}_G(x) \ge u(x)$ for all $x \in [x_1,x_N)$, and $\overline{u}_G$ is convex.

		\item[(ii)] The minorant function $\underline{u}_G$ is convex. Moreover, if the value function is convex or S-shaped, the minorant function has strictly increasing slopes for each affine segment, i.e., $u'(x_{k+1}) > u'(x_{k})$ for $k=1,\dots,N-1$. 
	\end{itemize}
\end{proposition} 

The dual expectations characterization for S-shaped value functions relies on 
the barycenter $\mu(\cdot)$ in equation (\ref{barycenter}) being well defined.  This is established further in Proposition \ref{generalminorant}(ii), because it ensures that the denominator in (\ref{barycenter}) is strictly positive.
Indeed, as long as the optimal discrete information structure $G$ is interval-partitional, 
the denominator in (\ref{barycenter}) is non-negative for general value functions.

\section{Discussion}
\label{Discussion}

\subsection{Convex piecewise affine value functions}\label{section:piece-wiseaffine} 

When the set of actions is finite, the value function becomes piecewise affine. In this case, the minorant function corresponding 
to an interval partition may not be unique because the subgradient at any kink of the value function is not unique. 
Nevertheless, we show that this concern is not an issue when the piecewise affine function is convex.  
An important implication is that 
the optimal interval partition will not generate a posterior mean that keeps the information designer indifferent between two adjacent actions (where a kink arises).

Suppose the value function $u$ has $M$ kinks. Then it can be represented by
\begin{equation*}
	u(x) = \sum_{m=0}^{M} \left[ u(\upsilon_m)+ u'(\upsilon_m)(x-\upsilon_m) \right] \mathbb{I}_{\left[\upsilon_{m},\upsilon_{m+1}  \right]} (x),
\end{equation*}
where $\{\upsilon_m\}_{m=0}^{M+1}$ includes all the kink points, and the end points $\upsilon_0=0$, $\upsilon_{M+1}=1$. 
Throughout this section, we use $u'(\cdot)$ to represent the right derivative whenever the left derivative is not equal to the right.
As before, we invoke the normalization that $u'(\upsilon_0)=0$ and $u'(\upsilon_M)=1$.
To avoid triviality, we assume that the number of signals $N$  satisfies $M \ge N \ge 2$. The next result shows that the posterior means of the optimal interval partition cannot fall into the set of kink points of $u$.

\begin{lemma}\label{kinks}
	Under the optimal interval partition $G^*$ for convex piecewise affine value functions, $
	\{x_i \}_{i=1}^N \cap \{ \upsilon_m\}_{m=1}^M = \emptyset.$
\end{lemma}

To understand the intuition, consider the case where the information designer is also the decision maker. 
He seeks to maximize the value of information. However, the value of information is minimized when the decision maker is indifferent between actions---which occurs at the kink points of $u$.  Therefore, if a subinterval yields a posterior mean that lies at a kink, the designer can always locally adjust the interval cutoffs to produce a different posterior mean, thereby obtaining a strictly higher information value.

With Lemma \ref{kinks},  we can uniquely pin down the minorant function in the same way as it is defined in equation (\ref{minorantu}),
where $u'(x_i)$ is just the derivative, 
since the left derivative coincides with the right derivative. 
Then we can apply exactly the same argument as in the smooth case to argue that, under the optimal interval partition, $\underline{u}_{G^\ast}(x)$ must be continuous.
Hence every $\{ s_i\}_{i=1}^{N-1}$ must be a meeting point of two affine segments of the value function. Let  $\mathcal{C}$ be the set of points where any two affine pieces meet with each other:
\[
\mathcal{C}= \left\{x \in [0,1]: u(\upsilon_i)+ u'(\upsilon_{i})(x-\upsilon_i)=u(\upsilon_j)+ u'(\upsilon_{j})(x-\upsilon_j) \text{ for } 
0\le i<j\le M\right\}.
\]

\begin{corollary}
	Under optimal information structure $G^\ast$ for convex piecewise affine value functions,   $\{ s_i\}_{i=1}^{N-1} \subseteq \mathcal{C}$.
\end{corollary}

Under the optimal information structure $G^\ast$, all interval cutoffs $\{ s_i\}_{i=1}^{N-1}$ are selected from the set $\mathcal{C}$. Hence, the number of information structures that 
we can search over to find the optimal one is finite. 

\subsection{Bimodal curvature and double-dipped interval widths}
\label{section:Beyond}

A logconcave environment ensures both $f$ and $u''$ are single-peaked, which in turn implies that the interval widths in the optimal information structure are single-dipped. Whether the value function curvature $u''$ is single-peaked or not depends on the nature of the underlying decision problem.  However, one of the basic insights of this paper---namely, the optimal information structure pays more scrutiny to the part of the state space where value function curvature is relatively high---does not depend on the logconcavity assumption. 

To illustrate, suppose a decision maker chooses to consume $a$ units of a good to maximize her 
expected utility $\mathrm{E}[B(a)-\theta a]$, 
where $B(a)$ is the intrinsic benefit from consumption, and $\theta$ represents per-unit cost. 
Her indirect value function will have curvature $u''(x)=-\mathrm{d}a^*(x)/\mathrm{d}x$, where $a^*(x)$ is the ``demand'' for this good when the expected cost is $x$.  The shape of the demand function $a^*(\cdot)$ generally depends on $B'^{-1}(\cdot)$, the inverse of the marginal benefit function.  Panel (a) of Figure \ref{numerical} shows a demand function that takes a logit form.  In this panel, demand is most sensitive to cost at intermediate levels of $x$.  If the density $f$ of the state is uniform, then the center of scrutiny is near 0.5.  Panel (b) shows a different possibility, where demand is more cost-sensitive for values of $x$ near 0 and 1.  The value function curvature $u''$ corresponding to this demand pattern is bimodal.  The corresponding optimal information structure exhibits interval widths which are double-dipped: the decision maker pays closest scrutiny to states near 0 and 1 because 
her action responds more to the state at the two ends of the state distribution. Finally, if the demand function is linear, then $u''$ will be constant.  With uniform prior density $f$, linear demand will lead to an optimal information structure in which all the intervals are evenly spaced.

\begin{figure}[t]
	\centering
	\includegraphics[width=10cm]{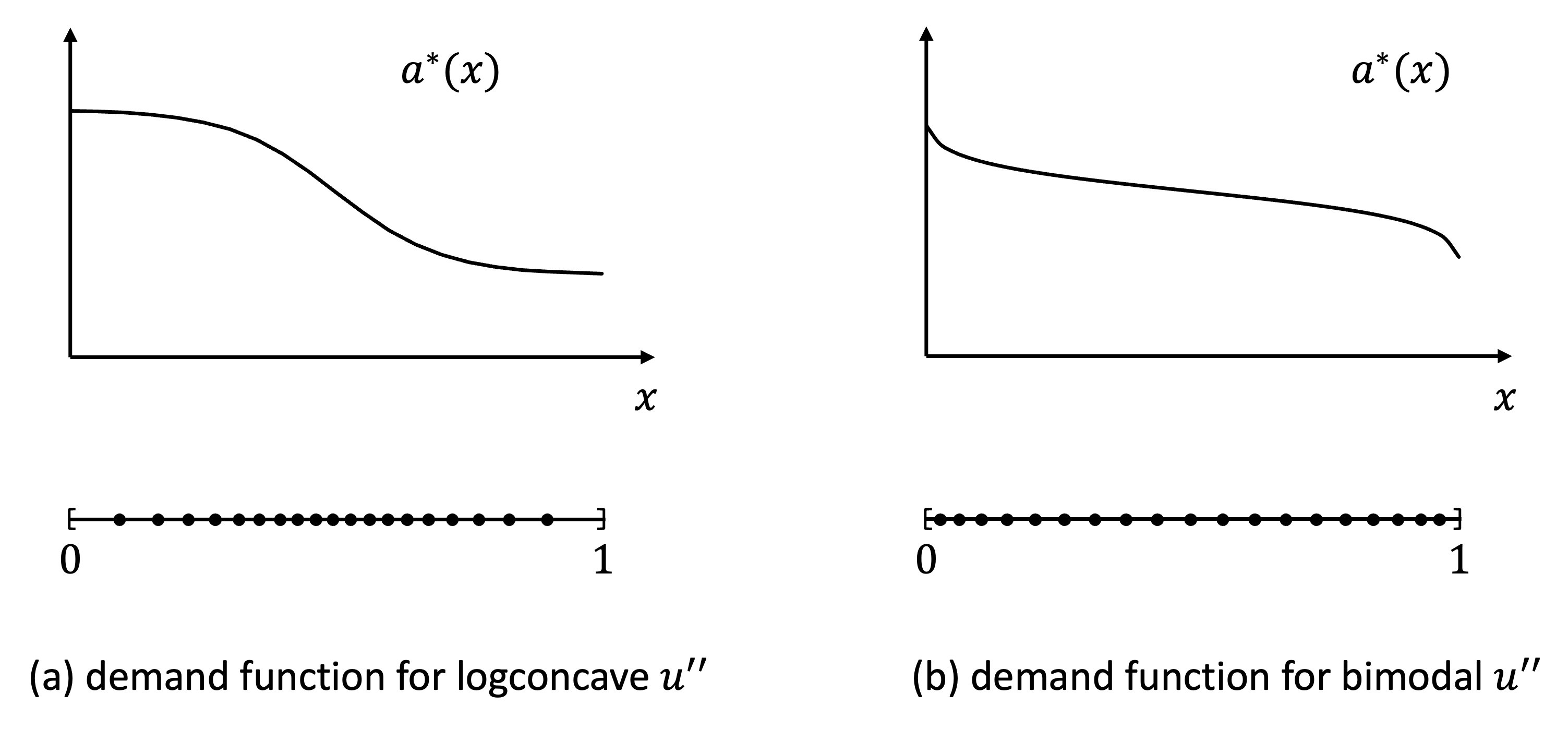}
	\caption{\small The demand function in panel (a) is $a^*(x)=\frac{1}{1+e^{10x+5}}+\frac{1}{2}$; it is most cost-sensitive near the middle part of the state distribution.  The demand function in panel (b) is $a^*(x)=\log\frac{0.99+0.98x}{0.01+0.98x} + 5$; it is most cost-sensitive near the two ends of the state distribution.  The line segments at the bottom of these two panels show the corresponding optimal partitioning of the state space.}
	\label{numerical}
\end{figure}

\subsection{Connection with cheap talk}
\label{section:Cheaptalk}

Our model of coarse information design bears some resemblance to the standard cheap talk model of \cite{CrawfordSobel1982}, where information coarsening arises endogenously due to incentive compatibility constraints. In \cite{CrawfordSobel1982}, 
to verify an interval-partitional information structure is an equilibrium strategy, we only need to verify that the cutoff types are indifferent between two adjacent actions taken by the receiver under two adjacent posterior means.  For example, suppose that the sender's payoff is 
$-(a-\kappa_1(\theta-\kappa_0))^2$ 
and the receiver's payoff is $-(a-\theta)^2$, 
where $a$ is the action taken by the receiver,
and $\kappa_1 \ge 1$, $\kappa_0 \in [0,1]$.
Then, the cheap-talk equilibrium, given a fixed number of signals $N$, is determined by the following system of equations:
\begin{align*}
	x_k &= \phi(s_{k-1},s_{k}) \qquad  \text{for } k=1,\ldots,N;   \\
	s_k &= \frac{x_k+x_{k+1}}{2\kappa_1}+\kappa_0
	\qquad \text{for } k=1,\ldots,N-1. 
\end{align*}
The second set of equations are the indifference conditions for the cutoff types.  They are different from the optimality conditions (\ref{mu}) for the coarse information design problem.
Nevertheless this system of equations is still a monotone mapping satisfying Lemma \ref{keylemma} if the prior density $f$ is logconcave. This means that some of our results should align with the equilibrium characterization in this cheap talk model, such as the existence of a center of scrutiny. 
The insight driving this result in the cheap talk model, 
however, differs from ours. 
Specifically, the sender's bias is smaller when $\theta$ is 
close to $\kappa_0$ than when it is farther away  \citep{Gordon2010,DeimenSzalay2019,DeimenSzalay2023}.
Hence, credibility is easier to obtain for states close to $\kappa_0$ and the intervals are narrower in that part of the state space.
In contrast, in our coarse information design problem, the center of scrutiny is mainly determined by the value function curvature and the prior density of the state. 
With quadratic payoffs, the value function curvature $u''(x)$ of the sender is a constant that does not depend on $x$.
Hence, the values of the parameters $\kappa_1$ and $\kappa_0$ have no effect on the optimal information structure.
Indeed, in this example, the sender-optimal coarse experiment with $N$ signals is identical to the receiver-optimal coarse experiment with the same number of signals, because the value function curvature of the receiver is also independent of $x$.

Recent developments in cheap-talk games have also explored the optimal information design by the sender. This literature assumes that an uninformed sender commits to an experiment, and sends an unverifiable message about the experimental outcome.  
When the loss function is quadratic with 
$\kappa_1=1$, the sender's problem can be transformed into a problem similar to ours,  with an additional set of truth-telling constraints that require the distance between every two adjacent messages to be at least $2\kappa_0$, and with the number of messages $N$ being endogenously determined through this constraint. 
As elaborated in \cite{KreutzkampLou2024},
this additional constraint makes bi-pooling a possible candidate solution despite the quadratic value function. 
In our model, $N$ is exogenously fixed and there are no truth-telling constraints. Hence, even with the same payoff structures, the optimal experiment in our setup has an interval-partitional structure.

\section{Conclusion}

The literature on Bayesian persuasion often assumes no restrictions on the set of experiments that can be chosen, making the information design problem trivial when the value function is convex. One approach that has been taken to relax this assumption is to introduce a posterior-based cost of acquiring information that depends on the experiment chosen 
\citep{CaplinDean2015,Matejka2015,BloedelSegal2021, RavidRoeslerSzentes2022, MatyskovaMontes2023};
see \cite{Denti2022} and \cite{DentiMarinacciRustichini2022}
for a critical assessment of that approach.  In this paper, we adopt an alternative track by imposing a finiteness constraint on the signal space to reflect the coarseness of information structures.  We take information coarsening as given and consider the best way of doing it.  This approach naturally leads to a research agenda that investigates which parts of the state space deserve most attention or scrutiny in information acquisition decisions.	Our paper only takes a first stab at this research agenda by considering a simple environment with one-dimensional state where the belief about the state matters only through the mean.  Many questions regarding sequential information acquisition,  competitive and complementary constrained information provision, or coarse information design under more varied payoff or informational environments remain to be explored.

\newpage 
\appendix

\section*{Appendix}

\begin{proof}[{\bf Proof of Lemma \ref{keylemma}}]

	Take any $\eps \ge 0$.  We can write
	\begin{align*}
		\phi(a+\eps,b+\eps) &= \int_{a+\eps}^{b+\eps} x \frac{f(x)}{F(b+\eps)-F(a+\eps)}\,\rmd x = \int_a^b (x' + \eps) \frac{f(x'+\eps)}{F(b+\eps)-F(a+\eps)}\,\rmd x', \\
		\phi(a,b)+\eps  &= \int_a^b (x'+\eps) \frac{f(x')}{F(b)-F(a)} \,\rmd x'.
	\end{align*}
	The difference is
	\begin{equation*}
		\int_a^b x\left(\frac{f(x+\eps)}{F(b+\eps)-F(a+\eps)}-\frac{f(x)}{F(b)-F(a)}\right)\, \rmd x.
	\end{equation*}
	Due to the logconcavity of $f$, the first density is dominated by the second in the likelihood ratio order. Hence, the difference is nonpositive.  Since $\phi(\cdot,\cdot)$ is strictly increasing in each argument, 
	\[
	\phi(a+\eps_1,b+\eps_2) \le \phi(a+\max\{\eps_1,\eps_2\},b+\max\{\eps_1,\eps_2\}) \le \phi(a,b) + \max\{\eps_1,\eps_2\},
	\]
with strict inequality if $\eps_1 \ne \eps_2$.
\end{proof}

\begin{proof}[{\bf Proof of Proposition
		\ref{convexinterval}}]
	The objective function (\ref{auxiliaryprogram}) is the $u''$-weighted area below $I_G$. Since $u''$ is nonnegative, an information structure $G'$ performs better than another information structure $G$ if $I_{G'}\ge I_G$. Consider an arbitrary induced distribution $I_G$ with	a segment strictly below $I_F$ that does not contain a tangency point. We can construct an alternative distribution $I_{G^{\prime}}$ by pointwise increasing this segment until it crosses the adjacent linear pieces (see the left panel of Figure \ref{figure:convexinterval}). By construction, $I_{G^{\prime}}$ lies above $I_G$ and remains below $I_F$. Moreover, the number of induced posterior means remains unchanged. Thus, the information structure associated with $I_{G^{\prime}}$ is feasible and yields a strictly higher value of the objective function. It follows that under the optimal information structure, every segment of $I_G$ must be tangent to $I_F$. By Lemma \ref{lemma:conditional_expectation}, such an $I_G$ can be implemented via an interval-partitional information structure.
	
	\begin{figure}[t]
		\centering
		\includegraphics[width=14cm]{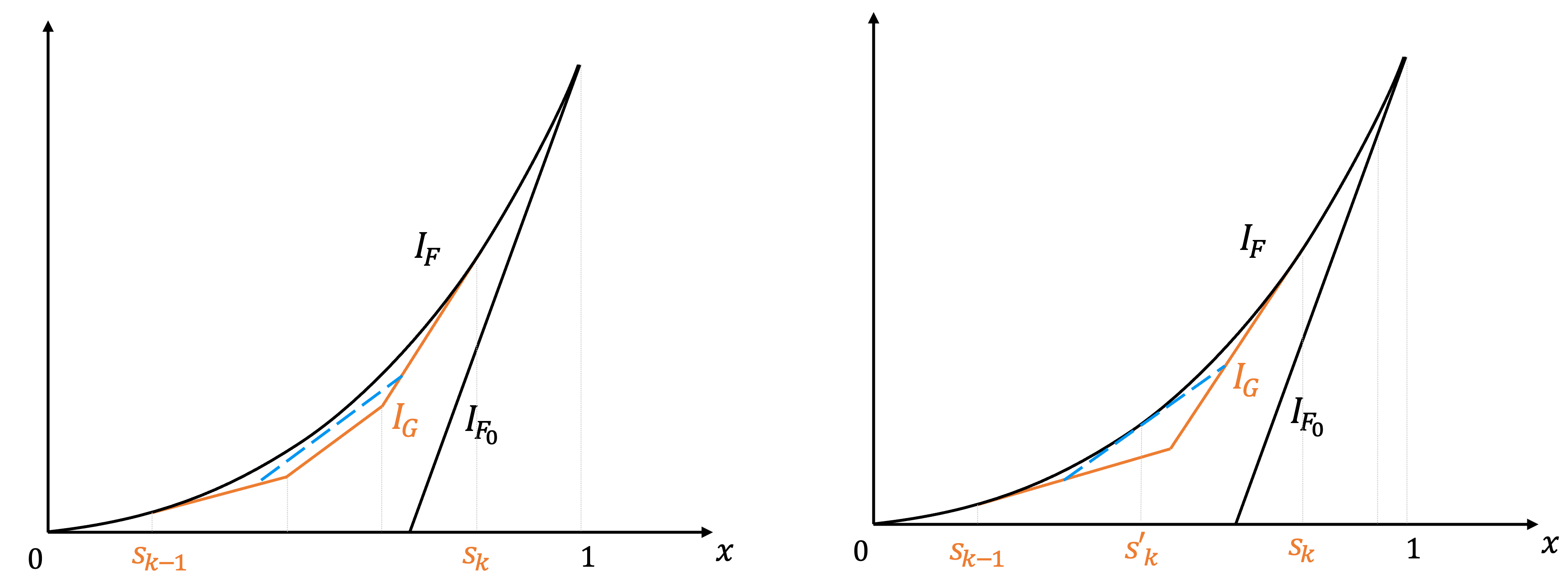}
		\caption{Illustration of Proposition 
			\ref{convexinterval}
		}
		\label{figure:convexinterval}
	\end{figure}

	Next, suppose the optimal interval partition induces $N'<N$ posteriors.	Then the sender can divide some interval $[s_{k-1},s_k)$ into two pieces $[s_{k-1},s'_k)\cup [s'_k,s_k)$ (see the right panel of Figure \ref{figure:convexinterval}).  This new integral distribution with an additional kink point is everywhere higher than the original integral distribution and will produce a higher value of the objective function when $u$ is convex.
\end{proof}

\begin{proof}[{\bf Proof of Proposition \ref{peak}}]
	
	We prove (iii) first.  The functions $\phi(\cdot)$ and $\mu(\cdot)$ are conditional expectations.  Therefore, $\phi(a,b) \ge (a+b)/2$ and $\mu(a,b) \ge (a+b)/2$ for any $a < b \le \min\{m_f,m_u\}$, because both $f$ and $u''$ are increasing in the relevant region.  An induction argument establishes that $w_1 \ge w_2 \ge \ldots \ge w_{k-1}$.  Similarly,  $\phi(a,b) \le (a+b)/2$ and $\mu(a,b) \le (a+b)/2$ for any $\max\{m_f,m_u\} \le a < b$, because both $f$ and $u''$ are decreasing in the relevant region.  An induction argument establishes that $w_N \ge w_{N-1} \ge w_{j+1}$.  Together, they imply that there exists an $i^* \in \{k-1,\ldots,j+1\}$ such that the sequence of widths attains a minimum at $w_{i^*}$.
	
	A constant $u''$ is a special case of single-peaked $u''$.  We can arbitrarily set the mode of $u''$ at $m_f$, and part (i) follows.  Similarly, if $f$ is a constant, we can arbitrarily set its mode at $m_u$, and part (ii) follows.
\end{proof}

\begin{proof}[{\bf Proof of Proposition \ref{generaloptimal}}]
	Part (i) follows from the definition of an interval-partitional information structure.  Under an interval-partitional information structure $G$,
	\[
	I_G(x)=\begin{cases}
		0 &\text{if } x \in [0,x_1)\\
		I_F(s_k)+F(s_k)(x-x_k) &\text{if } x \in [x_k,x_{k+1}) \text{ for } k=1,\ldots,N-1\\
		I_F(1)  + (x-1) &\text{if } x \in [x_{N},1].
	\end{cases}
	\]
Therefore, the objective function (\ref{auxiliaryprogram}) can be written as a function of the interval cutoffs, 
	\[
	\sum_{k=1}^{N-1} \int_{x_k}^{x_{k+1}} u'' (x)\left(I_F(s_k) +F(s_k)(x-s_k)\right) \, \rmd x + \int_{x_N}^{1} u'' (x)\left(I_F(1) + (x-1)\right) \, \rmd x,  
	\]
where $x_k=\phi(s_{k-1},s_k)$. The first-order condition  with respect to each interval cutoff $s_k$ leads to,
	\[
	\int_{x_{k}}^{x_{k+1}} u''(x) (x-s_k)\, \rmd x=0, \quad k=1,\ldots,N-1. 
	\]
Upon integration-by-parts, the above condition leads to the continuity of the minorant function:
	\[
	u(x_{k})+u'(x_{k}) (s_k-x_{k})=u(x_{k+1})+u'(x_{k+1}) (s_k-x_{k+1}), \quad k=1,
	\ldots,N-1. \qedhere
	\]
\end{proof}

\begin{proof}[{\bf Proof of Proposition \ref{generalminorant}}]
	We first prove that $\overline{u}_G(x) \ge u(x)$ in (i). Let 
	\[
	D(a,b):=\frac{u(b)-u(a)}{b-a}.
	\]
Suppose to the contrary that under the optimal interval partition $G$, there exists $\hat x \in (x_k,x_{k+1})$ such that $u(\hat x) > u(x_k)+D(x_k,x_{k+1})(\hat x-x_k)$.  This would imply $D(x_k,\hat x) > D(x_k,x_{k+1})$.  We then construct another information structure that is strictly better than $G$, which would contradict its optimality.
	
	\textit{Case 1:} $\hat x > \phi(s_{k-1},s_{k+1})$. Consider an information structure $\hat G$ such that in the interval $(s_{k-1},s_{k+1})$, we let the experiment induce a pair of posterior means, $x_k$ or $\hat x$. Because $s_{k-1}<x_k < \phi(s_{k-1},s_{k+1}) < \hat x<x_{k+1}<s_{k+1}$, by Lemma 1 of \cite{ArieliEtAl2023}, such an experiment constitute a bi-pooling policy and is feasible. The information structure $\hat G$ is otherwise identical to $G$ for states outside $(s_{k-1},s_{k+1})$.
Observe that 
	\begin{align*}
		\int_{s_{k-1}}^{s_{k+1}} u(x)\,\rmd G &= \left(F(s_{k+1})-F(s_{k-1})\right) \left(u(x_k)+D(x_k,x_{k+1})(\phi (s_{k-1},s_{k+1}) - x_{k-1})\right)\\
		&< \left(F(s_{k+1})-F(s_{k-1})\right) \left (u(x_k)+D(x_k,\hat x)(\phi (s_{k-1},s_{k+1})- x_{k-1})\right)
		=\int_{s_{k-1}}^{s_{k+1}} u(x)\,\rmd \hat G.
	\end{align*}
Thus $\hat G$ is strictly better than $G$.
	
	\textit{Case 2:}
$\hat x < \phi(s_{k-1},s_{k+1})$.
This implies $D(\hat x,x_{k+1}) < D(x_k,x_{k+1})$.  Consider an information structure $\hat G$ such that in the interval $(s_{k-1},s_{k+1})$, the experiment will induce a pair of posterior means, $\hat x$ and $x_{k+1}$. Because $x_k<\hat{x}<\phi(s_{k-1},s_{k+1})<x_{k+1}$, this new information structure $\hat G$ is also feasible. With a similar argument as above,  $\hat G$ gives a strictly higher payoff than $G$. 
	
	\textit{Case 3:}
$\hat{x}=\phi(s_{k-1},s_{k+1})$. Because $u$ is continuous, we can always find a nearby $x'$ such that $x'<\phi(s_{k-1},s_{k+1})$ and $u(x')>u(x_k)+D(x_{k},x_{k+1}) (x'-x_k)$. Therefore, we can adopt the same argument as above to construct a better information structure.

	To prove part (ii), note that $ u(x_{k})+u'(x_{k}) (s_k-x_{k})=u(x_{k+1})+u'(x_{k+1}) (s_k-x_{k+1})$ for $k=1,\ldots,N-1$   can be re-arranged into:
	\begin{equation*}
		D(x_k,x_{k+1})=\frac{s_k-x_k}{x_{k+1}-x_k} u'(x_k) + \frac{x_{k+1}-s_k}{x_{k+1}-x_k}u'(x_{k+1}).
	\end{equation*}
Thus, $D(x_k,x_{k+1})$ must lie between $u'(x_k)$ and $u'(x_{k+1})$. Suppose $u'(x_k) > u'(x_{k+1})$, then we would have $u'(x_k) > D(x_k,x_{k+1}) > u'(x_{k+1})$.  But this would imply $u(x) > u(x_k)+D(x_k,x_{k+1})(x-x_k)$ for $x$ sufficiently close to (and larger than) $x_k$, which contradicts part (i). This shows that $u'(x_k) \le u'(x_{k+1})$ for all $k=1,\ldots, N-1$ under the optimal interval partition. 
	
	When the value function is convex, it is obvious that $u'(x_k) < u'(x_{k+1})$ holds. Now we argue that the same strict inequality holds when the value function is S-shaped. Suppose $u'(x_k) = u'(x_{k+1})$, then $u'(x_k) =D(x_k,x_{k+1})=u'(x_{k+1})$, which further implies that $u(x_{k+1})=u(x_k)+u'(x_k) (x_{k+1}-x_k)$ (bi-tangency). Note that $u''$ is single-crossing from above; therefore $u'(x)> u'(x_k)$ for all $x\in(x_k,x_{k+1})$. This implies that
	\[
	u(x_{k+1})=u(x_k)+\int_{x_k}^{x_{k+1}} u'(x)\, \rmd x>u(x_k)+\int_{x_k}^{x_{k+1}} u'(x_k)\, \rmd x=u(x_k)+u'(x_k) (x_{k+1}-x_k).
	\] 
This contradicts the bi-tangency condition, and we have established part (ii).
	
	Finally, the proof of part (ii) shows that $D(x_k,x_{k+1})$ must lie between $u'(x_k)$ and $u'(x_{k+1})$.  Since we have established that $u'(x_k)$ is nondecreasing in $k$, $D(x_k,x_{k+1})$ must also be nondecreasing in $k$, and therefore $\overline{u}_G$ is convex.  This completes the proof of part (i).
\end{proof}

\begin{proof}[{\bf Proof of Lemma \ref{kinks}}]
	Suppose to the contrary that $x_i=\upsilon_m$ for some $i<N$. (The case where $i=N$ can be proved in the same way).  Let $\underline{u}_{i}$ be the subgradient line at $x_{i}$, with the slope equal to the \emph{left derivative} at that point, i.e., let $\underline{u}_{i}(x) :=u(x_{i}) + u'(\upsilon_{m-1}) (x- x_{i})= u(\upsilon_m) + u'(\upsilon_{m-1}) (x- \upsilon_m)$. Now consider the subgradient line $\underline{u}_{i+1}$ at $x_{ i+1}$, with slope equal to the \emph{right derivative} at that point, i.e., $\underline{u}_{i+1}(x):= u(x_{ i+1}) + u'(x_{ i+1})(x- x_{ i+1})$. Note that $x_{i+1}\ge\upsilon_{m+1}$ because otherwise either $x_i$ or $x_{i+1}$ is redundant.  Let	$\{s_i\}_{i=1}^{N-1}$ be the 	interval cutoff points under the optimal information structure $G^*$. There are three cases to consider.

	\textit{Case 1:} $\underline{u}_{ i}(s_{ i}) \ge  \underline{u}_{i+1}(s_i)$. 
Then there exists some $\varepsilon>0$ such that $s_i+\varepsilon< \upsilon_{m+1}  $ and $\underline{u}_{i}(x) > \underline{u}_{i+1}(x)$ for all $x \in [s_i, s_i+\varepsilon]$. Consider another information structure $\hat G$, which is identical to $G^*$ except that the cutoff point $s_i$ is replaced by $s_i+\varepsilon$.  Thus, $\hat x_{i}=\phi(s_{i-1}, s_{i}+\varepsilon) $ and $\hat x_{i+1}=\phi(s_{i}+\varepsilon, s_{i+1}) $. Note that we can always choose $\eps$ such that $\hat{x}_{i+1}$ remains in the same affine segment as $x_{i+1}$. We want to show that $\hat G$ gives a higher payoff than $G$. To see this, 
	\begin{align*}
		\int_{s_{i-1}}^{s_{i+1}} u(x) \, \rmd {\hat G}(x) &=\int_{s_{i-1}}^{s_{i}+\varepsilon}  \left[ u(\hat x_{i}) +  u'(\upsilon_{m-1}) (x- \hat x_{i}) \right] \, \rmd {\hat G}(x) +  \int_{s_i+\varepsilon}^{s_{i+1}} \underline u_{i+1}(x)\, \rmd {\hat G}(x)\\
		&=\int_{s_{i-1}}^{s_{i}+\varepsilon}  \left[ u(\hat x_{i}) +  u'(\upsilon_{m-1}) (x- \hat x_{i}) \right] \, \rmd {F}(x) +  \int_{s_i+\varepsilon}^{s_{i+1}} \underline u_{i+1}(x)\, \rmd {F}(x)\\
		&>\int_{s_{i-1}}^{s_{i}}  \underline{u}_{i}(x) \, \rmd {F}(x) +  \int_{s_i}^{s_{i+1}} \underline u_{i+1}(x)\, \rmd {F}(x)
		=\int_{s_{i-1}}^{s_{i+1}} u(x) \, \rmd {G^*}(x).
	\end{align*} 
The inequality in the above comes from $\hat{x}_i>x_i$ and from the fact that $u(x)$ is piecewise convex and affine; thus the graph of the two segments $u(\hat x_{i}) +  u'(\upsilon_{m-1}) (x- \hat x_{i})$ on $[s_{i-1}, s_{i}+\varepsilon ]$ and $\underline{u}_{i+1}(x)$ on $[s_{i}+\varepsilon,s_{i+1}  ]$ are above the two segments $\underline{u}_{i}(x)$ on $[s_{i-1}, s_{i} ]$ and $\underline{u}_{i+1}(x)$ on $[s_{i},s_{i+1}  ]$.  Since $G^\ast$ and $\hat G$ yields the same payoff for all states outside $[{s_{i-1}},{s_{i+1}}]$,  $G^\ast$ is strictly suboptimal, which is a contradiction. 
	
	\textit{Case 2:} $\underline{u}_{ i}(s_i) < \underline{u}_{i+1}(s_{ i})$. 
There exists some $\varepsilon>0$ such that $s_{ i}-\varepsilon> x_{  i}$ and $\underline{u}_{  i}(x) < \underline{u}_{{ i}+1}(x)$ for all $x \in [s_{ i}-\varepsilon, s_{ i}]$. Consider another information structure $\hat G$, which is identical to $G^*$ except that $s_i$ is replaced by $s_i-\varepsilon$.  With a similar argument as in case 1, we can show that $\hat G$ gives a higher payoff than $G^*$. 
\end{proof}

\newpage
\bibliographystyle{aer}
\bibliography{coarse}

\newpage 

\newpage
\thispagestyle{empty}
\vspace*{2in}

\centerline{[This page is intentionally left blank.]}

\newpage
\setcounter{page}{1}
\renewcommand{\thepage}{\roman{page}}

\section*{Online Appendix: Proof of results for S-shaped value functions}\label{appendix:s-shaped}

We first establish a claim regarding a property of the optimal interval-partitional information structure under S-shaped value functions.

\begin{claim}\label{sN-1<inflection-S}
Under the optimal $N$-interval partition $ G$, $s_{N-1}<t$.
\end{claim}
\begin{proof}
Suppose $s_{N-1} \ge t$. It implies $x_N > t$, and thereby $x_{N-1} < t$, for otherwise the line segment connecting $u(x_{N-1})$ and $u(x_N)$ is not above $u(x)$  contradicting Lemma \ref{generalminorant}(i). By concavity of $u$ on $[t,x_N]$ and the fact that $u(x_N)=\underline{u}_G(x_N)$, we have $u(t) < u(x_N) + u'(x_N)(t- x_N) \le \underline{u}_G(t) $,
where the last inequality is based on the convexity of the minorant function from Lemma \ref{generalminorant}(ii). On the other hand, by convexity of $u$ on $[x_{N-1}, t]$ and the fact that $u(x_{N-1})=\underline{u}_G(x_{N-1})$, we have $u(t) > \underline{u}_G(t)$. This contradicts the continuity of $\underline{u}_G(\cdot)$.  
\end{proof}

The next claim completes the proof of the last part of Proposition \ref{dual-S}.

\begin{claim}\label{sN-1lesstar} 
Under the optimal $N$-interval partition $G$, $s_{N-1} < s^*$.
\end{claim}

\begin{proof}
Let $x^\ast= \phi(s^\ast,1)$. By Theorem 1 in \cite{DworczakMartini2019}, $s^\ast<t<x^\ast$. Let $\underline{u}^\ast(x)=u(x^\ast)+u'(x^\ast)(x-x^\ast)$. \cite{DworczakMartini2019} show that $\underline{u}^\ast(x) \ge u(x)$ on $[s^\ast,1]$, with equality at $s^\ast$ and $x^\ast$.

Now suppose $s_{N-1}\ge s^\ast$  and therefore $x_N\ge x^\ast$. Consequently, $x_N \ge  x^\ast>t$. The concavity of $u$ on $(t,1]$ and the fact that $u(x_N)=\underline{u}_G (x_N)$ imply that 
$\underline{u}_G(x^*) \ge u(x^\ast)=\underline{u}^\ast(x^\ast)$ 
and $\underline{u}_G'(x)=u'(x_N)<u'(x^\ast)={\underline{u}^{\ast}}{'}(x)$ for all $x \in (s_{N-1},1]$. Therefore,
\[
\underline{u}_G(s_{N-1}^+) 
\ge \underline{u}^\ast (s_{N-1}). 
\]
On the other hand, by Claim \ref{sN-1<inflection-S}, $s_{N-1} < t$. 
Therefore,
\[
\underline{u}^\ast(s_{N-1}) 
\ge u(s_{N-1}) > \underline{u}_G(s_{N-1}^-) =u(x_{N-1})+u'(x_{N-1})(s_{N-1}-x_{N-1}),
\]
where the last inequality comes from the fact $u$ is convex on $[0,s_{N-1}]$ and $u(x_{N-1})=\underline{u}_G(x_{N-1})$. 
The two displayed inequalities contradict the continuity of the minorant function $\underline{u}_G$ required by Proposition \ref{generalminorant}(ii).
\end{proof}

\begin{proof}[{\bf Proof of Lemma \ref{keylemma-S}}]
We show that
$\frac{\partial \mu(a,b)}{\partial a}\le 1$ for all $a \in (0,t)$ such that $u'(b)-u'(a)>0$. 

Note that 
\[
\frac{\partial \mu(a,b)}{\partial a}=\frac{u''(a)\int_a^b  \left[ u'(b)-u'(x) \right]\, \rmd x}{(u'(b)-u'(a))^2}.
\]
Let $c \in (0, t)$ be the point such that $u'(c)=u'(b)$. 
Since $u'$ is quasiconcave, we have $u'(b)-u'(x) < 0$ for all $x \in (c,b)$.  Therefore,
\begin{align*}
	\frac{\partial \mu}{\partial a} &
	\le \frac{u''(a)\int_a^c  \left[ u'(b)-u'(x) \right]\, \rmd x}{(u'(b)-u'(a))^2} \\
	&\le \frac{1}{u'(b)-u'(a)} \int_a^c \frac{ u''(x)}{ u'(b)- u'(x)}\left( u'(b)- u'(x)\right)\, \rmd x 
	\\
	&= \frac{1}{u'(b)-u'(a)}\left(u'(c)- u'(a)\right)
	=1. 
\end{align*}
The second inequality above comes from 
logconcanvity of $u''$ on the convex region (and hence $\frac{u''(a)}{u'(b)-u'(a)}$ is increasing in $a$),
and from the fact that 
$u'(b)-u'(x)>0$ for all $x\in [a,c]$. 
\end{proof}

\begin{proof}[{\bf Proof of Proposition \ref{intensityorder-S}(i)}]
The proof of Proposition \ref{intensityorder} shows that $\Delta_i \geq \Delta_{i-1}$ implies $\Delta_{i+1} \geq \Delta_i$ within the convex region, namely, for all $i=2, \ldots, 2 N-4$. For S-shaped value function, it remains to be shown that $\Delta_i \geq \Delta_{i-1}$ implies $\Delta_{i+1} \geq \Delta_i$ for $i=2N-3$ and $i= 2N-2$. 

For $i=2N-3$, this is equivalent to showing that $w_{N-1}\ge d_{N-2}$ implies $d_{N-1}\ge w_{N-1}$. There are two cases: (i) $x_N> t$ and (ii) $x_N\le t$. 

Under case (i), it suffices to prove that $w_{N-1}\ge d_{N-2}$ implies $t-x_{N-1}\ge  w_{N-1}$, because it will further implies $d_{N-1}=x_N-x_{N-1}> t- x_{N-1}\ge w_{N-1}$. Suppose that $w_{N-1} \ge d_{N-2}$ but $t-x_{N-1} < w_{N-1}$. Then, 
\begin{align*}
	s_{N-1}=\mu(x_{N-1},x_N)&=\mu(x_{N-2}+d_{N-2},x_{N-1}+d_{N-1})\\
	&< \mu(x_{N-2}+d_{N-2},x_{N-1}+ t-x_{N-1}) \\
	&\le \mu(x_{N-2}+w_{N-1},x_{N-1}+t-x_{N-1}) \\
	&\le s_{N-2}+\max\{w_{N-1},t-x_{N-1}\}
	=s_{N-2}+w_{N-1} =s_{N-1},
\end{align*}
which is a contradiction.

Under case (ii), the proof is the same as that of Proposition \ref{intensityorder}. 

For $i=2N-2$, we want to show that $d_{N-1}\ge w_{N-1}$ implies $w_{N}\ge d_{N-1}$. Suppose we have $w_{N}< d_{N-1}$ instead. Then, 
\begin{align*}
	x_{N}=\phi\left(s_{N-1}, s_{N}\right) &=\phi\left(s_{N-2}+w_{N-1}, s_{N-1}+w_{N}\right)\\
	&<\phi\left(s_{N-2}+d_{N-1}, s_{N-1}+d_{N-1}\right) \leq x_N+d_{N-1}=x_{N},
\end{align*}   
a contradiction.
\end{proof}

\begin{proof}[{\bf Proof of Proposition \ref{intensityorder-S}(ii)}]
To show the uniqueness of solution to the equation system,
suppose there are multiple optimal interval-partitional information structures. By Claim \ref{sN-1<inflection-S}, the last cutoff $s_{N-1}$ in all solutions must fall strictly below $t$. 
Let $\{s'_k\}_{k=0}^N$ and $\{x'_k\}_{k=1}^N$ be the solution with the smallest value of $s'_{N-1}$ (if there are more than one such solution, pick an arbitrary one).  Consider another solution $
\{s''_k\}_{k=0}^N$ and $\{x''_k\}_{k=1}^N$ with $t > s''_{N-1} \ge s'_{N-1}$. Now remove the last two equations, $x_N=\phi(s_{N-1},1)$ and $s_{N-1}=\mu(x_{N-1},x_N)$ from the original equation system and consider the remaining equations:
\begin{align*}
x_k=\phi(s_{k-1},s_k) &\quad \text{for } k=1,\ldots,N-1\\
s_k=\mu(x_k,x_{k+1}) &\quad \text{for } k=1,\ldots,N-2,
\end{align*}
with $s_0=0$ and $s_{N-1}=s'_{N-1}$. By the same argument in Proposition \ref{unique}, $ \{ x'_k\}_{k=1}^{N-1}$ and $\{s'_k\}_{k=1}^{N-2}$ must be the unique solution to the reduced equation system. Similarly, $ \{ x''_k\}_{k=1}^{N-1}$ and $\{s''_k\}_{k=1}^{N-2}$ must be the unique solution to the reduced equation system with $s_0=0$ and $s_{N-1}=s''_{N-1}$. Suppose $s''_{N-1}=s'_{N-1}$.  This would imply that $(x'_1,\ldots,x'_{N-1})=(x''_1,\ldots,x''_{N-1})$.  Moreover, $x'_N=\phi(s'_{N-1},1)=\phi(s''_{N-1},1)=x''_N$.  Thus these two solutions would be identical, a contradiction.

Hence, it must be the case that $s''_{N-1} > s'_{N-1}$ and hence $x''_N > x'_N$. 
Corollary 2.5.2 in \cite{Topkis1998}
implies that $(x''_1,\ldots,x''_{N-1})$ is (element-wise) larger than $(x'_1,\ldots,x'_{N-1})$.  Moreover, we must have $s''_{N-1}-s'_{N-1} > x''_{N-1}-x'_{N-1}$.  To see this, suppose instead $s''_{N-1}-s'_{N-1} \le x''_{N-1}-x'_{N-1}$.  Then Lemma \ref{keylemma} would suggest $s''_{N-2} - s'_{N-2} \ge  x''_{N-1}-x'_{N-1}$.  Iterating the argument back leads to $x''_1 - x'_1 \ge s''_1-s'_1$, which contradicts Lemma \ref{keylemma}. Therefore, we can restate the inequality,
\[
s''_{N-1} - s'_{N-1} = \mu(x''_{N-1},x''_N)  - \mu(x'_{N-1},x'_N) > x''_{N-1}-x'_{N-1} .
\]
There are two possibilities.
Case (a): $x''_{N} - x'_N \le s''_{N-1}-s'_N$.  If $ t \ge x''_N > x'_N$ , then the inequality above already contradicts Lemma \ref{keylemma}. If $x''_N \ge t \ge x'_N$, then
\begin{align*}
 \mu(x''_{N-1},x''_N)  - \mu(x'_{N-1},x'_N) < \mu(x''_{N-1},t) - \mu(x'_{N-1},x'_N) &\le \max\{ x''_{N-1}-x'_{N-1}, t-x'_N\}\\
 & \le t -x'_N \le x''_N-x'_N,
\end{align*}
which contradicts the premise of Case (a).  If $x''_N > x'_N \ge t$, then 
\[
 \mu(x''_{N-1},x''_N)  - \mu(x'_{N-1},x'_N) \le \mu(x''_{N-1},x'_N) - \mu(x'_{N-1},x'_N) \le x''_{N-1}-x'_{N-1},
\]
where the last inequality comes from Lemma \ref{keylemma-S}. Yet this contradicts the original inequality again.  Thus the only remaining possibility is Case (b): $x''_{N-1}-x'_N > s''_{N-1}-s'_N$, 
but this inequality violates Lemma \ref{keylemma} because $f$ is  logconcave.
\end{proof}

Before proving Proposition \ref{intensityorder-S}(iii),
we make a few observations on the properties regarding the generalized likelihood ratio order. Define the barycenter for $u'$ and $\hat u'$ by $\mu(a,b)  :=\frac{\int_a^b  \theta u''(\theta) \, \rmd \theta}{\int_a^b  u''(\theta) \, \rmd \theta}$ and $\hat \mu(a,b)  := \frac{\int_a^b  \theta \hat u''(\theta) \, \rmd \theta}{\int_a^b  \hat u''(\theta) \, \rmd \theta}$, whenever the denominators are nonzero.

\begin{claim}\label{monotonicitymu-S}
For any $0<a<t\le b\le 1$ such that $u'(b)-u'(a)>0$, $\frac{\partial \mu (a,b)}{\partial b}<0$.
\end{claim}

\begin{proof}  
The derivative 
of $\mu$ with respect to $b$ 
is
\[
\frac{\partial \mu(a,b)}{\partial b} =\frac{u''(b) \int_a^b [u'(x)-u'(a)] \, \rmd x }{[u'(b)-u'(a)]^2}.
\]
This expression is negative because $u''(b)<0$; moreover $u'(b) > u'(a)$ implies $u'(x) > u'(a)$ for all $x \in [a,b]$.
\end{proof}

\begin{claim}\label{propertylr-S}
Suppose $\hat u'' \succeq_{lr} u''$. For any $a \in [0,t)$ and $b>a$, 
\begin{enumerate}[(i)]
	\item $u'(b)-u'(a) > 0$ implies that $\hat u'(b)-\hat u'(a)>0$;
	\item $u'(b)-u'(a) > 0$  implies $\hat \mu(a,b) \ge \mu(a,b)$.
\end{enumerate}
\end{claim}

\begin{proof}
The statements are trivially true if $\hat u$ is convex. Hence, we focus on the case where $\hat u$ is S-shaped. To show part (i), 
$\hat u'' \succeq_{lr} u''$ implies $\frac{\hat u''(s)}{\hat u''(a)} \ge \frac{ u''(s)}{ u''(a)}$ for all $s\ge a$, and hence
\[
\frac{\hat u'(b)-\hat u'(a)}{\hat u''(a)} = \int_a^b \frac{\hat u''(s)}{\hat u''(a)} \, \rmd s\ge \int_a^b \frac{ u''(s)}{ u''(a)} \, \rmd s =\frac{ u'(b)- u'(a)}{u''(a)}.
\]
Since $a <t \le \hat{t}$, we have $u''(a)>0$ and $\hat u''(a)>0$. Therefore, $u'(b)-u'(a) >0$ implies $\hat u'(b)-\hat u'(a) >0$.

Now we show part (ii).  If $b \le t$, then the inequality follows directly from the property of standard likelihood ratio order for positive density functions. 
If $ b \in (t,\hat t]$, then $\mu(a,b) \le \mu (a,t)\le \hat \mu(a,t) \le \hat \mu(a,b)$. 

Suppose $b> \hat t$, and suppose to the contrary that there exists some $b> \hat t$ such that $\mu(a,b) - \hat \mu(a,b)>0$. 
Because $\mu(a,\cdot)-\hat \mu (a, \cdot)$ is continuous, there must exist some $q \in [\hat t,b)$ such that $\mu(a,q) - \hat \mu(a,q)=0$ and crosses $0$ from above.  Consequently, 
\[
\frac{\hat u''(q)}{\hat u'(q)-\hat u'(a)}(q-\hat \mu(a,q)) < \frac{ u''(q)}{ u'(q)- u'(a)}(q- \mu(a,q)).
\]
Since $t- \mu(a,t)>0$ and $\mu(a,\cdot)$ is decreasing on $[t,1)$, we have $q- \mu(a,q)>0$. Meanwhile, $q>\hat t \ge t$ implies $\hat u''(q)<0$, $u''(q)<0$, $u'(q) >u'(b)>u'(a)$, and $\hat u'(q)>\hat u'(b) > \hat u'(a)$. Rearranging terms, we obtain:
\[
\frac{\hat u''(q)}{u''(q)} > \frac{\int_a^q \hat u''(s)\, \rmd s}{\int_a^q  u''(s)\, \rmd s} = \frac{\hat u''(a)\int_a^q \frac{\hat u''(s)}{\hat u''(a)} \, \rmd s}{ u''(a)\int_a^q \frac{ u''(s)}{ u''(a)}\, \rmd s} \ge  \frac{\hat u''(a)}{u''(a)},
\]
where the last inequality comes from the definition of likelihood ratio order and that $a<t\le \hat t$. Yet this contradicts $\hat u'' \succeq_{lr} u''$.
\end{proof}

\begin{proof}[{\bf Proof of Proposition \ref{intensityorder-S}(iii)}]

We prove the result for the change of value functions. The case for prior distributions is analogous. 
Let the optimal interval cutoffs under $u$ and $\hat u$ be $\{s_k\}_{k=1}^{N-1}$ and $\{\hat s_k\}_{k=1}^{N-1}$, respectively. 

\textit{Step 1:} We prove $\hat{s}_{N-1}\ge s_{N-1}$. Suppose to the contrary that $\hat{s}_{N-1}<s_{N-1}$. Consequently, $\hat{x}_{N}= \phi(\hat{s}_{N-1},1)<\phi(s_{N-1},1)=x_N$. It further implies that $\hat{x}_{N-1}< x_{N-1}$. To see this, suppose otherwise $\hat{x}_{N-1}\ge x_{N-1}$. We have
\begin{align*}
	s_{N-1}=\mu(x_{N-1},x_{N}) &\le \hat \mu(x_{N-1},x_{N}) \\
	&\le \hat\mu( \hat x_{N-1}, \min\{t, x_N\})\\
	&  = \hat \mu (\hat x_{N-1}+x_{N-1}-\hat x_{N-1}, \min\{t, x_N\})\\
	& \le \hat \mu(\hat x_{N-1}, \min\{t, x_N\}) + x_{N-1}-\hat x_{N-1} \\
	& \le \hat s_{N-1}  +x_{N-1} - \hat x_{N-1},
\end{align*}
where the first inequality comes from Claim \ref{propertylr-S}. The second inequality comes from the fact that $\mu(\hat x_{N-1}, \cdot)$ is increasing on $(0,t)$ and decreasing on $(t,1)$ from Claim \ref{monotonicitymu-S}. The third inequality invokes Lemma \ref{keylemma-S} and the fourth inequality again comes from the monotonicity of $\mu(\hat x_{N-1}, \cdot)$ on $(t,1)$. Hence, 
\[
\hat s_{N-1} -s_{N-1} \ge \hat x_{N-1} -x_{N-1} \ge 0,
\]
a contradiction.

Because $f$ is logconcave, we have 
\[ 
\hat{x}_{N-1}-x_{N-1} \le \max\left\{\hat{s}_{N-2}-s_{N-2}, \max \{ \hat s_{N-1} -s_{N-1},0\}\right\} \le \hat{s}_{N-2}-s_{N-2}. 
\] 
Therefore, $\hat s_{N-2} > s_{N-2}$. 

Once in the the convex region, we can apply the previous argument inductively and get
\[
\hat{x}_{k}-x_{k}\ge \hat{s}_{k}-s_{k},
\]
for $k=1,\dots,N-1$.
In particular,  
\[
\hat{x}_1-x_1\ge \hat{s}_1-s_1,
\]
which contradicts the property of logconcavity of $f$ from Lemma \ref{keylemma}. Therefore, $\hat{s}_{N-1}\ge s_{N-1}$. 

\textit{Step 2:} We prove $\hat x_{N-1} \ge x_{N-1}$. Suppose to the contrary that $\hat x_{N-1} < x_{N-1}$. By logconcavity of $f$, we must have
$s_{N-2}- \hat s_{N-2}\ge x_{N-1}-\hat x_{N-1}>0.$
On the other hand,
\begin{align*}
	s_{N-2}= \mu(x_{N-2},x_{N-1})&\le \hat \mu(x_{N-2},x_{N-1})\\
	&\le \hat \mu \left(\hat x_{N-2}+ \max \{x_{N-2}-\hat x_{N-2},0\},\hat x_{N-1}+ x_{N-1}- \hat x_{N-1}\right)\\
	&\le \hat s_{N-2}+\max\left\{\max \{x_{N-2}-\hat x_{N-2},0\},x_{N-1}-\hat x_{N-1}\right\}.
\end{align*}
Hence, $s_{N-2}-\hat s_{N-2}\le \max\{\max \{x_{N-2}-\hat x_{N-2},0\},x_{N-1}-\hat x_{N-1}\} $. By logconcavity of $\hat u''$ on $(0,\hat t)$ and Lemma \ref{keylemma}, it must be the case that 
\[
s_{N-2}-\hat s_{N-2}\le x_{N-2}-\hat x_{N-2}.
\]
Similar to Step 1, we can conduct this argument inductively and obtain $s_1-\hat s_1 \le x_1 - \hat x_1$,
which again contradicts the logconcavity of $f$. Therefore, we must have $\hat{x}_{N-1}<x_{N-1}$. 

\textit{Step 3:} Repeat the similar argument inductively, we prove that $\hat x_k \ge x_k$ for all $k=1,\ldots,N$ and $\hat s_k \ge s_k$ for all $k=1,\ldots,N-1$. 
\end{proof}

\begin{proof}[{\bf Proof of Proposition \ref{intensityorder-S}(iv)}]
Again, we prove the result for the change of value functions. The case for change of prior distributions is similar. 

Define the sequence,
\begin{equation*}
	\{\delta_k\}_{k=1}^{2N-1} :=\{x_1-\hat x_1, s_1-\hat s_1, x_2-\hat x_2, \ldots,  s_{N-1}-\hat s_{N-1}, x_N-\hat x_N \}.
\end{equation*}
Note that $\delta_{2N-2} \ge 0$ implies $\delta_{2N-1} \ge 0$, since $s_{N-1} \ge \hat{s}_{N-1}$ implies $x_N \ge \hat x_N$.

Also note that $u'' \succeq_{uv} \hat u''$ implies either $t \le \hat t \le p $ or $p \le \hat t\le t$. Suppose $t \le \hat t \le p $, then $x_{N-1}<s_{N-1} \le t\le p$ and $\hat x_{N-1} <\hat s_{N-1} \le \hat t\le p$. 
Let $j$ be the first non-negative term in the sequence $\{ \delta_k\}_{k=1}^{2N -1}$. For all $j=1, \dots, 2N-3$, by the same argument in Proposition \ref{csforcutoffsuv}, $\delta_j \ge 0$ implies $\delta_{k} \ge 0 $ for all $k=j+1,\dots, 2N-2$. 
Hence $\{\delta_k\}_{k=1}^{2N-1}$ is single-crossing from below.

From now on, we focus on the case where $ p \le \hat t \le t$. There are five cases to consider.

\textit{Case 1:} $ x_{N-1} \le \hat x_{N-1} \le p $. We want to show $\delta_{k}\le 0$ for all $k\le 2N-3$. Since $\hat{x}_{N-1} \le p \le \hat{t}$ and $x_{N-1} \le p \le t$, if there exists some $j\le 2N-4$ such that $\delta_j> 0$, then by the argument from Proposition \ref{csforcutoffsuv}, $\{ \delta_{k}\}_{k=j}^{2N-3} $ is increasing and therefore all terms in the sequence are strictly positive, which contradicts the premise, $x_{N-1}-\hat{x}_{N-1}=\delta_{2N-3} \le 0$. 

\textit{Case 2:}  $\hat x_{N-1} < x_{N-1} \le p$. We want to show $s_{N-1}-\hat s_{N-1} \ge 0$. Suppose to the contrary that 
$s_{N-1}< \hat s_{N-1} $. Then it must be the case that $s_{N-2} \ge \hat{s}_{N-2}$; for otherwise $x_{N-1}= \phi (s_{N-2}, s_{N-1})$ must be strictly lower than $\hat x_{N-1}= \phi (\hat s_{N-2}, \hat s_{N-1})$. Hence, we have
\begin{align*}
	x_{N-1} = \phi (s_{N-2}, s_{N-1}) &=   \phi ( \hat{s}_{N-2} +  s_{N-2}-\hat s_{N-2},   s_{N-1})\\
	& < \phi ( \hat{s}_{N-2} +  s_{N-2}-\hat{s}_{N-2},   \hat s_{N-1})\\
	& \le \hat{x}_{N-1} + s_{N-2}- \hat{s}_{N-2},
\end{align*}
which implies $s_{N-2}- \hat s_{N-2} > x_{N-1}- \hat x_{N-1}$. On the other hand, the argument in Proposition \ref{csforcutoffsuv} and $\hat x_{N-1} < x_{N-1} \le p$ indicates that $x_{N-1}- \hat x_{N-1} \ge  s_{N-1} -\hat s_{N-1} \ge 0$. This is a contradiction.

\textit{Case 3:} $x_{N-1} \le p \le \hat{x}_{N-1}$ with $x_{N-1}< \hat{x}_{N-1}$. We want to show $\delta_k  \le 0$ for all $k \le 2N-4$. It suffices to prove $\delta_{2N-4} = s_{N-2} - \hat s_{N-2} < 0$, then we can invoke the argument in Proposition \ref{csforcutoffsuv} to show $\delta_k \ge 0$ for all $k\le 2N-4 $. Observe that  
\begin{align*}
	s_{N-2} = \mu (x_{N-2}, x_{N-1})    
	& \le \mu (x_{N-2},p) \\
	& \le \hat \mu (x_{N-2},p) \\
	& \le \hat \mu\left(\hat x_{N-2}+ \max\{ x_{N-2}-\hat x_{N-2},0\}, \hat x_{N-1}\right)  \\
	& \le \hat s_{N-2} + \max \{ 0, x_{N-2}- \hat x_{N-2}\} .
\end{align*}
Therefore, we have $s_{N-2} -\hat s_{N-2} <\max \{ 0, x_{N-2}- \hat x_{N-2}\} $ (the 
inequality is strict because either $\hat{x}_{N-1} >p$ or ${x}_{N-1} <p$). 
If $s_{N-2} -\hat{s}_{N-2} \ge 0$, then we must have $s_{N-2} -\hat s_{N-2} < x_{N-2} - \hat x_{N-2}$, which contradicts the fact that $\{\delta_k\}_{k=j}^{2N-3}$ is nondecreasing for any $j$ such that $\delta_j\ge0$, as shown in Proposition \ref{csforcutoffsuv}.

\textit{Case 4:} $\hat x_{N-1} \le p \le {x}_{N-1}$ with $\hat x_{N-1} < x_{N-1}$. First, for the subsequence $\{ \delta_k\}_{k=1}^{2N-5}$, $\delta_k \ge 0 $ implies $\delta_{k+1} \ge 0$, by the same reasoning in Proposition \ref{csforcutoffsuv}. The case for $i=2N-4$ is automatically satisfied by the premise that $\hat x_{N-1} < x_{N-1}$. It remains to be shown that $s_{N-1} -\hat s_{N-1} \ge 0 $.  Suppose on the contrary that $s_{N-1} < \hat s_{N-1} $ (and thereby $x_N < \hat x_{N}$).

If $t \le x_N < \hat x_{N-1}$, we have 
\[
\hat s_{N-1}= \hat \mu(\hat x_{N-1}, \hat x_{N}) 
\le \hat \mu(p, \hat x_{N}) 
\le \mu (p, \hat x_{N})  
< \mu (x_{N-1}, \hat x_{N})
< \mu(x_{N-1},x_N)=s_{N-1},
\]
a contradiction. 
If $x_N < t$,
\begin{align*}
	\hat s_{N-1} = \hat \mu(\hat x_{N-1}, \hat x_{N})
	&\le \hat \mu(p, \hat x_{N}) \\
	& \le \mu (p, \hat x_{N})  \\
	& < \mu (x_{N-1} , \min\{\hat x_N,t \})\\
	& \le \mu (x_{N-1} , x_N + \min\{\hat x_N-x_N,t-x_N \})\\
	&\le s_{N-1}+\min\{\hat x_N-x_N,t-x_N \}.
\end{align*}
Therefore, $\hat s_{N-1}-s_{N-1}< \min\{\hat x_N-x_N,t-x_N \} \le \hat x_N-x_N$.
On the other hand, by logconcavity of $f$, $\hat s_{N-1} - s_{N-1} \ge \hat x_N -x_N$, a contradiction.

\textit{Case 5:}  $p \le \min \{ \hat x_{ N-1}, x_{ N-1} \}$. For the subsequence $\{ \delta_k\}_{k=1}^{2N-4}$, $\delta_k \ge 0$ implies $\delta_{k+1}\ge 0 $, by the same argument in Proposition \ref{csforcutoffsuv}.  It remains to be shown that $s_{N-1} < \hat s_{N-1}$ (and thereby $x_{N} < \hat x_{N}$) implies $x_{N-1} < \hat x_{N-1}$. Observe that  $p \le \min \{ \hat x_{ N-1}, x_{ N-1} \}$ implies 
$\hat s_{N-1}   = \hat \mu(\hat x_{N-1}, \hat x_N)   \le \mu(\hat x_{N-1}, \hat x_N) $.

If $\hat x_N \ge t$, then
\begin{align*}
	\hat s_{N-1}  & \le \mu(\hat x_{N-1}, \hat x_N)  \\
	& < \mu(\hat x_{N-1},  \max \{ x_N,t\})  \\
	&= \mu\left(x_{N-1} + \max\{ \hat x_{N-1} -x_{N-1},0\},  x_{N} +\max \{ 0,t-x_N\}\right) \\
	& \le  s_{N-1}+ \max\left\{\max\{0, \hat x_{N-1} -x_{N-1} \} ,  \max \{ 0,t-x_N\}\right\}\\
	& = s_{N-1}+\max\{0, \hat x_{N-1} -x_{N-1} \}.
\end{align*}
Hence, it must be the case that $\hat x_{N-1} -x_{N-1} > 0 $ as desired.
If $\hat{x}_N < t$, then 
\begin{align*}
	\hat s_{N-1} & \le \mu(\hat x_{N-1}, \hat x_N)  \\
	& < \mu( x_{N-1} + \max\{\hat x_{N-1} - x_{N-1},0\},  x_N +\hat x_N-  x_N) \\
	& \le s_{N-1}+ \max \{ \hat x_{N-1} -x_{N-1} ,\hat x_N-  x_N\}.
\end{align*}
By logconcavity, we have $\hat x_N- x_N \le \hat s_{N-1}- s_{N-1}$. So it must be the case that $\hat x_{N-1} - x_{N-1} \geq  \hat s_{N-1} - s_{N-1}>0$.
\end{proof}

\end{document}